\title[A second-order Lohe hermitian model]{Collective behaviors of the Lohe hermitian sphere model with inertia}
\author[Ha]{Seung-Yeal Ha}
\address[Seung-Yeal Ha]{\newline Department of Mathematical Sciences\newline Seoul National University, Seoul 08826 and \newline
Korea Institute for Advanced Study, Hoegiro 85, 02455, Seoul, Republic of Korea}
\email{syha@snu.ac.kr}
\author[Kang]{Myeongju Kang}
\address[Myeongju Kang]{\newline Department of Mathematical Sciences\newline Seoul National University, Seoul 08826, Republic of Korea}
\email{bear0117@snu.ac.kr}
\author[Park]{Hansol Park}
\address[Hansol Park]{\newline Department of Mathematical Sciences\newline Seoul National University, Seoul 08826, Republic of Korea}
\email{hansol960612@snu.ac.kr}
\newtheorem{theorem}{Theorem}[section]
\newtheorem{lemma}{Lemma}[section]
\newtheorem{corollary}{Corollary}[section]
\newtheorem{proposition}{Proposition}[section]
\newtheorem{remark}{Remark}[section]
\newtheorem{definition}{Definition}[section]
\newcommand{\bbc}{\mathbb C}
\newcommand{\bbh}{\mathbb H}
\newcommand{\bbr}{\mathbb R}
\newcommand{\bbs}{\mathbb S}
\newcommand{\cI}{\mathcal I}
\newcommand{\cT}{\mathcal T}
\begin{document}

\date{\today}

\keywords{Emergence, Kuramoto model, Lohe sphere model, phase locked states}

\thanks{\textbf{Acknowledgment.} The work of S.-Y.Ha is supported by NRF-2020R1A2C3A01003881, the work of M. Kang was supported by the National Research Foundation of Korea(NRF) grant funded by the Korea government(MSIP)(2016K2A9A2A13003815), and the work of H. Park was supported by Basic Science Research Program through the National Research Foundation of Korea(NRF) funded by the Ministry of Education (2019R1I1A1A01059585)}

\begin{abstract}
We present a second-order extension of the first-order Lohe hermitian sphere(LHS) model and study its emergent asymptotic dynamics. Our proposed model incorporates  an inertial effect as a second-order extension. The inertia term can generate an oscillatory behavior of particle trajectory in a small time interval(initial layer) which causes a technical difficulty for the application of monotonicity-based arguments. For emergent estimates, we employ two-point correlation function which is defined as an inner product between positions of particles. For a homogeneous ensemble with the same frequency matrix, we provide two sufficient frameworks in terms of system parameters and initial data to show that two-point correlation functions tend to the unity which is exactly the same as the complete aggregation. In contrast, for a heterogeneous ensemble with distinct frequency matrices, we provide a sufficient framework in terms of system parameters and initial data, which makes two-point correlation functions close to unity by increasing the principal coupling strength.\end{abstract}

\dedicatory{Dedicated to the celebration of the 80th birthday of Prof. Shuxing Chen}

\maketitle \centerline{\date}


%
%
\section{Introduction}
\setcounter{equation}{0}
Collective behaviors of a many-body system are often observed in biological complex networks, to name a few, flocking of birds, swarming of fish, herding of sheep, synchronous firing of fireflies, neurons and pacemaker cells \cite{A-B, B-B, C-U, H-K-P-Z, Ku1, Ku2, Pe, P-R-K, St, Wi2} etc. In this paper, we are interested in the aggregation phenomena of particles on a Hermitian sphere. To motivate our discussion, we begin with the first-order LHS model.

Let $z_j = z_j(t)$ be the position of the $j$-th particle on a Hermitian sphere at time $t$. In order to fix the idea, we begin with the first-order Lohe hermitian sphere model \cite{H-P0, H-P1, H-P2} on a Hermitian sphere ${\mathbb H}{\mathbb S}_r^d := \{ z \in {\mathbb C}^{d+1}:~ \|z \| = r \}$:
\begin{equation} \label{A-0}
{\dot z}_j = \Omega_j z_j + \kappa_0 ( \langle z_j, z_j \rangle z_c-\langle{z_c, z_j}\rangle z_j)+\kappa_1(\langle{z_j, z_c}\rangle-\langle{z_c, z_j}\rangle)z_j, \quad j = 1, \cdots, N,
\end{equation}
where $\langle z_1, z_2 \rangle := \sum_{\alpha=1}^{d+1} \overline{z_1^\alpha} z_2^\alpha$ which is conjugate linear in the first argument and linear in the second argument, and where $\Omega_j$ is the skew-symmetric $(d+1) \times (d+1)$ matrix such that $\Omega_j^{\dagger} = -\Omega_j$ and $\kappa_0, \kappa_1$ are nonnegative coupling strengths.  The emergent dynamics of the HLS model \eqref{A-0} has been extensively studied in \cite{H-P0, H-P1, H-P2} (see Section \ref{sec:2.1}). 

In this paper, we are interested in the large-time dynamics of the Cauchy problem to the second-order extension of \eqref{A-1} incorporating inertial effect to \eqref{A-0}:
\begin{equation}\label{A-1}
\begin{cases}
\displaystyle m \Big( \dot{v}_j -\frac{\Omega_j}{\gamma} v_j \Big) +\gamma v_j = \kappa_0(\langle{z_j, z_j}\rangle z_c-\langle{z_c, z_j}\rangle z_j) +\kappa_1(\langle{z_j, z_c}\rangle -\langle{z_c, z_j}\rangle)z_j\\
\displaystyle \hspace{2.5cm} -\frac{m \| v_j \|^2}{\| z_j \|^2}z_j, \quad z_j \in {\mathbb H}{\mathbb S}_r^d,~~t > 0,  \\
\displaystyle v_j = \dot{z}_j -\frac{\Omega_j}{\gamma} z_j, \quad z_j(0)=z_j^{in}, \quad v_j(0)=v_j^{in}, \quad \langle{z_j^{in}, v_j^{in}}\rangle+\langle{v_j^{in}, z_j^{in}}\rangle = 0,
\end{cases}
\end{equation}
where $m$ is the strength of inertia which is nonnegative. \newline

It is easy to see that for zero inertia and unit friction constant, system \eqref{A-1} reduces to the LHS model \eqref{A-0}. The basic conservation law and solution splitting property will be given in Lemma \ref{L2.1} and Lemma \ref{L2.2}, respectively. Before we present our main results, we first recall the concepts of complete aggregation and practical aggregation as follows. 
\begin{definition} \label{D1.1}
Let $Z := \{ z_j \}$ be a solution to \eqref{A-1}.
\begin{enumerate}
\item
The solution $Z$ exhibits (asymptotic) complete aggregation if the following estimate holds.
\[ \lim_{t \to \infty} \max_{i, j} \|z_i(t) - z_j(t) \| = 0. \]
\item
The solution $Z$ exhibits (asymptotic) practical aggregation if the following estimate holds.
\[ \lim_{\kappa_0 \to \infty} \limsup_{0 \leq t < \infty} \max_{i, j} \|z_i(t) - z_j(t) \| = 0. \]
\end{enumerate}
\end{definition}

\vspace{0.2cm}

Next, we briefly present our two main results on the large-time emergent dynamics of \eqref{A-1}. \newline

First, we present a sufficient framework for the complete aggregation for the homogeneous ensemble with $\Omega_j = \Omega$. In this case, we may
assume that $\Omega = 0$ and $z_j$ satisfies 
\[ m \ddot{z}_j = -\gamma \dot{z}_j + \kappa_0 \big(z_c -\left\langle{z_c, z_j }\right\rangle z_j \big) +\kappa_1 \big( \left\langle{ z_j, z_c }\right\rangle -\left\langle{ z_c, z_j }\right\rangle \big) z_j - m \| \dot{z}_j \|^2 z_j.  \]
Under the following conditions on system parameters and initial data:
\[ \gamma \gg m, \quad  {\mathcal G}(0) \ll 1, \quad  |\dot{{\mathcal G}}(0)| + {\mathcal G}(0)  \ll 1. \]
For the detailed conditions, we refer to frameworks $({\mathcal F}_A1)$-$({\mathcal F}_A2)$ and $({\mathcal F}_B1)$-$({\mathcal F}_B2)$ in Section \ref{sec:4.1}. Our first main result is concerned with the complete aggregation (see Theorem \ref{T4.1}):
\[ \lim_{t \to \infty} \max_{i,j} |z_i(t) - z_j(t) |= 0. \]
For this, we introduce two-point correlation functions:
\[  h_{ij} = \langle{ z_i, z_j \rangle}, \quad g_{ij} := 1 -h_{ij}, \quad 1 \leq i,j \leq N, \quad {\mathcal G} := \frac{1}{N^2}\sum_{i, j=1}^N | g_{ij} |^2, \]
and  then, we also derive differential inequality for ${\mathcal G}$:
\[ m \ddot{{\mathcal G}} +\gamma \dot{{\mathcal G}} +4\kappa_0 \delta {\mathcal G}  \leq f(t), \quad f(t)\to0 \quad \mbox{as $t \to \infty$}. \]
Then, via Gronwall's differential inequality, we can derive the zero convergence of ${\mathcal G}$:
\[ \lim_{t \to \infty} {\mathcal G}(t) = 0, \quad \mbox{i.e.,} \quad \lim_{t \to \infty} \langle z_i(t), z_j(t) \rangle = 1, \quad \forall~i,j = 1, \cdots, N. \]
This clearly implies the complete aggregation in the sense of Definition \ref{D1.1}. \newline

Second, we deal with a heterogeneous ensemble with distinct natural frequency matrices $\Omega_i$. In this situation, we derive a rather weak aggregation, namely practical aggregation. For this, we propose a framework on the system parameters and initial data:
\[ \gamma \gg m, \quad  {\mathcal G}(0) \ll 1, \quad  |\dot{{\mathcal G}}(0)| + {\mathcal G}(0)  \ll 1. \]
For the detailed conditions, we refer to frameworks $({\mathcal F}_C1)$-$({\mathcal F}_C2)$ in Section \ref{sec:4.2}.  As in the aggregation estimate to the homogeneous ensemble, we derive a second-order Gronwall's inequality:
\[
m\ddot{{\mathcal G}} +\gamma \dot{{\mathcal G}} +4\kappa_0 \delta {\mathcal G} \leq 4\Omega^\infty +8\kappa_1 +\frac{16m }{\gamma^2}\big[ \Omega^\infty +2(\kappa_0 +\kappa_1) \big]^2, \quad t > 0.
\]
Then, via the second-order Gronwall's lemma (Lemma \ref{L6.2}) and a suitable ansatz for $m = \frac{m_0}{\kappa_0^{1+\eta}}$, one can show
\[ {\mathcal G}(t) \lesssim \max \Big \{ \frac{1}{\kappa_0},~\frac{1}{\kappa_0^{\eta}} \Big \}, \quad \mbox{for $t \gg 1$}. \]
This clearly implies the practical aggregation in Definition \ref{D1.1}. We refer to Theorem \ref{T4.2} for a detailed discussion. \newline

The rest of this paper is organized as follows. In Section \ref{sec:2}, we briefly introduce the second-order LHS model and basic properties of the proposed model and  discuss it with other previous models such as the first-order LHS model and the Kuramoto model, and review the previous result on the first-order Lohe Hermitian model. In Section \ref{sec:3},  we study the characterization and instability of some distinguished states. In Section \ref{sec:4}, we summarize our main results on the emergent dynamics of the second-order LHS model. In Section \ref{sec:5} and Section \ref{sec:6}, we provide proofs of Theorem \ref{T4.1} and Theorem \ref{T4.2}. Finally, Section \ref{sec:7} is devoted to a brief summary of our main results and discussion on some remaining problems for a future work. 

\vspace{0.3cm}

\noindent\textbf{Notation}: 
For a vector $z=(z^1,\cdots,z^{d+1}) \in \bbc^{d+1}$ and $w = (w^1, \cdots, w^{d+1}) \in {\mathbb C}^{d+1}$, we set the inner product $\langle \cdot, \cdot \rangle$ and its corresponding $\ell^2$-nrom:
\begin{equation*}
\langle z, w \rangle := \sum_{i=1}^{d +1} \overline{z^i} w^i, \quad \|z\| := \sqrt{\langle z, z \rangle}, \quad {\mathbb H}{\mathbb S}^d = {\mathbb H}{\mathbb S}_1^d.
\end{equation*}
For a given configuration $C := \{(z_j, w_j := \dot{z}_j) \}$, we set state and velocity diameters as follows.
\[ {\mathcal D}(Z) := \max_{i,j} |z_i - z_j |, \quad {\mathcal D}(W) := \max_{i,j} |w_i - w_j|. \]
%
%
\section{Preliminaries} \label{sec:2}
\setcounter{equation}{0}
In this section, we briefly  introduce a second-order LHS model \eqref{A-1} and its basic properties, and discuss its relations with other aggregation models such as the first-order LHS model and the Kuramoto model. 

\subsection{The second-order LHS model} \label{sec:2.1} In this subsection, we study basic properties of the seocnd-order LHS model. To factor out the rotational motion, we introduce an auxiliary variable $u_j$ on ${\mathbb H}{\mathbb S}_R^d$:
\begin{equation} \label{B-2}
z_j :=e^{\frac{t}{\gamma} \Omega_j} u_j, \quad j = 1, \cdots, N. 
\end{equation}
By direct calculations, one has 
\begin{equation} \label{B-3}
u_j = e^{-\frac{t}{\gamma} \Omega_j } z_j, \qquad \dot{u}_j = e^{-\frac{t}{\gamma} \Omega_j} v_j, \qquad \ddot{u}_j = e^{-\frac{t}{\gamma} \Omega_j } \Big( \dot{v}_j -\frac{1}{\gamma} \Omega_j v_j \Big), \qquad j = 1, \cdots, N.
\end{equation}
We substitute \eqref{B-3} into \eqref{A-1} and use the fact that $\Omega_j$ is skew-Hermitian to derive the equations for $u_j$:
\begin{align}
\begin{cases} \label{B-4}
\displaystyle m \ddot{u}_j +\gamma \dot{u}_j = \frac{\kappa_0}{N} \sum_{k=1}^N \bigg(  \| u_j \|^2 e^{\frac{\Omega_k -\Omega_j}{\gamma} t} u_k -\left\langle{ e^{\frac{\Omega_k -\Omega_j}{\gamma} t} u_k, u_j}\right\rangle u_j \bigg) \\
\displaystyle \hspace{2cm} +\frac{\kappa_1}{N} \sum_{k=1}^N \bigg( \left\langle {u_j, e^{\frac{\Omega_k -\Omega_j}{\gamma} t} u_k} \right\rangle -\left\langle { e^{\frac{\Omega_k -\Omega_j}{\gamma} t} u_k, u_j} \right\rangle \bigg) u_j -\frac{m \| \dot{u}_j \|^2}{\| u_j \|^2} u_j, \\
\displaystyle u_j(0) = u_j^{in} = z_j^{in}, \quad \dot{u}_j(0) = \dot{u}_j^{in} = v_j^{in}, \quad \langle{u_j^{in}, \dot{u}_j^{in}}\rangle+\langle{\dot{u}_j^{in}, u_j^{in}}\rangle = 0.
\end{cases}
\end{align}
In the following lemma, we study the conservation of $\|z_j \|$ and $\|u_j \|$.
\begin{lemma}\label{L2.1}
\emph{(Conservation laws)}
Let $\{ z_j \}$ and $\{ u_j \}$  be global solutions of \eqref{A-1} and \eqref{B-4}, respectively. Then, $\ell^2$-norms $\|z_j\|$ and $\| u_j \|$ are conserved quantity:
\[
\frac{d}{dt}\|z_j(t) \| = 0, \quad \frac{d}{dt}\|u_j(t)\| = 0, \quad \forall ~t\geq0, \quad j = 1,\cdots, N.
\]
\end{lemma}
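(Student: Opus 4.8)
The plan is to exploit the unitarity of the rotation factor in \eqref{B-2} together with the transformed equation \eqref{B-4}, reducing both claimed conservation laws to a single scalar ODE. First I would observe that since $\Omega_j$ is skew-Hermitian, the matrix $e^{\frac{t}{\gamma}\Omega_j}$ is unitary, so the substitution $z_j = e^{\frac{t}{\gamma}\Omega_j} u_j$ preserves the $\ell^2$-norm: $\|z_j(t)\| = \|e^{\frac{t}{\gamma}\Omega_j} u_j(t)\| = \|u_j(t)\|$ for every $t$. Consequently it suffices to prove $\frac{d}{dt}\|u_j\| = 0$, and the conservation of $\|z_j\|$ follows automatically; this lets me avoid the $\Omega_j$-dependent terms in \eqref{A-1} altogether.

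To this end, set $R_j := \|u_j\|^2 = \langle u_j, u_j \rangle$, so that $\dot{R}_j = 2\,\mathrm{Re}\langle u_j, \dot{u}_j \rangle$ and $\ddot{R}_j = 2\,\mathrm{Re}\langle u_j, \ddot{u}_j \rangle + 2\|\dot{u}_j\|^2$. I would then take the Hermitian inner product of \eqref{B-4} with $u_j$ and extract real parts. The crucial structural observation is that, after pairing with $u_j$, each coupling term reduces to $\|u_j\|^2$ times a sum of expressions of the form $\langle u_j, w \rangle - \langle w, u_j \rangle$ with $w = e^{\frac{\Omega_k-\Omega_j}{\gamma}t}u_k$; since $\langle u_j, w \rangle - \langle w, u_j \rangle = 2i\,\mathrm{Im}\langle u_j, w \rangle$ is purely imaginary, both the $\kappa_0$- and the $\kappa_1$-contributions vanish upon taking real parts. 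The only surviving right-hand term is the inertial correction $-\frac{m\|\dot{u}_j\|^2}{\|u_j\|^2}u_j$, which pairs with $u_j$ to give the real quantity $-m\|\dot{u}_j\|^2$.

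Collecting these computations produces $\frac{m}{2}\ddot{R}_j - m\|\dot{u}_j\|^2 + \frac{\gamma}{2}\dot{R}_j = -m\|\dot{u}_j\|^2$, in which the two $-m\|\dot{u}_j\|^2$ terms cancel exactly — this cancellation is precisely the reason the correction $-\frac{m\|\dot{u}_j\|^2}{\|u_j\|^2}u_j$ is built into the model. What remains is the homogeneous linear ODE
\[ m\ddot{R}_j + \gamma \dot{R}_j = 0. \]
Finally I would invoke the admissibility constraint $\langle u_j^{in}, \dot{u}_j^{in} \rangle + \langle \dot{u}_j^{in}, u_j^{in} \rangle = 0$ recorded in \eqref{B-4}, which states exactly that $\dot{R}_j(0) = 0$. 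Viewing $\dot{R}_j$ as the unknown in the first-order equation $m\frac{d}{dt}\dot{R}_j + \gamma \dot{R}_j = 0$ with vanishing initial datum forces $\dot{R}_j \equiv 0$ (and when $m=0$ the equation $\gamma \dot{R}_j = 0$ gives the same conclusion directly), so $R_j$ is constant and $\|u_j\|$, hence $\|z_j\|$, is conserved.

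I do not expect a genuine analytic obstacle here: the identity is exact and does not interact with the initial-layer difficulties flagged in the introduction. The only point requiring care is the real/imaginary bookkeeping in the second step — namely verifying that the coupling terms contribute nothing to $\mathrm{Re}\langle u_j, \ddot{u}_j\rangle$ while the inertial term contributes exactly the quantity needed to cancel the kinetic contribution $-m\|\dot{u}_j\|^2$ arising from differentiating $R_j$ twice.
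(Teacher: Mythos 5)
Your proposal is correct and follows essentially the same route as the paper: the paper likewise reduces to $\|u_j\|$ via unitarity of $e^{\frac{t}{\gamma}\Omega_j}$, pairs \eqref{B-4} with $u_j$ so that the $\kappa_0$- and $\kappa_1$-terms drop out as purely imaginary contributions and the inertial correction cancels the $2m\|\dot u_j\|^2$ term, and then solves the resulting linear equation $m\,\tfrac{d}{dt}\bigl(\langle u_j,\dot u_j\rangle+\langle \dot u_j,u_j\rangle\bigr)=-\gamma\bigl(\langle u_j,\dot u_j\rangle+\langle \dot u_j,u_j\rangle\bigr)$ with the vanishing initial datum $\langle u_j^{in},\dot u_j^{in}\rangle+\langle \dot u_j^{in},u_j^{in}\rangle=0$. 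Your quantity $\dot R_j$ is exactly the paper's $\langle u_j,\dot u_j\rangle+\langle \dot u_j,u_j\rangle$, so the two arguments coincide up to notation.
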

\begin{proof} Since $e^{-\frac{t}{\gamma} \Omega_j}$ is unitary,  one can see 
\[  \|u_j \|^2 = \langle u_j, u_j \rangle = \Big\langle e^{-\frac{t}{\gamma} \Omega_j} z_j, e^{-\frac{t}{\gamma} \Omega_j} z_j \Big\rangle = \Big\langle e^{-\frac{t}{\gamma} \Omega_j}\Big(  e^{-\frac{t}{\gamma} \Omega_j} \Big)^\dagger z_j, z_j  \Big\rangle = \langle z_j, z_j \rangle = \|z_j \|^2. \]
Hence, we only verify the conservation of the norm $\|u_j \|$. Now we claim:
\begin{equation*} \label{B-5}
\frac{d}{dt}\|u_j \|^2 = \langle u_j, \dot{u}_j \rangle +\langle \dot{u}_j, u_j \rangle = 0.
\end{equation*}
Simple calculation yields
\begin{equation} \label{B-6}
m \frac{d}{dt} \Big (\langle u_j, \dot{u}_j \rangle +\langle \dot{u}_j, u_j \rangle \Big ) = 2m\|\dot{u}_j\|^2 +\langle u_j, m\ddot{u}_j \rangle +\langle m\ddot{u}_j, u_j \rangle.
\end{equation}
Here, we use \eqref{B-4}  and \eqref{B-6} to obtain
\begin{align*}
\begin{aligned} 
\langle{u_j, m\ddot{u}_j}\rangle &= -\gamma\langle{u_j, \dot{u}_j}\rangle \\
&+\frac{\kappa_0}{N} \sum_{k=1}^N \bigg(  \left\langle{ u_j, e^{\frac{\Omega_k -\Omega_j}{\gamma} t} u_k }\right\rangle -\left\langle{ e^{\frac{\Omega_k -\Omega_j}{\gamma} t} u_k, u_j}\right\rangle  \bigg) \| u_j \|^2  \\
&+\frac{\kappa_1}{N} \sum_{k=1}^N \bigg( \left\langle {u_j, e^{\frac{\Omega_k -\Omega_j}{\gamma} t} u_k} \right\rangle -\left\langle { e^{\frac{\Omega_k -\Omega_j}{\gamma} t} u_k, u_j} \right\rangle \bigg) \|u_j \|^2 -m \| \dot{u}_j \|^2.
\end{aligned}
\end{align*}
This yields
\[
\langle{u_j, m\ddot{u}_j}\rangle +\langle{m\ddot{u}_j, u_j}\rangle = \langle{u_j, m\ddot{u}_j}\rangle +\overline{\langle{u_j, m\ddot{u}_j}\rangle} = -\gamma \Big ( \langle{ u_j, \dot{u}_j }\rangle +\langle{ \dot{u}_j, u_j }\rangle  \Big) -2m \| \dot{u}_j \|^2.
\]
Now, we derive Gronwall's inequality for $\langle{u_j, \dot{u}_j}\rangle +\langle{\dot{u}_j, u_j}\rangle$:
\begin{align*}
m\frac{d}{dt}(\langle{u_j, \dot{u}_j}\rangle +\langle{\dot{u}_j, u_j}\rangle) &= 2m\|\dot{u}_j\|^2 +\langle{u_j, m\ddot{u}_j}\rangle +\langle{m\ddot{u}_j, u_j}\rangle =  -\gamma (\langle{u_j, \dot{u}_j}\rangle +\langle{\dot{u}_j, u_j}\rangle).
\end{align*}
Gronwall's lemma and initial conditions imply
\[
\frac{d}{dt} \|u_j \|^2 = \langle{u_j(t), \dot{u}_j(t) }\rangle +\langle{\dot{u}_j(t), u_j(t)}\rangle = e^{-\frac{\gamma}{m}t} ( \langle{ u_j^{in}, \dot{u}_j^{in} }\rangle +\langle{ \dot{u}_j^{in}, u_j^{in} }\rangle ) = 0, \quad \forall~t > 0.
\]
\end{proof}
\begin{lemma} \label{L2.2}
Suppose $\Omega_j$ satisfies
\begin{equation} \label{B-6-1}
\Omega^{\dagger} = -\Omega, \quad \Omega_j \equiv \Omega \quad \mbox{for all $j=1,\cdots,N$},
\end{equation}
where $\dagger$ denotes the Hermitian conjugate, and let $\{ z_j \}$ be a solution to \eqref{A-1}. Then, $u_j$ defined in \eqref{B-2} satisfies
\begin{equation*}
\begin{cases}
\displaystyle m \ddot{u}_j +\gamma \dot{u}_j = \kappa_0 \big( \| u_j \|^2 u_c -\left\langle{ u_c, u_j }\right\rangle u_j \big) +\kappa_1 \big( \left\langle{ u_j, u_c }\right\rangle -\left\langle{ u_c, u_j }\right\rangle \big) u_j -\frac{m \| \dot{u}_j \|^2}{\| u_j \|^2} u_j, \\
\displaystyle (u_j(0),  \dot{u}_j(0) ) = (u_j^{in}, \dot{u}_j^{in}), \quad \langle{u_j^{in}, \dot{u}_j^{in}}\rangle+\langle{\dot{u}_j^{in}, u_j^{in}}\rangle = 0,
\end{cases}
\end{equation*}
where $u_c := \frac{1}{N} \sum_{k=1}^N u_k$.
\end{lemma}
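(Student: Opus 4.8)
The plan is to specialize the already-derived $u$-dynamics \eqref{B-4} to the homogeneous case, in which every time-dependent conjugation factor collapses to the identity. First I would invoke \eqref{B-4}, which was obtained for an arbitrary family of skew-Hermitian matrices $\Omega_j$, and then impose hypothesis \eqref{B-6-1}. Since $\Omega_j \equiv \Omega$ for all $j$, we have $\Omega_k - \Omega_j = 0$, so that $e^{\frac{\Omega_k -\Omega_j}{\gamma} t} = e^{0} = I$ for every pair $(j,k)$ and all $t \geq 0$. Substituting the identity matrix for each factor $e^{\frac{\Omega_k -\Omega_j}{\gamma} t}$ removes the rotations from both coupling sums on the right-hand side of \eqref{B-4}.

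The second step is to recognize the resulting averaged sums as centroid terms. Using the definition $u_c := \frac{1}{N}\sum_{k=1}^N u_k$ together with the bilinearity of $\langle \cdot, \cdot \rangle$ (conjugate-linear in the first slot, linear in the second), and noting that the real scalar $\frac{1}{N}$ passes through either slot without conjugation, I would rewrite
\[ \frac{1}{N}\sum_{k=1}^N \|u_j\|^2 u_k = \|u_j\|^2 u_c, \quad \frac{1}{N}\sum_{k=1}^N \langle u_k, u_j \rangle = \langle u_c, u_j \rangle, \quad \frac{1}{N}\sum_{k=1}^N \langle u_j, u_k \rangle = \langle u_j, u_c \rangle. \]
Collecting the $\kappa_0$- and $\kappa_1$-contributions accordingly turns the right-hand side of \eqref{B-4} into $\kappa_0\big(\|u_j\|^2 u_c - \langle u_c, u_j \rangle u_j\big) + \kappa_1\big(\langle u_j, u_c \rangle - \langle u_c, u_j \rangle\big) u_j - \frac{m\|\dot{u}_j\|^2}{\|u_j\|^2} u_j$, which is precisely the claimed equation.

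Finally, the initial data transfer verbatim: evaluating \eqref{B-3} at $t = 0$ gives $u_j(0) = z_j^{in} = u_j^{in}$ and $\dot{u}_j(0) = v_j^{in} = \dot{u}_j^{in}$, while the orthogonality constraint $\langle u_j^{in}, \dot{u}_j^{in} \rangle + \langle \dot{u}_j^{in}, u_j^{in} \rangle = 0$ is inherited directly from the constraint $\langle z_j^{in}, v_j^{in} \rangle + \langle v_j^{in}, z_j^{in} \rangle = 0$ imposed in \eqref{A-1}, since the two constraints coincide at $t=0$.

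There is no substantive analytic difficulty here; the statement is essentially a bookkeeping specialization of \eqref{B-4}, and one could equally well derive it by substituting \eqref{B-3} directly into \eqref{A-1} under \eqref{B-6-1}. The only point requiring a little care is the handling of the Hermitian inner-product convention when pulling the averaging operator inside each bracket, so that the conjugate-linear first argument introduces no spurious conjugation or sign; because the averaging weight $\frac{1}{N}$ is real and positive, this is immediate. I therefore expect the whole argument to reduce to a short direct computation rather than to hinge on any single hard step.
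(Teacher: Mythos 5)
Your proposal is correct and follows exactly the paper's own route: the paper proves Lemma \ref{L2.2} simply by substituting the homogeneity assumption \eqref{B-6-1} into the already-derived $u$-dynamics \eqref{B-4}, which is precisely your argument with the details (identity exponentials, centroid identification, inherited initial constraints) written out. No gaps; your elaboration of the bookkeeping is a faithful expansion of the paper's one-line proof.
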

\begin{proof}
We substitute the relation \eqref{B-6-1} into \eqref{B-4} to get the desired estimate. 
\end{proof}
\begin{remark}
For a homogeneous ensemble with the common natural frequency $\Omega$, one can also see that $z_j$ satisfies 
\begin{align*}
\begin{cases}
\displaystyle m \Big( \dot{v}_j -\frac{1}{\gamma} \Omega v_j \Big) +\gamma v_j = \kappa_0(\langle{z_j, z_j}\rangle z_c-\langle{z_c, z_j}\rangle z_j)+\kappa_1(\langle{z_j, z_c}\rangle-\langle{z_c, z_j}\rangle)z_j -\frac{m \| v_j \|^2}{\| z_j \|^2}z_j, \\
\displaystyle v_j = \dot{z}_j -\frac{1}{\gamma} \Omega z_j, \quad z_j(0)=z_j^{in}, \quad v_j(0)=v_j^{in}, \quad \langle{z_j^{in}, v_j^{in}}\rangle+\langle{v_j^{in}, z_j^{in}}\rangle = 0.
\end{cases}
\end{align*}
\end{remark}
\subsection{Relation with other aggregation models} \label{sec:2.2}
In this subsection, we briefly discuss relations with other aggregation models with \eqref{A-1}. For a zero inertia and unit friction constant case:
\[ m = 0 \quad \mbox{and} \quad \gamma = 1 \]
system $\eqref{A-1}_1$ becomes the first-order LHS model \cite{H-P1, H-P2}:
\begin{equation} \label{B-8}
\displaystyle {\dot z}_j = \Omega_j z_j + \kappa_0 (\langle{z_j, z_j}\rangle z_c-\langle{z_c, z_j}\rangle z_j)+\kappa_1(\langle{z_j, z_c}\rangle-\langle{z_c, z_j}\rangle)z_j.
\end{equation}
Moreover, for the special case with $z_j =x_j \in \bbs^d \subset {\mathbb R}^{d+1}$ and $\kappa_1 = 0$, system \eqref{B-8} also reduces to the Lohe sphere model:
\begin{equation} \label{B-8-0}
\displaystyle {\dot x}_j = {\hat \Omega}_j x_j + \kappa_0 (\langle{x_j, x_j}\rangle x_c-\langle{x_c, x_j}\rangle x_j), \quad  {\hat \Omega}^\top = -  {\hat \Omega} \in \bbr^{(d+1)\times(d+1)}.
\end{equation}
The emergent dynamics of \eqref{B-8-0} has been extensively studied in \cite{C-C-H, C-H1, C-H2, C-H3, H-K-R2, H-K-L-N, H-K-L-N2, H-K-P-Z, Lo-1, Lo, Ma1, Ma2, Ol, S-H}. In the sequel, we  mainly discuss emergent behavior of the complex swarm sphere model \eqref{B-8}.  For this, we introduce new dependent variables: for state configuration $\{z_j \}$, 
\begin{equation} \label{B-8-1}
h_{ij} := \langle z_i, z_j \rangle, \quad  R_{ij} :=\mathrm{Re}(h_{ij}),\quad I_{ij} :=\mathrm{Im}(h_{ij})\quad\forall~i, j = 1, 2, \cdots, N.
\end{equation}
Note that
\[ h_{ij} = \langle z_i, z_j \rangle = 1 \quad \Longleftrightarrow \quad R_{ij}=1 \quad \mbox{and} \quad I_{ij}=0, \quad \forall ~i, j = 1, \cdots, N. \]
For a homogeneous ensemble with $\Omega_j = \Omega$, we expect the formation of complete aggregation which means 
\[ \lim_{t \to \infty} h_{ij} = 1. \]
Hence, it is natural to introduce a Lyapunov functional depending on the quantities:
\[ |1-h_{ij}| = \sqrt{(1-R_{ij})^2 + I_{ij}^2}. \] 
and we set 
\begin{align*}
\begin{aligned}
{\mathcal J}_{ij} &:=\sqrt[4]{(1-R_{ij})^2+I_{ij}^2}, \quad \forall i, j\in\{1, 2, \cdots, N\}, \\
 \mathcal J_M &:= \max_{i, j} \mathcal J_{ij} \quad \mbox{and} \quad {\mathcal D}(\Omega) := \max_{i,j} \| \Omega_i - \Omega_j \|_F.
\end{aligned}
\end{align*}
Now, we briefly summarize emergent behaviors of \eqref{B-8} without proofs.
\begin{theorem} \label{T2.1}
\emph{\cite{H-P0}}
The following assertions hold.
\begin{enumerate}
\item
(A homogeneous ensemble):~Suppose system parameters and initial data satisfy
\[ \kappa_0 > 2\kappa_1\geq0, \quad  \mathcal{D}(\Omega)=0, \quad  {\mathcal J}_{M}(0)  < \sqrt{1-\frac{2\kappa_1}{\kappa_0}}, \]
and let $\{z_j\}$ be the solution of \eqref{A-1} with initial data $\{z_j^{in}\}$. Then, there exists a positive constant ${\tilde \Lambda}$ such that 
\[
\mathcal J_M(t)\leq \mathcal J_M(0)\exp\left(-{\tilde \Lambda} t\right), \quad t > 0.
\]
\item
(A heterogeneous ensemble):~Suppose system parameters and initial data satisfy
\[  \kappa_1 \geq 0, \quad  \mathcal{D}(\Omega) > 0, \]
and let $\{z_j\}$ be a solution of \eqref{A-1} with the initial data $\{z_j^{in}\}.$ Then, one has a practical aggregation:
\[
\lim_{\kappa_0\to\infty}\limsup_{t\to\infty} \mathcal J_M(t)=0.
\]
\end{enumerate}
\end{theorem}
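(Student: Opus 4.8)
The plan is to monitor the Hermitian correlations $h_{ij}=\langle z_i,z_j\rangle$ and to prove that the scalar functional $\mathcal J_M$ contracts. Since $\mathcal J_{ij}^2=|1-h_{ij}|$ and $\|z_i-z_j\|^2=2(1-R_{ij})\le 2|1-h_{ij}|$, one has $\mathcal D(Z)\le\sqrt2\,\mathcal J_M$, so exponential decay of $\mathcal J_M$ forces complete aggregation while uniform-in-time smallness of $\mathcal J_M$ forces practical aggregation; it therefore suffices to control $\mathcal J_M$.

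For the homogeneous ensemble I would first remove the common rotation through the gauge change $z_j=e^{t\Omega}u_j$ of \eqref{B-2} in the first-order regime $m=0$, $\gamma=1$, which reduces \eqref{B-8} to the $\Omega=0$ dynamics; equivalently, because $\Omega^\dagger=-\Omega$, the frequency contribution $\langle\Omega z_i,z_j\rangle+\langle z_i,\Omega z_j\rangle$ to $\dot h_{ij}$ cancels. Differentiating $h_{ij}$ along \eqref{B-8}, using conjugate-linearity in the first slot and $\langle z_k,z_k\rangle=1$, I would obtain a closed system whose leading part near the aggregated state $h_{ij}\equiv1$ is a linear contraction $-2\kappa_0(1-h_{ij})$ up to nonlinear corrections, the $\kappa_1$-terms entering only through the purely imaginary factor $\langle z_j,z_c\rangle-\langle z_c,z_j\rangle=-2i\,\mathrm{Im}\langle z_c,z_j\rangle$ and hence acting as a phase rotation rather than an aggregating force.

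The core of the argument is a differential inequality for $\mathcal J_M$. As a maximum over pairs, $\mathcal J_M$ is Lipschitz but not differentiable, so I would pass to the upper Dini derivative and, for a.e. $t$, choose an extremal pair $(i_t,j_t)$ with $\mathcal J_M=\mathcal J_{i_tj_t}$, whence $\frac{d^+}{dt}\mathcal J_M=\frac{d}{dt}\mathcal J_{i_tj_t}$. Bounding the coupling through $z_c=\frac1N\sum_k z_k$ by $\mathcal J_M$ (each $|1-h_{kj}|=\mathcal J_{kj}^2\le\mathcal J_M^2$) and invoking $\kappa_0>2\kappa_1$, I expect the estimate $\frac{d^+}{dt}\mathcal J_M\le-\tilde\Lambda\,\mathcal J_M$ with a rate $\tilde\Lambda=\tilde\Lambda(\kappa_0,\kappa_1,\mathcal J_M)$ that is strictly positive exactly when $\mathcal J_M<\sqrt{1-2\kappa_1/\kappa_0}$; Gronwall's lemma then gives $\mathcal J_M(t)\le\mathcal J_M(0)e^{-\tilde\Lambda t}$.

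The main obstacle is to make this rigorous rather than formal: positivity of $\tilde\Lambda$ holds only while $\mathcal J_M$ stays below the threshold $\sqrt{1-2\kappa_1/\kappa_0}$, so I would run a continuity/bootstrap argument showing that $\{\mathcal J_M<\sqrt{1-2\kappa_1/\kappa_0}\}$ is forward-invariant, so that the hypothesis $\mathcal J_M(0)<\sqrt{1-2\kappa_1/\kappa_0}$ propagates for all $t$ and the contraction never degenerates. For the heterogeneous ensemble the gauge change no longer annihilates the frequency terms; they instead contribute an inhomogeneous forcing bounded by $C\,\mathcal D(\Omega)$, so the same computation yields $\frac{d^+}{dt}\mathcal J_M\le-\tilde\Lambda\,\mathcal J_M+C\,\mathcal D(\Omega)$ with $\tilde\Lambda\sim\kappa_0$. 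Hence $\limsup_{t\to\infty}\mathcal J_M(t)\le C\,\mathcal D(\Omega)/\tilde\Lambda\lesssim \mathcal D(\Omega)/\kappa_0$, and letting $\kappa_0\to\infty$ gives the practical aggregation $\lim_{\kappa_0\to\infty}\limsup_{t\to\infty}\mathcal J_M(t)=0$.
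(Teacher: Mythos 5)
First, a point of comparison you could not have known: this paper does \emph{not} prove Theorem \ref{T2.1}. It is quoted from \cite{H-P0} explicitly ``without proofs,'' so there is no in-paper argument to match against; your proposal has to stand on its own. It is also worth noting that your max-functional/Dini-derivative strategy is methodologically different from what this paper does for its own (second-order) results, where the authors deliberately avoid monotonicity arguments based on maxima (the inertial term destroys them) and instead run second-order Gronwall inequalities on the averaged functional ${\mathcal G} = N^{-2}\sum_{i,j}|g_{ij}|^2$. For the first-order model \eqref{B-8}, however, your max-based route is the natural one.

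For part (1) your plan is correct, and the estimate you only ``expect'' does come out exactly as claimed: with $g_{ij}=1-h_{ij}$ and the common $\Omega$ gauged away via \eqref{B-2}, one computes $\dot g_{ij} = -\kappa_0\big(\langle z_i,z_c\rangle+\langle z_c,z_j\rangle\big)g_{ij} -\kappa_1\Theta_{ij}h_{ij}$, where $\Theta_{ij}$ is purely imaginary with $|\Theta_{ij}|\le 4{\mathcal J}_M^2$ and $\mathrm{Re}\big(\langle z_i,z_c\rangle+\langle z_c,z_j\rangle\big)\ge 2-2{\mathcal J}_M^2$; at an extremal pair this gives $\frac{d^+}{dt}{\mathcal J}_M \le -\big[\kappa_0(1-{\mathcal J}_M^2)-2\kappa_1\big]{\mathcal J}_M$, whose contraction region is precisely $\{{\mathcal J}_M<\sqrt{1-2\kappa_1/\kappa_0}\}$, and forward invariance plus monotonicity of ${\mathcal J}_M$ yields the uniform rate ${\tilde \Lambda}=\kappa_0(1-{\mathcal J}_M(0)^2)-2\kappa_1>0$. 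Be aware, though, that this computation \emph{is} the proof; as written, your homogeneous part is a correct plan whose entire mathematical content is asserted rather than derived.

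The genuine gap is in part (2). Substantively: your contraction only exists inside the region ${\mathcal J}_M^2 < 1-2\kappa_1/\kappa_0$ (with a margin to dominate the forcing), so the argument requires ${\mathcal J}_M(0)<1$ uniformly in $\kappa_0$ — for instance a bipolar initial configuration has ${\mathcal J}_M(0)=\sqrt{2}$ and your Lyapunov functional then gives nothing. The theorem as quoted imposes no initial-data restriction whatsoever, so your proposal proves a weaker statement than the one asserted; you must either add the hypothesis ${\mathcal J}_M(0)<1$ (most likely what \cite{H-P0} actually assumes) or supply a different, non-perturbative argument for arbitrary data, which you do not have. There is also a quantitative slip: the heterogeneous inequality should be run at the level of ${\mathcal J}_M^2$ (i.e., $|g_{ij}|$ at an extremal pair), since dividing through by ${\mathcal J}_M$ to reach your stated inequality $\frac{d^+}{dt}{\mathcal J}_M\le -{\tilde \Lambda}{\mathcal J}_M + C{\mathcal D}(\Omega)$ introduces a forcing ${\mathcal D}(\Omega)/{\mathcal J}_M$ that is singular as ${\mathcal J}_M\to 0$; done correctly one gets $\frac{d^+}{dt}{\mathcal J}_M^2 \le -2\big[\kappa_0(1-{\mathcal J}_M^2)-2\kappa_1\big]{\mathcal J}_M^2 + {\mathcal D}(\Omega)$, hence $\limsup_{t\to\infty}{\mathcal J}_M \lesssim \sqrt{{\mathcal D}(\Omega)/\kappa_0}$ rather than ${\mathcal D}(\Omega)/\kappa_0$ — still enough for practical aggregation, but the rate you claim is not what the argument delivers.
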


\vspace{0.5cm}

Before we close this section, we recall that how \eqref{B-8} can be further reduced to  the the Kuramoto model which is one of prototype examples for synchronization. We assume that the second coupling is absent and dimension in unit-dimensional: 
\[ \kappa_1 = 0, \quad d = 1. \]
In this case, we take the ansatz:
\begin{equation} \label{B-8-1}
z_j :=\begin{bmatrix}
\cos\theta_j\\
\sin\theta_j
\end{bmatrix}, \quad 
\Omega_j :=\begin{bmatrix}
0&-\nu_j \\
\nu_j &0
\end{bmatrix}, \quad j = 1, \cdots, N.
\end{equation}
We substitute \eqref{B-8-1} into \eqref{B-8} to obtain
\begin{align*}
\begin{aligned}
\dot{\theta}_i\begin{bmatrix}
-\sin\theta_j \\
\cos\theta_j
\end{bmatrix} &=\begin{bmatrix}
0&-\nu_i\\
\nu_i&0
\end{bmatrix}\begin{bmatrix}
\cos\theta_j \\
\sin\theta_j
\end{bmatrix} 
+\frac{\kappa}{N}\sum_{k=1}^N
\left( \begin{bmatrix}
\cos\theta_k\\
\sin\theta_k
\end{bmatrix}
-
\left\langle \begin{bmatrix}
\cos\theta_j \\
\sin\theta_j
\end{bmatrix}, 
\begin{bmatrix}
\cos\theta_k\\
\sin\theta_k
\end{bmatrix} \right\rangle\begin{bmatrix}
\cos\theta_j \\
\sin\theta_j
\end{bmatrix}\right) \\
&=\nu_i\begin{bmatrix}
-\sin\theta_j\\
\cos\theta_j
\end{bmatrix}+\frac{\kappa}{N}\sum_{k=1}^N
\begin{bmatrix}
\cos\theta_k-\sin\theta_k-(\cos\theta_i\cos\theta_k+\sin\theta_i\sin\theta_k)\cos\theta_j \\
\cos\theta_k+ \sin\theta_k-(\cos\theta_i\cos\theta_k+\sin\theta_i\sin\theta_k)\sin\theta_j
\end{bmatrix}\\
&=\nu_j \begin{bmatrix}
-\sin\theta_j \\
\cos\theta_j
\end{bmatrix}+\frac{\kappa}{N}\sum_{k=1}^N
\begin{bmatrix}
\cos\theta_k-\cos(\theta_j -\theta_k)\cos\theta_j \\
\sin\theta_k-\cos(\theta_j-\theta_k)\sin\theta_j
\end{bmatrix}\\
&=\nu_j \begin{bmatrix}
-\sin\theta_j \\
\cos\theta_j
\end{bmatrix}+\frac{\kappa}{N}\sum_{k=1}^N
\begin{bmatrix}
-\sin\theta_j \\
\cos\theta_j 
\end{bmatrix} \sin(\theta_k-\theta_j).
\end{aligned}
\end{align*}
We take an inner product the above relation with $(-\sin \theta_j, \cos \theta_j)^\top$, we obtain the Kuramoto model:
\[
\dot{\theta}_j =\nu_j +\frac{\kappa}{N}\sum_{k=1}^N\sin(\theta_k-\theta_j), \quad j = 1, \cdots, N.
\]
In summary, the LHS model generalizes the Lohe sphere model and Kuramoto model that were extensively studied in literature.
%
%
\section{Characterization and instability of two distinguished states} \label{sec:3}
\setcounter{equation}{0}
In this section, we discuss the characterization and instability of two distinguished states for system \eqref{A-1} with zero frequency matrix and unit Hermitian sphere ${\mathbb H}{\mathbb S}^d $:
\[ \Omega_j \equiv 0, \quad \|z_j \| = 1, \quad j = 1, \cdots, N. \]
In this case, system \eqref{A-1} takes a much simpler form:
\begin{equation} \label{C-1}
m \ddot{z}_j +\gamma \dot{z}_j = \kappa_0 \big(z_c -\left\langle{ z_c, z_j }\right\rangle z_j \big) +\kappa_1 \big( \left\langle{ z_j, z_c }\right\rangle -\left\langle{ z_c, z_j }\right\rangle \big) z_j -m \| \dot{z}_j \|^2 z_j.
\end{equation}
This can also be rewritten as a first-order system by introducing an auxiliary variable $w_j = {\dot z}_j$:
\begin{align}\label{C-2}
\begin{aligned}
 {\dot z}_j &= w_j, \\
 \dot{w}_j &= -\frac{\gamma}{m} w_j + \frac{\kappa_0}{m} \big( z_c -\left\langle{ z_c, z_j }\right\rangle z_j \big) + \frac{\kappa_1}{m} \big( \left\langle{ z_j, z_c }\right\rangle -\left\langle{ z_c, z_j }\right\rangle \big) z_j -\| w_j \|^2  z_j.
\end{aligned}
\end{align}
\subsection{Characterization of equilibria} \label{sec:3.1} 
Note that the algebraic equilibrium system associated with \eqref{C-2}:
\begin{equation}\label{C-3}
\begin{cases}
\displaystyle w_j = 0, \\
\displaystyle -\gamma w_j + \kappa_0 \big( z_c -\left\langle{ z_c, z_j }\right\rangle z_j \big) + \kappa_1 \big( \left\langle{ z_j, z_c }\right\rangle -\left\langle{ z_c, z_j }\right\rangle \big) z_j - m \| w_j \|^2  z_j = 0.
\end{cases}
\end{equation}
\begin{proposition}\label{P3.1}
Let $\{(z^e_j, w_j^e) \}$ be an equilibrium solution of \eqref{C-2} if and only if $(z_j^e, w_j^e)$ is a constant state satisfying 
\begin{equation*}
w^e_j= 0, \quad z_c^e = \langle{z_c^e,z_j^e}\rangle z_j^e, \quad \forall ~j = 1, \cdots, N,
\end{equation*}
where $z_c^e = \frac{1}{N}\sum_{j=1}^N z_j^e$. 
\end{proposition}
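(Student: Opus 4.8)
The plan is to read off both equilibrium equations directly from the algebraic system \eqref{C-3}. The first line $w_j = 0$ is immediate since $\dot z_j = w_j$, and substituting $w_j^e = 0$ into the second line annihilates both the friction term $-\gamma w_j$ and the centripetal term $-m\|w_j\|^2 z_j$, leaving the balance relation
\begin{equation*}
\kappa_0 \big( z_c^e -\langle z_c^e, z_j^e \rangle z_j^e \big) + \kappa_1 \big( \langle z_j^e, z_c^e \rangle -\langle z_c^e, z_j^e \rangle \big) z_j^e = 0, \qquad j = 1, \dots, N. \tag{$\star$}
\end{equation*}
Thus, for the principal coupling $\kappa_0 > 0$, the content of the proposition reduces to showing that $(\star)$ holds for every $j$ if and only if $z_c^e = \langle z_c^e, z_j^e \rangle z_j^e$ for every $j$.

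For the forward implication I would fix $j$, abbreviate $a := \langle z_c^e, z_j^e \rangle$, and use the Hermitian symmetry $\langle z_j^e, z_c^e \rangle = \overline{a}$ to rewrite $(\star)$ as $\kappa_0 z_c^e = \big[ \kappa_0 a - \kappa_1(\overline a - a)\big] z_j^e$. This already forces $z_c^e = \mu z_j^e$ to be a scalar multiple of $z_j^e$, with $\mu := a - \frac{\kappa_1}{\kappa_0}(\overline a - a)$. Next I would invoke the unit-norm constraint $\|z_j^e\| = 1$ (preserved by the flow, cf. Lemma \ref{L2.1}) to evaluate $a = \langle \mu z_j^e, z_j^e \rangle = \overline\mu$, and feed this self-consistency back into the definition of $\mu$. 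The decisive point is that $\overline a - a$ is purely imaginary, so matching $\mu = \overline a$ against $\mu = a - \frac{\kappa_1}{\kappa_0}(\overline a - a)$ collapses to $(\overline a - a)\big(1 + \frac{\kappa_1}{\kappa_0}\big) = 0$; since the prefactor is strictly positive, $a$ must be real, whence $\mu = a$ and $z_c^e = a z_j^e = \langle z_c^e, z_j^e \rangle z_j^e$.

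The converse is a short verification. Assuming $z_c^e = \langle z_c^e, z_j^e \rangle z_j^e$ and again writing $a = \langle z_c^e, z_j^e \rangle$, the relation $z_c^e = a z_j^e$ together with $\|z_j^e\| = 1$ gives $a = \langle a z_j^e, z_j^e \rangle = \overline a$, so $a$ is real. Then the first coupling term $z_c^e - a z_j^e$ in $(\star)$ vanishes by construction, while the coefficient $\langle z_j^e, z_c^e \rangle - \langle z_c^e, z_j^e \rangle = \overline a - a = 0$ of the $\kappa_1$ term vanishes by reality of $a$; hence $(\star)$ holds and $(z_j^e, 0)$ is an equilibrium.

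I expect the only real obstacle to be careful bookkeeping of the sesquilinear convention — consistently using that $\langle\cdot,\cdot\rangle$ is conjugate-linear in the first slot so that $\langle z_j^e, z_c^e\rangle = \overline{\langle z_c^e, z_j^e\rangle}$, and hence that the $\kappa_1$-coefficient is purely imaginary. Once this is in hand, the unit-norm constraint does all the remaining work by collapsing the self-consistency for $\mu$ to the reality of $a$; the standing assumption $\kappa_0 > 0$ is what guarantees the prefactor $1 + \kappa_1/\kappa_0$ cannot vanish.
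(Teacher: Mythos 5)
Your proof is correct and takes essentially the same route as the paper's: both directions hinge on pairing the balance relation $(\star)$ with $z_j^e$, using $\|z_j^e\|=1$ and the positivity of $\kappa_0+\kappa_1$ to force $\langle z_c^e, z_j^e\rangle$ to be real, and then reading off $z_c^e = \langle z_c^e, z_j^e\rangle z_j^e$. The only difference is organizational — the paper projects onto $z_j^e$ first to kill the skew term $(\kappa_0+\kappa_1)\big(\langle z_j^e,z_c^e\rangle-\langle z_c^e,z_j^e\rangle\big)=0$ and then substitutes back, whereas you first rearrange to $z_c^e=\mu z_j^e$ and close the self-consistency for $\mu$; the algebra is identical.
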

\begin{proof} 
$(\Longrightarrow$ part):~Suppose $\{(z^e_j, w_j^e) \}$ is an equilibrium state. Then, it satisfies
\begin{equation}\label{C-5}
w_j^e = 0, \quad 0 = \kappa_0(z_c^e-\langle{z_c^e, z_j^e}\rangle z_j^e)+\kappa_1(\langle{z_j^e, z_c^e}\rangle-\langle{z_c^e, z_j^e}\rangle) z_j^e.
\end{equation}
Now, we use the relation $\|z_j^e \| = 1$ to see
\begin{equation*}
0 = \langle z_j^e, \eqref{C-5}_2 \rangle = (\kappa_0 + \kappa_1) \Big( \langle{z_j^e, z_c^e}\rangle-\langle{z_c^e, z_j^e}\rangle \Big).
\end{equation*}
Since $\kappa_0 > 0$ and $\kappa_1 \geq 0$, one has 
\begin{align}\label{C-7}
0 = \langle{z_j^e, z_c^e}\rangle-\langle{z_c^e, z_j^e}\rangle.
\end{align}
Then, we substitute \eqref{C-7} into $\eqref{C-5}_2$ and  use $\kappa_0 > 0$ to get
\[    z_c^e = \langle{z_c^e,z_j^e}\rangle z_j^e. \]

\vspace{0.5cm}

\noindent $(\Longleftarrow$ part):~Suppose that a constant state $(z^e_j, w^e_j)$ satisfies relations:
\begin{equation} \label{C-8}
z^e_c = \langle{z^e_c,z^e_j}\rangle z^e_j \quad \forall~t \geq 0 \quad \mbox{and} \quad w^e_j = 0, \quad j = 1, \cdots, N.
\end{equation}
We use the relation $\eqref{C-8}_1$ and $\|z_j^e \| = 1$ to find
\begin{align*}
\begin{aligned}
& \left\langle{ z^e_j, z^e_c }\right\rangle -\left\langle{ z^e_c, z^e_j }\right\rangle \\
& \hspace{0.5cm} = \Big \langle z^e_j,  \langle{z^e_c,z^e_j}\rangle z^e_j \Big \rangle - \Big \langle  \langle{z^e_c,z^e_j}\rangle z^e_j, z_j^e \Big \rangle =  \langle z^e_c, z^e_j \rangle \langle z^e_j, z^e_j \rangle - \overline{\langle z^e_c, z^e_j \rangle}  \langle z^e_j, z^e_j \rangle \\
&  \hspace{0.5cm} = \langle z^e_c, z^e_j \rangle \langle z^e_j, z^e_j \rangle - \langle z^e_j, z^e_c \rangle  \langle z^e_j, z^e_j \rangle = -\Big( \left\langle{ z^e_j, z^e_c }\right\rangle -\left\langle{ z^e_c, z^e_j }\right\rangle \Big).
\end{aligned}
\end{align*}
This yields
\begin{equation} \label{C-10}
\left\langle{ z^e_j, z^e_c }\right\rangle -\left\langle{ z^e_c, z^e_j }\right\rangle  = 0, \quad \forall~t \geq 0.
\end{equation}
Finally, the relations $\eqref{C-8}$ and \eqref{C-10} satisfy the equilibrium system \eqref{C-3}.
\end{proof}
Next, we introduce an order parameter $\rho$ which measures the degree of aggregation. For a given configuration $\{(z_j, w_j) \}$, we set 
\begin{equation} \label{C-10-1}
\rho := \Big \| \frac{1}{N} \sum_{j} z_j \Big \|, \qquad \rho^\infty := \lim_{t \to \infty} \rho(t) \quad \mbox{if it does exist}.
\end{equation}
Then, $\rho = 0, 1$ denote the incoherent state and completely aggregated state, respectively. \newline

As a corollary of Proposition \ref{P3.1}, we show that equilibrium with positive $\rho$ is either completely aggregated state or a bi-polar state.
\begin{corollary} \label{C3.1}
For $d = 0$, let $(z_j^e, w_j^e)$ be an equilibrium solution with $\rho > 0$ and $|z_j^e| =1$. Then, one has 
\[    \frac{z_j^e}{z_c^e} \in \bbr. \]
\end{corollary}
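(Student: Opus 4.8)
The plan is to specialize the equilibrium characterization of Proposition \ref{P3.1} to the scalar case $d = 0$ and then to solve the resulting algebraic relation by hand. By Proposition \ref{P3.1}, an equilibrium with $\rho > 0$ satisfies $w_j^e = 0$ together with
\[ z_c^e = \langle z_c^e, z_j^e \rangle z_j^e, \qquad j = 1, \cdots, N. \]
When $d = 0$ each $z_j^e$ is a single complex number of modulus one, so the inner product collapses to $\langle z_c^e, z_j^e \rangle = \overline{z_c^e}\, z_j^e$, and the equilibrium relation rewrites as $z_c^e = \overline{z_c^e}\, (z_j^e)^2$.

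Next I would invoke the hypothesis $\rho > 0$. Since $\rho = |z_c^e| > 0$, the scalar $z_c^e$ is nonzero and I may divide to obtain $(z_j^e)^2 = z_c^e / \overline{z_c^e}$. Writing $z_c^e = \rho e^{i\phi}$, the right-hand side equals $e^{2i\phi}$, a fixed unit-modulus number that is independent of $j$. Hence $z_j^e = \pm e^{i\phi}$ for every index $j$, which is precisely the bi-polar alternative: each particle sits either at the phase of the centroid or diametrically opposite to it.

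Finally, dividing by $z_c^e = \rho e^{i\phi}$ gives $z_j^e / z_c^e = \pm 1/\rho \in \bbr$, the asserted conclusion. I do not expect a genuine obstacle here; the one place demanding care is the conjugate-linear convention for $\langle \cdot, \cdot \rangle$, since a stray conjugation would corrupt the key identity $(z_j^e)^2 = z_c^e/\overline{z_c^e}$. It is also worth flagging that positivity of $\rho$ is exactly the ingredient that licenses the division, so the hypothesis $\rho > 0$ is genuinely used rather than decorative.
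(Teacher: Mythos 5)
Your proposal is correct and follows essentially the same route as the paper: both invoke Proposition \ref{P3.1}, use $\rho = |z_c^e| > 0$ to pass to the polar form $z_c^e = \rho e^{\mathrm{i}\phi}$, and reduce the equilibrium relation to $(z_j^e)^2 = e^{2\mathrm{i}\phi}$, whence $z_j^e/z_c^e = \pm 1/\rho \in \bbr$. The only cosmetic difference is that you divide by $\overline{z_c^e}$ directly, while the paper also writes $z_j^e = e^{\mathrm{i}\theta_j}$ and compares exponents; the conjugate-linearity convention is handled correctly in your argument.
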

\begin{proof}
 Let $\{(z_j^e, w_j^e) \}$ be an equilibrium state with $\rho > 0$ and $|z_j^e| =1$. Then, by Proposition \ref{P3.1}, one has 
\begin{equation} \label{C-11}
z_c^e=\langle z_c^e, z_j^e\rangle z_j^e, \quad j = 1, \cdots, N. 
\end{equation}
On the other hand, since $\rho = |z_c^e| >0$ we write down $z_c$ and $z_j$ as polar forms:
\begin{equation} \label{C-12}
z_j^e=e^{\mathrm{i}\theta_j} \quad \mbox{and} \quad z_c^e=\rho e^{\mathrm{i}\phi}, \quad \rho > 0.
\end{equation}
Now, we substitute \eqref{C-12} into \eqref{C-11} to get 
\[
\rho e^{\mathrm{i}\phi}=\rho e^{\mathrm{i}(2\theta_j-\phi)}.
\]
This yields
\[
e^{2\mathrm{i}\phi}=e^{2\mathrm{i}\theta_j}, \quad j = 1, \cdots, N.
\]
Hence, one has
\[
\mbox{either}~\theta_j = \phi \quad\text{or}\quad \theta_j =  \phi+\pi, \quad j = 1, \cdots, N.
\]
Thus,
\[ \frac{z_j^e}{z_c^e} = \frac{1}{\rho} e^{{\mathrm i} (\theta_j - \phi)} \in \Big \{ \frac{1}{\rho},~-\frac{1}{\rho}  \Big \}.     \]
\end{proof}

\subsection{Instability of two distinguished states} \label{sec:3.2}
In this subsection, we study linear instabilities of two distinguished state ``{\it bi-polar state and incoherence state $(\rho = 0)$}". 
\begin{align}\label{D-32}
\begin{cases}
\displaystyle {\dot z}_j = w_j, \\
\displaystyle \dot{w}_j = -\frac{\gamma}{m} w_j + \frac{\kappa_0}{m} \big(z_c -\left\langle{ z_c, z_j }\right\rangle z_j \big) + \frac{\kappa_1}{m} \big( \left\langle{ z_j, z_c }\right\rangle -\left\langle{ z_c, z_j }\right\rangle \big) z_j -\| w_j \|^2 z_j.
\end{cases}
\end{align}
In the sequel, we consider $z_j$ and $w_j$ as real vectors in $\bbr^{2d+2}$. In other words, let $x^\alpha_j, y^\alpha_j, a^\alpha_j, b^\alpha_j \in \bbr$ be given as follows:
\begin{align} \label{C-1-1}
z^\alpha_j = x^\alpha_j +\mathrm iy^\alpha_j, \quad w^\alpha_j = a^\alpha_j +\mathrm ib^\alpha_j, \quad j = 1, \cdots, N, \quad \alpha = 1, \cdots, d+1,
\end{align}
where $z_j^\alpha$ and $w_j^\alpha$ are $a$-th component of $z_j$ and $w_j$, respectively. We rewrite \eqref{D-32} using \eqref{C-1-1}:
\begin{align*}
\begin{cases}
\dot x_j = a_j, \quad \dot y_j = b_j, \vspace{.2cm} \\
\displaystyle \dot a_j = -\frac{\gamma}{m}a_j +\frac{\kappa_0}{m} \big[ x_c -\big( \langle x_c, x_j \rangle +\langle y_c, y_j \rangle \big) x_j +\big( \langle x_c, y_j \rangle -\langle y_c, x_j \rangle \big) y_j \big]\vspace{.2cm} \\
\displaystyle \hspace{.7cm} -\frac{2\kappa_1}{m} \big( \langle y_c, x_j \rangle -\langle x_c, y_j \rangle \big) y_j -\big( \|a_j\|^2 +\|b_j\|^2 \big) x_j, \vspace{.2cm} \\
\displaystyle \dot b_j = -\frac{\gamma}{m} b_j +\frac{\kappa_0}{m} \big[ y_c -\big( \langle x_c, x_j \rangle +\langle y_c, y_j \rangle \big) y_j -\big( \langle x_c, y_j \rangle -\langle y_c, x_j \rangle \big) x_j \big] \vspace{.2cm} \\
\displaystyle \hspace{.7cm} +\frac{2\kappa_1}{m} \big( \langle y_c, x_j \rangle -\langle x_c, y_j \rangle \big) x_j -\big( \|a_j\|^2 +\|b_j\|^2 \big) y_j,
\end{cases}
\end{align*}
For stability analysis, we also define
\begin{align*}
\mathcal I := (x_1, \cdots, x_N, y_1, \cdots, y_N, a_1, \cdots, a_N, b_1, \cdots, b_N) = (c_1, \cdots, c_{4N}) \in \bbr^{4(d+1)N},
\end{align*}
and consider the following Jacobian matrix at equilibrium $\mathcal I^e$:
\begin{align*}
\mathcal{M}:=\frac{\partial\dot{\mathcal{I}}}{\partial\mathcal{I}} \bigg|_{\mathcal I = \mathcal I^e} = (\mathcal M_{ij})_{1\leq i,j \leq 4}, \quad \mathcal M_{ij} := \frac{\partial (\dot c_{(i-1)N+1}, \cdots, \dot c_{(i-1)N+N} )}{\partial (c_{(j-1)N+1}, \cdots, c_{(j-1)N+N})} \bigg|_{\mathcal I = \mathcal I^e}.
\end{align*}
By direct calculations, one has 
\begin{align*}
& \mathcal M_{11} = \mathcal M_{12} = \mathcal M_{14} = \mathcal M_{21} = \mathcal M_{21} = \mathcal M_{23} = \mathcal M_{34} = \mathcal M_{43} = O_{(d+1)N}, \\
& \mathcal M_{13} = \mathcal M_{24} = I_{(d+1)N}, \quad \mathcal M_{33} = \mathcal M_{44} = -\frac{\gamma}{m} I_{(d+1)N},
\end{align*}
where we used $w_j = 0$ at equilibrium to calculate $\mathcal M_{33}$ and $\mathcal M_{44}$. Hence, $\mathcal M$ has following form:
\begin{align*}
\mathcal{M}:=\frac{\partial\dot{\mathcal{I}}}{\partial\mathcal{I}} = 
\left(\begin{matrix}
O_{2(d+1)N} & I_{2(d+1)N} \vspace{.2cm} \\
\mathcal M_s & -\frac{\gamma}{m}I_{2(d+1)N}
\end{matrix}\right), \quad \mathcal M_s = \left(\begin{matrix}
\mathcal M_{31} & \mathcal M_{32} \vspace{.1cm} \\
\mathcal M_{41} & \mathcal M_{42}
\end{matrix}\right),
\end{align*}
We use the fact that
\begin{align*}
\left|\begin{matrix}
A & B \\
C & D
\end{matrix}\right| = \det(A-BD^{-1}C)\det(D)
\end{align*}
to observe the relation between eigenvalues of $\mathcal{M}$ and $\mathcal M_s$:
\begin{align*}
\det\left(\mathcal{M}-\lambda I_{4(d+1)N}\right) =
\left|\begin{matrix}
-\lambda I_{2(d+1)N} & I_{2(d+1)N}\\
\mathcal M_s & -\left(\frac{\gamma}{m}+\lambda\right)I_{2(d+1)N}
\end{matrix}\right| = \det\left(\lambda\left(\frac{\gamma}{m}+\lambda\right)I_{2(d+1)N} -\mathcal M_s \right).
\end{align*}
It follows from the above equation that if $\lambda_0$ is the eigenvalue of $\mathcal M_s$, then $\lambda$ satisfying
\begin{align*}
\lambda_0 = \lambda\bigg( \frac{\gamma}{m} +\lambda \bigg)
\end{align*}
is also an eigenvalue of $\mathcal{M}$. \newline

Suppose that $\mathcal M_s$ has an eigenvalue $\lambda_p$, whose real part is positive. Then, one can see
\begin{align*}
\mbox{Re} \lambda_p = \mbox{Re}\lambda\bigg( \frac{\gamma}{m} +\mbox{Re}\lambda \bigg) -( \mbox{Im}\lambda )^2 \iff \mbox{Re}\lambda = \frac{-\gamma \pm \sqrt{\gamma^2+4m\big( \mbox{Re} \lambda_p +( \mbox{Im}\lambda )^2 \big)}}{2m},
\end{align*}
which implies that $\mathcal M$ has an eigenvalue which has positive real part. More precisely,  one has
\[
\frac{-\gamma +\sqrt{\gamma^2+4m\big( \mbox{Re} \lambda_p +( \mbox{Im}\lambda )^2 \big)}}{2m} > 0.
\]
Hence, we need further estimates on $\mathcal M_s$. We calculate components of $\mathcal M_s$ one by one. 
\begin{align*}
& \mathcal M_{31} = (A^1_{jk})_{j, k}, \quad \mathcal M_{32} = (A^2_{jk})_{j, k}, \quad \mathcal M_{41} = (A^3_{jk})_{j, k}, \quad \mathcal M_{42} = (A^4_{jk})_{j, k}, \quad 1\leq j,k \leq N, \\
& A^1_{jk} := \frac{\partial \dot a_j}{\partial x_k} = \bigg( \frac{\partial \dot a_j^\alpha}{\partial x_k^\beta} \bigg)_{\alpha, \beta}, \quad A^2_{jk} := \frac{\partial \dot a_j}{\partial y_k} = \bigg( \frac{\partial \dot a_j^\alpha}{\partial y_k^\beta} \bigg)_{\alpha, \beta}, \\
&  A^3_{jk} := \frac{\partial \dot b_j}{\partial x_k} = \bigg( \frac{\partial \dot b_j^\alpha}{\partial x_k^\beta} \bigg)_{\alpha, \beta}, \quad A^4_{jk} := \frac{\partial \dot b_j}{\partial y_k} = \bigg( \frac{\partial \dot b_j^\alpha}{\partial y_k^\beta} \bigg)_{\alpha, \beta}, \quad 1 \leq \alpha, \beta \leq d+1.
\end{align*}

\vspace{.2cm}

\noindent More precisely, we have
\begin{align*}
\frac{\partial \dot a_j^\alpha}{\partial x_k^\beta} & = \frac{\kappa_0}{m}\frac{\partial}{\partial x_k^\beta} \Big( x_c^\alpha -\big( \langle x_c, x_j \rangle +\langle y_c, y_j \rangle \big) x_j^\alpha +\big( \langle x_c, y_j \rangle -\langle y_c, x_j \rangle \big) y_j^\alpha \Big) \\
& \hspace{.3cm} -\frac{2\kappa_1}{m} \frac{\partial}{\partial x_k^\beta} \Big( \big( \langle y_c, x_j \rangle -\langle x_c, y_j \rangle \big) y_j^\alpha \Big) -\frac{\partial}{\partial x_k^\beta} \Big( \big( \|a_j\|^2 +\|b_j\|^2 \big) x_j^\alpha \Big) \\
& = \frac{\kappa_0}{m} \bigg( \frac{\delta_{\alpha\beta}}{N} -\bigg( \bigg\langle \frac{e^\beta}{N}, x_j \bigg\rangle +\langle x_c, \delta_{jk}e^\beta \rangle \bigg) x_j^\alpha -(\langle x_c, x_j \rangle +\langle y_c, y_j \rangle) \frac{\partial x_j^\alpha}{\partial x_k^\beta} \\
& \hspace{0.3cm} +\bigg( \bigg\langle \frac{e^\beta}{N}, y_j \bigg\rangle -\langle y_c, \delta_{jk}e^\beta \rangle \bigg) y_j^\alpha \bigg) -\frac{2\kappa_1}{m} \bigg( \bigg( \langle y_c, \delta_{jk}e^\beta \rangle -\bigg\langle \frac{e^\beta}{N}, y_j \bigg\rangle \bigg) y_j^\alpha \bigg) \\
& = \frac{\kappa_0}{m} \bigg( \frac{\delta_{\alpha\beta}}{N} -\bigg( \frac{x_j^\beta}{N} +\delta_{jk}x_c^\beta \bigg) x_j^\alpha -(\langle x_c, x_j \rangle +\langle y_c, y_j \rangle) \frac{\partial x_j^\alpha}{\partial x_k^\beta} +\bigg( \frac{y_j^\beta}{N} -\delta_{jk}y_c^\beta \bigg) y_j^\alpha \bigg) \\
& \hspace{.3cm} -\frac{2\kappa_1}{m} \bigg( \bigg( \delta_{jk}y_c^\beta -\frac{y_j^\beta}{N} \bigg) y_j^\alpha \bigg).
\end{align*}
Similarly, one can see
\begin{align*}
\frac{\partial \dot a_j^\alpha}{\partial y_k^\beta} &= \frac{\kappa_0}{m}\frac{\partial}{\partial y_k^\beta} \Big( x_c^\alpha -\big( \langle x_c, x_j \rangle +\langle y_c, y_j \rangle \big) x_j^\alpha +\big( \langle x_c, y_j \rangle -\langle y_c, x_j \rangle \big) y_j^\alpha \Big) \\
& \hspace{.3cm} -\frac{2\kappa_1}{m} \frac{\partial}{\partial y_k^\beta} \Big( \big( \langle y_c, x_j \rangle -\langle x_c, y_j \rangle \big) y_j^\alpha \Big) -\frac{\partial}{\partial y_k^\beta} \Big( \big( \|a_j\|^2 +\|b_j\|^2 \big) x_j^\alpha \Big) \\
& = -\frac{\kappa_0}{m} \bigg( \bigg( \bigg\langle \frac{e^\beta}{N}, y_j \bigg\rangle +\langle y_c, \delta_{jk}e^\beta \rangle \bigg) x_j^\alpha \\
& \hspace{0.3cm}  -\bigg( \langle x_c, \delta_{jk}e^\beta \rangle -\bigg\langle \frac{e^\beta}{N}, x_j \bigg\rangle \bigg) y_j^\alpha -\big( \langle x_c, y_j \rangle -\langle y_c, x_j \rangle \big) \frac{\partial y_j^\alpha}{\partial y_k^\beta} \bigg) \\
& \hspace{.3cm} -\frac{2\kappa_1}{m} \bigg( \bigg( \bigg\langle \frac{e^\beta}{N}, x_j \bigg\rangle -\langle x_c, \delta_{jk}e^\beta \rangle \bigg) y_j^\alpha +\big( \langle y_c, x_j \rangle -\langle x_c, y_j \rangle \big) \frac{\partial y_j^\alpha}{\partial y_k^\beta} \bigg) \\
& = -\frac{\kappa_0}{m} \bigg( \bigg( \frac{y_j^\beta}{N} +\delta_{jk}y_c^\beta \bigg) x_j^\alpha -\bigg( \delta_{jk}x_c^\beta -\frac{x_j^\beta}{N} \bigg) y_j^\alpha -\big( \langle x_c, y_j \rangle -\langle y_c, x_j \rangle \big) \frac{\partial y_j^\alpha}{\partial y_k^\beta} \bigg) \\
& \hspace{.3cm} -\frac{2\kappa_1}{m} \bigg( \bigg( \frac{x_j^\beta}{N} -\delta_{jk}x_c^\beta \bigg) y_j^\alpha +\big( \langle y_c, x_j \rangle -\langle x_c, y_j \rangle \big) \frac{\partial y_j^\alpha}{\partial y_k^\beta} \bigg).
\end{align*}
In the same way, we can observe
\begin{align*}
\frac{\partial \dot b_j^\alpha}{\partial x_k^\beta} &= -\frac{\kappa_0}{m} \bigg( \bigg( \frac{x_j^\beta}{N} +\delta_{jk}x_c^\beta \bigg) y_j^\alpha -\bigg( \delta_{jk}y_c^\beta -\frac{y_j^\beta}{N} \bigg) x_j^\alpha -\big( \langle y_c, x_j \rangle -\langle x_c, y_j \rangle \big) \frac{\partial x_j^\alpha}{\partial x_k^\beta} \bigg) \\
& \hspace{.3cm} -\frac{2\kappa_1}{m} \bigg( \bigg( \frac{y_j^\beta}{N} -\delta_{jk}y_c^\beta \bigg) x_j^\alpha +\big( \langle x_c, y_j \rangle -\langle y_c, x_j \rangle \big) \frac{\partial x_j^\alpha}{\partial x_k^\beta} \bigg), \\
\frac{\partial \dot b_j^\alpha}{\partial y_k^\beta} &= \frac{\kappa_0}{m} \bigg( \frac{\delta_{\alpha\beta}}{N} -\bigg( \frac{y_j^\beta}{N} +\delta_{jk}y_c^\beta \bigg) y_j^\alpha -(\langle x_c, x_j \rangle +\langle y_c, y_j \rangle) \frac{\partial y_j^\alpha}{\partial y_k^\beta} +\bigg( \frac{x_j^\beta}{N} -\delta_{jk}x_c^\beta \bigg) x_j^\alpha \bigg) \\
& \hspace{.3cm} -\frac{2\kappa_1}{m} \bigg( \bigg( \delta_{jk}x_c^\beta -\frac{x_j^\beta}{N} \bigg) x_j^\alpha \bigg).
\end{align*}

\vspace{.2cm}

In what follows, we study stability of two distinguished states. \newline

\noindent$\bullet$ (Instability of an incoherence state): Since the trace of a matrix is equal to the sum of its eigenvalues, we observe
\begin{align*}
 \mbox{Tr} \mathcal M_s &= \mbox{Tr} \mathcal M_{31} +\mbox{Tr} \mathcal M_{42} = \sum_{j = 1}^N \mbox{Tr} A^1_{jj} +\sum_{j = 1}^N \mbox{Tr} A^4_{jj} = \sum_{j = 1}^N \sum_{\alpha = 1}^{d+1} \frac{\partial \dot a_j^\alpha}{\partial x_j^\alpha} +\sum_{j = 1}^N \sum_{\alpha = 1}^{d+1} \frac{\partial \dot b_j^\alpha}{\partial y_j^\alpha} \\
& = \frac{\kappa_0}{m} \sum_{j=1}^N\sum_{\alpha=1}^{d+1} \bigg( \frac{1}{N} -\bigg( \frac{x_j^\alpha}{N} +x_c^\alpha \bigg) x_j^\alpha +\langle x_c, x_j \rangle +\bigg( \frac{y_j^\alpha}{N} -y_c^\alpha \bigg) y_j^\alpha \bigg) \\
& \hspace{.2cm} -\frac{2\kappa_1}{m} \sum_{j=1}^N\sum_{\alpha=1}^{d+1} \bigg( \bigg( y_c^\alpha -\frac{y_j^\alpha}{N} \bigg) y_j^\alpha \bigg) \\
& \hspace{.2cm} +\frac{\kappa_0}{m} \sum_{j=1}^N\sum_{\alpha=1}^{d+1} \bigg( \frac{1}{N} -\bigg( \frac{y_j^\alpha}{N} +y_c^\alpha \bigg) y_j^\alpha +\langle y_c, y_j \rangle +\bigg( \frac{x_j^\alpha}{N} -x_c^\alpha \bigg) x_j^\alpha \bigg) \\
& \hspace{.2cm} -\frac{2\kappa_1}{m} \sum_{j=1}^N\sum_{\alpha=1}^{d+1} \bigg( \bigg( x_c^\alpha -\frac{x_j^\alpha}{N} \bigg) x_j^\alpha \bigg) \\
& = \frac{\kappa_0}{m} \sum_{j=1}^N\sum_{\alpha=1}^{d+1} \bigg( \frac{1}{N} -\frac{\big(x_j^\alpha\big)^2}{N} +\frac{\big( y_j^\alpha \big)^2}{N} \bigg) +\frac{2\kappa_1}{m} \sum_{j=1}^N\sum_{\alpha=1}^{d+1} \frac{\big(y_j^\alpha\big)^2}{N} \\
& \hspace{.2cm} +\frac{\kappa_0}{m} \sum_{j=1}^N\sum_{\alpha=1}^{d+1} \bigg( \frac{1}{N} -\frac{\big(y_j^\alpha\big)^2}{N} +\frac{\big(x_j^\alpha\big)^2}{N} \bigg) +\frac{2\kappa_1}{m} \sum_{j=1}^N\sum_{\alpha=1}^{d+1} \frac{\big( x_j^\alpha\big)^2}{N} \\
& = \frac{2(d+1)\kappa_0}{m} +\frac{2\kappa_1}{m} > 0,
\end{align*}
where we used
\begin{align*}
x_c = y_c = 0.
\end{align*}
Hence, $\mathcal M_s$ has at least one eigenvalue whose real part is positive and so does $\mathcal{M}$. Therefore, we can conclude that the incoherence state is unstable.

\vspace{.2cm}

\noindent$\bullet$ (Instability of bi-polar state): Suppose that there exists a point $z$ and integer $n$ such that
\begin{align*}
\|z\| = 1,\quad 1\leq n < \left\lfloor \frac{N}{2} \right\rfloor,\quad z_i=-z, \quad z_j = z, \quad 1\leq i \leq n, \quad n+1 \leq j \leq N.
\end{align*}
Without loss of generality, we can assume that $z = z^\infty = (0,\cdots,1)$ using the rotational symmetry of the LHS with inertia. Then, we have 
\begin{align*}
z_c = \left(0,\cdots,0,\frac{N-2n}{N}\right).
\end{align*}
Then, further calculation yields
\[
\frac{\partial \dot a_j^\alpha}{\partial x_k^\beta} = \frac{\kappa_0}{m} \bigg( \frac{\delta_{\alpha\beta}}{N} -\bigg( \frac{x_j^\beta}{N} +\delta_{jk}x_c^\beta \bigg) x_j^\alpha -\langle x_c, x_j \rangle \frac{\partial x_j^\alpha}{\partial x_k^\beta}  \bigg) \quad \mbox{and} \quad \frac{\partial \dot b_j^\alpha}{\partial x_k^\beta} = 0.
\]
We observe $(n+1)(d+1)$-th column of $\mathcal M_s$:
\begin{align*}
\mathcal M_s \tilde e_{(n+1)(d+1)} &= \left(\left(A^1_{1,n+1}\right)_{1,d+1},\cdots,\left(A^1_{N,n+1}\right)_{d+1,d+1}, 0, \cdots, 0\right)^\top = \frac{2\kappa_0(N-2n)}{mN} \tilde e_{(n+1)(d+1)},
\end{align*}
where $\{ \tilde e_\alpha \}_{\alpha = 1}^{2(d+1)N}$ is a standard basis on $\bbr^{2(d+1)N}$.
%
Since $n$ is smaller than $\lfloor N/2 \rfloor$, we have positive eigenvalue. Therefore, we can conclude that the bipolar state is unstable.

\section{Presentation of main results}\label{sec:4}
\setcounter{equation}{0}
In this section, we briefly summarize frameworks for the emergent dynamics of the second-order extension of the first-order LHS model.

\subsection{Complete aggregation} \label{sec:4.1}
In this subsection, we present an emergent dynamics of the homogeneous ensemble with the same natural frequency matrix $\Omega_j = \Omega$. For this, we set 
\begin{equation*}
\begin{cases} 
\displaystyle  h_{ij} := \langle{ z_i, z_j \rangle}, \quad g_{ij} := 1-h_{ij}, \quad 1 \leq i,j \leq N, \vspace{.2cm} \\
\displaystyle {\mathcal G} := \frac{1}{N^2}\sum_{i, j=1}^N | g_{ij} |^2, \quad {\mathcal R}_1(\dot Z) := \max_{j} \| \dot{z}_j \|^2, \quad {\mathcal R}_2(Z) := \max_{j} | \langle{ z_c, z_{j} \rangle} -\langle{ z_{j}, z_c \rangle} |^2, \vspace{.2cm} \\
\displaystyle M_1 := \max \bigg\{ \|w_1^{in} \|, \cdots, \|w_N^{in} \|, \frac{2(\kappa_0 +\kappa_1)}{\gamma} \bigg\}, \quad \nu_1 := \frac{\gamma +\sqrt{\gamma^2 -16m\kappa_0\delta}}{2m}.
\end{cases}
\end{equation*}
Then, it is easy to see that 
\[ |h_{ij}| \leq 1, \quad |g_{ij}| \leq 2, \quad \overline{h_{ij}} = h_{ji} \quad \mbox{and} \quad \overline{g_{ij}} = g_{ji}, \quad 1 \leq i, j \leq N. \]

Now, we set up two sufficient frameworks for complete synchronization. For a fixed $\delta \in (0,1)$, our first framework is given as follows.

\begin{itemize}
\item
$(\mathcal{F}_A1)$: System parameters $m$, $\gamma$, $\kappa_0$ and $\delta$ satisfy
\[
\gamma^2 -16m\kappa_0\delta > 0, \quad m, \gamma, \kappa_0 > 0, \quad \kappa_1 \geq 0.
\]
\item
$(\mathcal{F}_A2)$: Initial data satisfy
\[
{\mathcal G}(0) < \frac{8\kappa_1 +16mM_1^2}{4\kappa_0\delta} < \frac{(1-\delta)^2}{N}, \quad \dot{{\mathcal G}}(0) +\nu_1 {\mathcal G}(0) < \frac{\nu_1(8\kappa_1 +16mM_1^2)}{4\kappa_0\delta}.
\]
\end{itemize}

And also, our second framework is given as follows. \newline

\begin{itemize}
\item
$(\mathcal{F}_B1)$: System parameters $m$, $\gamma$, $\kappa_0$ and $\delta$ satisfy
\[
\gamma^2 -16m\kappa_0\delta < 0, \quad m, \gamma >0, \quad \kappa_1 \geq 0.
\]
\item
$(\mathcal{F}_B2)$: Initial data satisfy
\[
\mathcal G(0) < \frac{4m}{\gamma^2}(8\kappa_1 +16mM_1^2) < \frac{(1-\delta)^2}{N}, \quad \dot{{\mathcal G}}(0) +\frac{\gamma}{2m} {\mathcal G}(0) < \frac{2}{\gamma}(8\kappa_1 +16mM_1^2).
\]
\end{itemize}

Our first main result is concerned with the complete aggregation of a homogeneous ensemble.
\begin{theorem} \label{T4.1}
Suppose that the sufficient frameworks $(\mathcal F_A1)$-$(\mathcal F_A2)$ or $(\mathcal F_B1)$-$(\mathcal F_B2)$ hold. Moreover, assume that initial data and natural frequency satisfy
\[
\|z_j^{in}\| = 1, \quad \Omega_j = 0, \quad j=1,\cdots,N.
\]
Let $\{ z_j \}$ be the global solution of \eqref{A-1}. Then, we have
\begin{equation*}
\lim_{t \to \infty} {\mathcal G}(t) = 0, \quad \mbox{i.e.,} \quad  \lim_{t \to \infty} h_{ij}(t) = 1, \quad \forall~1 \leq i,j \leq N.
\end{equation*}
\end{theorem}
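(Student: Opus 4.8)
The plan is to collapse the $N$-body second-order system into a single scalar second-order differential inequality for $\mathcal{G}$ and then run a comparison argument; the two frameworks $(\mathcal{F}_A1)$--$(\mathcal{F}_A2)$ and $(\mathcal{F}_B1)$--$(\mathcal{F}_B2)$ will correspond to the over-damped and under-damped regimes of the characteristic polynomial $m\lambda^2 + \gamma\lambda + 4\kappa_0\delta$. First I would use that $e^{\frac{t}{\gamma}\Omega}$ is unitary, so $h_{ij} = \langle z_i, z_j\rangle = \langle u_i, u_j\rangle$ and $\mathcal{G}$ is invariant under the common rotation; by Lemma \ref{L2.2} this lets me set $\Omega = 0$ and work with \eqref{C-1}. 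I would then record the uniform velocity bound $\|\dot z_j(t)\| \le M_1$: differentiating $\|\dot z_j\|^2$ and invoking Lemma \ref{L2.1} (so that $\mathrm{Re}\langle \dot z_j, z_j\rangle = 0$) makes the constraint term $-m\|\dot z_j\|^2 z_j$ drop out, the coupling is bounded by $2(\kappa_0+\kappa_1)$, and a scalar comparison gives $\|\dot z_j\| \le \max\{\|w_j^{in}\|, 2(\kappa_0+\kappa_1)/\gamma\} \le M_1$, i.e. $\mathcal{R}_1(\dot Z) \le M_1^2$.

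Next I would derive the differential inequality. Writing $\mathcal{G} = \frac{1}{N^2}\sum_{i,j} g_{ij}\overline{g_{ij}}$ and differentiating twice, I substitute $m\ddot z_j = -\gamma\dot z_j + \kappa_0(z_c - \langle z_c, z_j\rangle z_j) + \kappa_1(\cdots)z_j - m\|\dot z_j\|^2 z_j$ into $m\ddot{\mathcal{G}} + \gamma\dot{\mathcal{G}}$, so that the friction and inertia terms recombine and only coupling contributions survive. A first-order computation already shows that the principal coupling produces $\dot g_{ij} = -\kappa_0(\langle z_c, z_j\rangle + \langle z_i, z_c\rangle)g_{ij} + (\text{lower order})$, hence a dissipation of the form $-2\kappa_0\,\mathrm{Re}(\langle z_c, z_j\rangle + \langle z_i, z_c\rangle)|g_{ij}|^2$. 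On the region $\mathcal{G} < (1-\delta)^2/N$, Cauchy--Schwarz gives $\mathrm{Re}\langle z_c, z_j\rangle \ge 1 - \sqrt{N\mathcal{G}} > \delta$ for every $j$, which turns this into the coercive term $-4\kappa_0\delta\mathcal{G}$. The surviving $\kappa_1$ skew-term and the inertial velocity cross-terms $2m\langle \dot z_i, \dot z_j\rangle$ are controlled by $8\kappa_1 + 16m\mathcal{R}_1(\dot Z) \le 8\kappa_1 + 16mM_1^2$, yielding
\[ m\ddot{\mathcal{G}} + \gamma\dot{\mathcal{G}} + 4\kappa_0\delta\mathcal{G} \le 8\kappa_1 + 16mM_1^2 \qquad \text{whenever } \mathcal{G} < \tfrac{(1-\delta)^2}{N}. \]

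I would then use this inequality to trap $\mathcal{G}$. The homogeneous operator $m\lambda^2 + \gamma\lambda + 4\kappa_0\delta$ has roots with negative real part; under $(\mathcal{F}_A1)$ they are real (over-damped, and $\nu_1 = \frac{\gamma+\sqrt{\gamma^2-16m\kappa_0\delta}}{2m}$ is the associated rate), while under $(\mathcal{F}_B1)$ they are complex with real part $-\gamma/2m$. Comparing $\mathcal{G}$ with the solution of the inhomogeneous linear ODE carrying the constant forcing $8\kappa_1 + 16mM_1^2$ --- whose equilibrium $\frac{8\kappa_1 + 16mM_1^2}{4\kappa_0\delta}$ lies below $(1-\delta)^2/N$ by $(\mathcal{F}_A2)$, resp. $(\mathcal{F}_B2)$ --- and using the conditions on $\mathcal{G}(0)$ and on $\dot{\mathcal{G}}(0) + \nu_1\mathcal{G}(0)$ (resp. $\dot{\mathcal{G}}(0) + \frac{\gamma}{2m}\mathcal{G}(0)$) to preclude overshoot, a standard continuation argument shows $\mathcal{G}(t) < (1-\delta)^2/N$ for all $t \ge 0$. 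Consequently the coercive inequality holds globally in time.

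Finally I would upgrade the constant forcing to a decaying one and conclude. Since $\mathcal{R}_2(Z) \le 4N\mathcal{G}$, the $\kappa_1$ contribution vanishes with $\mathcal{G}$; and because the coupling $z_c - \langle z_c, z_j\rangle z_j$ is controlled by the $g_{ij}$ and hence tends to $0$ with $\mathcal{G}$, the friction drives the velocities to $0$, so that $\mathcal{R}_1(\dot Z) \to 0$. Feeding the genuinely decaying forcing $f(t) = 8\kappa_1\sqrt{\mathcal{R}_2(Z)} + 16m\mathcal{R}_1(\dot Z) \to 0$ into the Gronwall/comparison argument then forces $\mathcal{G}(t) \to 0$, which is the assertion. \emph{The main obstacle is precisely this last step}: the inertia couples the decay of $\mathcal{G}$ to the decay of the velocities (the initial-layer oscillation noted in the abstract obstructs a naive monotonicity argument), so the two statements $\mathcal{R}_1(\dot Z) \to 0$ and $\mathcal{G} \to 0$ cannot be established sequentially and must instead be closed simultaneously through a bootstrap in which the global coercivity bound and the decaying coupling force are iterated against each other.
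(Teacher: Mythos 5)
Your reduction to $\Omega = 0$, the uniform velocity bound $\mathcal{R}_1(\dot Z) \le M_1^2$, the second-order differential inequality with coercivity constant $4\kappa_0\delta$ on the region $\{\mathcal{G} < (1-\delta)^2/N\}$, and the two-regime (over-damped/under-damped) trap-and-continuation argument all coincide with the paper's proof (Lemma \ref{L5.3}, Lemma \ref{L6.2}, and Step A of Section \ref{sec:5.2}). The genuine gap is the final step, which you flag but do not close: your argument for $\mathcal{R}_1(\dot Z) \to 0$ is circular (velocity decay is deduced from decay of the coupling force, which requires $\mathcal{G} \to 0$, the very conclusion), and the ``simultaneous bootstrap'' you invoke is never constructed. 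It is also doubtful it can be constructed in the form you suggest: the only conversion from smallness of $\mathcal{G}$ to smallness of the force is
\[
\big\| z_c - \langle z_c, z_j \rangle z_j \big\| \le \frac{1}{N}\sum_{k=1}^N \sqrt{1 - |h_{kj}|^2} \le \sqrt{2}\,\big(N\mathcal{G}\big)^{1/4},
\]
so pushing a sup bound $G_0$ on $\mathcal{G}$ through the velocity relaxation and back through the Gronwall inequality returns at best $\limsup_t \mathcal{G} \lesssim \frac{\kappa_1 N}{\kappa_0\delta}G_0 + \frac{m\kappa_0}{\gamma^2\delta}\sqrt{N G_0}$; because of the square root, this iteration stalls at a strictly positive fixed point rather than contracting to zero.

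The paper closes the argument with an ingredient absent from your proposal, and --- contrary to your closing claim --- the two decays \emph{are} established sequentially, just in the opposite order: velocity decay comes first, with no smallness of $\mathcal{G}$ whatsoever, via dissipation rather than force smallness. One introduces the energy
\[
\mathcal{E} := \frac{1}{N}\sum_{j=1}^N m\Big(\|\dot z_j\|^2 - \frac{\kappa_0 + 2\kappa_1}{2(\kappa_0+\kappa_1)}\big|\langle z_j, \dot z_j\rangle\big|^2\Big) + \kappa_0\big(1 - \|z_c\|^2\big),
\]
which is nonnegative (Lemma \ref{L5.1}) and satisfies
\[
\frac{d\mathcal{E}}{dt} = -\frac{2\gamma}{N}\sum_{j=1}^N\Big(\|\dot z_j\|^2 - \frac{\kappa_0+2\kappa_1}{2(\kappa_0+\kappa_1)}\big|\langle z_j, \dot z_j\rangle\big|^2\Big) \le 0,
\]
so the dissipation integrand is integrable on $(0,\infty)$; since $z_j$, $\dot z_j$, $\ddot z_j$ are uniformly bounded it is uniformly continuous, and Barbalat's lemma forces it to vanish as $t \to \infty$; since it dominates $\frac{\kappa_0}{2(\kappa_0+\kappa_1)}\|\dot z_j\|^2$, one gets $\|\dot z_j\| \to 0$ (Proposition \ref{P5.1}), under only the positivity assumptions $m, \gamma, \kappa_0 > 0$, $\kappa_1 \ge 0$. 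A second application of Barbalat's lemma (using boundedness of $d^3 z_j/dt^3$) gives $\ddot z_j \to 0$; the equation of motion then forces the entire coupling term to vanish, and projecting onto $z_j$ yields $\langle z_j, z_c \rangle - \langle z_c, z_j \rangle \to 0$, i.e. $\mathcal{R}_2(Z) \to 0$ (this is \eqref{F-13}). Only then is the genuinely decaying forcing $f(t) = 2\kappa_1 \mathcal{R}_2(Z) + 16m\mathcal{R}_1(\dot Z)$ fed into your coercive inequality, and Lemma \ref{L5.4} gives $\mathcal{G} \to 0$. What your proposal is missing is exactly this energy functional and the Barbalat/LaSalle step; once it is in place, no simultaneous bootstrap is needed.
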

\begin{proof} Although the detailed proof can be found in Section \ref{sec:5}, we briefly sketch some ingredients for reader's convenience. Since
\[ g_{ij} +g_{ji} = 2 -\langle z_i, z_j \rangle -\langle z_j, z_i \rangle = \|z_i - z_j \|^2, \]  
one has 
\[ \lim_{t \to \infty} {\mathcal G}(t) = 0 \quad \Longrightarrow \quad  \lim_{t \to \infty} {\mathcal D}(Z(t)) = 0. \]
Thus, it suffices to verify 
\begin{equation} \label{D-2-1}
   \lim_{t\to\infty} {\mathcal G}(t) = 0. 
 \end{equation}  
By straightforward calculation to be performed in next section, we can derive second-order differential inequality for ${\mathcal G}$ in \eqref{C-1}:
\begin{equation*}
m \ddot{{\mathcal G}} +\gamma \dot{{\mathcal G}} +4\kappa_0 {\mathcal G}  \leq 4\kappa_0 \sqrt{N} {\mathcal G}^\frac{3}{2} +2\kappa_1 {\mathcal R}_2(Z) +16m {\mathcal R}_1(\dot Z).
\end{equation*}
Next, we use sufficient frameworks to show the uniform boundedness of $\mathcal G$, which yields
\begin{equation}\label{D-3}
m \ddot{{\mathcal G}} +\gamma \dot{{\mathcal G}} +4\delta\kappa_0 {\mathcal G}  \leq 2\kappa_1 {\mathcal R}_2(Z) +16m {\mathcal R}_1(\dot Z).
\end{equation}
Then, we use the relations in Proposition \ref{P5.1} and \eqref{F-13}:
\begin{equation} \label{D-4}
\lim_{t \to \infty} {\mathcal R}_2(Z(t)) = 0, \quad \lim_{t \to \infty} {\mathcal R}_1({\dot Z}(t)) = 0,
 \end{equation}
and the second-order Gronwall's inequality \eqref{D-3} together with \eqref{D-4} to derive \eqref{D-2-1}. 
\end{proof}
\subsection{Practical aggregation} \label{sec:4.2} 
In this subsection, we first list a framework $(\mathcal{F}_C)$ formulated in terms of system parameters and initial data for a practical synchronization.  \newline

First, we introduce several notation:
\begin{align*}
\begin{aligned}
& {\mathcal R}_3(V) := \max_{j} \| v_j \|, \quad \Omega^{\infty} :=  \max_{j} \|\Omega_j \|_F, \\
& U(m, \Omega^{\infty}, \kappa_0, \kappa_1, \gamma):=4\Omega^\infty +8\kappa_1 +\frac{16m}{\gamma^2}\big[ \Omega^\infty +2(\kappa_0 +\kappa_1) \big]^2.
\end{aligned}
\end{align*} 

\vspace{0.2cm}

Now, we set up a sufficient framework for practical aggregation. For a fixed $\delta \in (0,1)$, our framework is given as follows. \newline
\begin{itemize}
\item
$(\mathcal{F}_C1)$: System parameters $m$, $\gamma$, $\kappa_0$ and $\delta$ satisfy
\[
\gamma^2 -16m\kappa_0\delta > 0, \quad m, \gamma > 0, \quad \kappa_1 \geq 0.
\]
\item
$(\mathcal{F}_C2)$: Initial data satisfy
\[
\begin{cases}
\displaystyle {\mathcal R}_3(V^{in}) < \frac{2}{\gamma}(\kappa_0 +\kappa_1), \quad {\mathcal G}(0) < \frac{1}{4\kappa_0\delta} ~U(m, \Omega^{\infty}, \kappa_0, \kappa_1, \gamma) < \frac{(1-\delta)^2}{N}, \\
\displaystyle \dot{{\mathcal G}}(0) +\nu_1 {\mathcal G}(0) < \frac{\nu_1}{4\kappa_0\delta}~U(m, \Omega^{\infty}, \kappa_0, \kappa_1, \gamma).
\end{cases}
\]
\end{itemize}

\vspace{0.2cm}

\noindent Since practical aggregation is discussed with sufficiently large $\kappa_0$, there is no state about $\kappa_0$ in the framework $\mathcal F_C$. 

Under the above framework, our second result deals with the emergence of practical aggregation for a heterogeneous ensemble.
\begin{theorem} \label{T4.2}
Suppose that the sufficient framework $(\mathcal{F}_C1)$-$({\mathcal F}_C2)$ holds, and let $\{ z_j \}$ be the solution of \eqref{A-1} with $\| z_j^{in} \| = 1$, $j = 1, \cdots, N$. Then, we have a practical aggregation:
\[
\lim_{\kappa_0 \to \infty}\limsup_{t \to \infty} {\mathcal G}(t) = 0.
\]
\end{theorem}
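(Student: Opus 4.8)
The plan is to show that the scalar functional $\mathcal{G}$ obeys a damped second-order differential inequality whose inhomogeneous term is a \emph{constant} $U(m,\Omega^\infty,\kappa_0,\kappa_1,\gamma)$, to extract the asymptotic bound $\limsup_{t\to\infty}\mathcal{G}(t)\lesssim U/(4\kappa_0\delta)$ from the second-order Gronwall lemma (Lemma~\ref{L6.2}), and then to obtain practical aggregation by letting $\kappa_0\to\infty$ along the scaling $m=m_0/\kappa_0^{1+\eta}$. Since $g_{ij}+g_{ji}=\|z_i-z_j\|^2$, smallness of $\mathcal{G}$ already controls the state diameter, so it suffices to estimate $\mathcal{G}$.

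First I would differentiate $g_{ij}=1-\langle z_i,z_j\rangle$ twice in time and insert the dynamics \eqref{A-1}. The $\kappa_0$-coupling produces the restoring term $-4\kappa_0\mathcal{G}$ together with a cubic correction $+4\kappa_0\sqrt{N}\,\mathcal{G}^{3/2}$, exactly as in the homogeneous computation sketched after Theorem~\ref{T4.1}. The essential new feature is that the distinct frequency matrices no longer cancel: the frequency contributions are controlled by $\Omega^\infty=\max_j\|\Omega_j\|_F$ and do \emph{not} decay, while the inertial terms $m\|v_j\|^2$ are controlled through $\mathcal{R}_3(V)$. Collecting everything yields
\begin{equation*}
m\ddot{\mathcal{G}}+\gamma\dot{\mathcal{G}}+4\kappa_0\mathcal{G}\leq 4\kappa_0\sqrt{N}\,\mathcal{G}^{3/2}+(\text{forcing from }\Omega^\infty,\ \kappa_1,\ \text{and velocities}).
\end{equation*}

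Next I would run a continuity (bootstrap) argument to keep $\mathcal{G}$ below the threshold $(1-\delta)^2/N$ for all $t$. The point is that once $\sqrt{N}\,\mathcal{G}^{1/2}\leq 1-\delta$, the cubic term is absorbed, $4\kappa_0\sqrt{N}\,\mathcal{G}^{3/2}\leq 4\kappa_0(1-\delta)\mathcal{G}$, leaving the clean inequality
\begin{equation*}
m\ddot{\mathcal{G}}+\gamma\dot{\mathcal{G}}+4\kappa_0\delta\,\mathcal{G}\leq U(m,\Omega^\infty,\kappa_0,\kappa_1,\gamma),\qquad t>0.
\end{equation*}
To close the bootstrap I must simultaneously propagate a uniform bound on the velocities $\|v_j\|$, started from the hypothesis $\mathcal{R}_3(V^{in})<\frac{2}{\gamma}(\kappa_0+\kappa_1)$ in $(\mathcal{F}_C2)$; a friction-versus-forcing estimate on the $v_j$-equation gives $\|v_j(t)\|\lesssim \frac{1}{\gamma}[\Omega^\infty+2(\kappa_0+\kappa_1)]$, which is exactly the bound producing the inertial contribution $\frac{16m}{\gamma^2}[\Omega^\infty+2(\kappa_0+\kappa_1)]^2$ inside $U$. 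The smallness of $\mathcal{G}(0)$ and of $\dot{\mathcal{G}}(0)+\nu_1\mathcal{G}(0)$ in $(\mathcal{F}_C2)$ then guarantees that $\mathcal{G}$ cannot cross the threshold, so the bootstrap interval is all of $[0,\infty)$. Applying Lemma~\ref{L6.2} to the constant-forcing inequality yields $\limsup_{t\to\infty}\mathcal{G}(t)\leq U/(4\kappa_0\delta)$.

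Finally I would take $\kappa_0\to\infty$ with $m=m_0/\kappa_0^{1+\eta}$. Under this scaling $(\mathcal{F}_C1)$ holds for large $\kappa_0$ since $16m\kappa_0\delta=16m_0\delta\,\kappa_0^{-\eta}\to0<\gamma^2$, and in $U/(4\kappa_0\delta)$ the terms $4\Omega^\infty+8\kappa_1$ contribute $O(1/\kappa_0)$, while the inertial term $\frac{16m}{\gamma^2}[\Omega^\infty+2(\kappa_0+\kappa_1)]^2$, being of order $m\kappa_0^2$, contributes $O(m\kappa_0)=O(1/\kappa_0^\eta)$; hence $\limsup_{t\to\infty}\mathcal{G}(t)\lesssim\max\{1/\kappa_0,\,1/\kappa_0^\eta\}\to0$. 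The main obstacle is the difficulty flagged in the abstract: because the inertia makes trajectories oscillate in the initial layer, neither $\mathcal{G}$ nor $\|v_j\|$ is monotone, so the first-order monotonicity argument used for \eqref{B-8} breaks down. The real work is therefore to couple the second-order inequality for $\mathcal{G}$ with the velocity bound and to push both through the continuity argument \emph{without} monotonicity, which is precisely the situation the second-order Gronwall lemma is designed to resolve.
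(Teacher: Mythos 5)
Your proposal is correct and follows essentially the same route as the paper: the same second-order differential inequality for $\mathcal{G}$ with constant forcing $U$, the same continuity argument below the threshold $(1-\delta)^2/N$ to absorb the cubic term $4\kappa_0\sqrt{N}\mathcal{G}^{3/2}$, the same application of Lemma \ref{L6.2}, and the same scaling $m=m_0/\kappa_0^{1+\eta}$ in the final limit. The only (harmless) simplification you could make is that the velocity bound $\|v_j\|\leq\max\{\mathcal{R}_3(V^{in}),\tfrac{2}{\gamma}(\kappa_0+\kappa_1)\}$ follows unconditionally from the friction estimate and $\langle z_j,v_j\rangle+\langle v_j,z_j\rangle=0$, so it need not be propagated inside the bootstrap, which then involves $\mathcal{G}$ alone.
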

\begin{proof} We briefly sketch a key idea. Detailed argument can be found in Section \ref{sec:6}. In the course of proof, we will derive the following differential inequality:
\[
m\ddot{{\mathcal G}} +\gamma \dot{{\mathcal G}} +4\kappa_0 \delta {\mathcal G} \leq 4\Omega^\infty +8\kappa_1 +\frac{16m}{\gamma^2}\big[ \Omega^\infty +2(\kappa_0 +\kappa_1) \big]^2, \quad \forall ~t \in (0, T_*).
\]
Then, this yields
\[
{\mathcal G}(t) < \frac{\Omega^\infty +2\kappa_1}{\kappa_0\delta} +\frac{4m}{\gamma^2\kappa_0\delta}\big[ \Omega^\infty +2(\kappa_0 +\kappa_1) \big]^2, \quad \forall ~t>0.
\]
For a sufficiently large $\kappa_0 \geq \max \big\{ \Omega^\infty, 2\kappa_1 \big\}$ and a suitable ansatz  for $m$:
\[ m = \frac{m_0}{\kappa^{1 + \eta}}, \quad \eta > 0, \quad m_0 > 0, \] 
one has 
\[
\limsup_{t \to \infty} {\mathcal G}(t) <  \frac{\Omega^\infty +2\kappa_1}{\kappa_0\delta} +\frac{64}{\gamma^2\delta} \cdot \frac{m_0}{\kappa_0^\eta}.
\]
This implies the desired result.
\end{proof}

%
%
%
\section{Emergence of complete aggregation}\label{sec:5}
\setcounter{equation}{0}
In this section, we provide estimates on the complete aggregation to the second-order LHS model with inertia for a homogeneous ensemble: 
\[ \Omega_j = \Omega, \quad \|z_j \| = 1, \quad j = 1, \cdots, N. \]
Furthermore, by Lemma \ref{L2.2}, without loss of generality, we may assume $\Omega = 0$. In this situation, $z_j$ satisfies 
\begin{equation}\label{F-0}
\begin{cases}
\displaystyle m \ddot{z}_j = -\gamma \dot{z}_j + \kappa_0 \big(z_c -\left\langle{z_c, z_j }\right\rangle z_j \big) +\kappa_1 \big( \left\langle{ z_j, z_c }\right\rangle -\left\langle{ z_c, z_j }\right\rangle \big) z_j - m \| \dot{z}_j \|^2 z_j,  \\
\displaystyle z_j(0) = z_j^{in}, \quad \dot{z}_j(0) = \dot{z}_j^{in}, \quad \langle{z_j^{in}, \dot{z}_j^{in}}\rangle+\langle{\dot{z}_j^{in}, z_j^{in}}\rangle = 0, \quad j = 1, \cdots, N.
\end{cases}
\end{equation}
Since the proof of Theorem \ref{T4.1} is very lengthy, we briefly delineate a proof strategy in several steps. Recall that our main purpose in this section is to derive a sufficient frameworks (setting) leading to the complete aggregation:
\begin{equation} \label{F-0-0}
 \lim_{t \to \infty} \langle z_i, z_j \rangle = 1, \quad \mbox{i.e.,} \quad \lim_{t \to \infty} {\mathcal D}(Z(t)) = 0.
\end{equation}

\vspace{0.2cm}

\begin{itemize}
\item
Step A:~We introduce an energy functional ${\mathcal E}$ and via a time-decay estimate of it, we show that 
\[ \lim_{t \to \infty} \|{\dot z}_j(t) \| = 0, \quad \lim_{t \to \infty}   | \langle{ z_c, z_{j} \rangle} -\langle{ z_{j}, z_c \rangle} | = 0, \quad j = 1, \cdots, N.   \]
See Proposition \ref{P5.1} for details. 

\vspace{0.2cm}

\item
Step B:~We derive a second-order differential inequality for ${\mathcal G}$:
\begin{equation} \label{F-0-1}
m \ddot{{\mathcal G}} +\gamma \dot{{\mathcal G}} +4\kappa_0 {\mathcal G}  \leq 4\kappa_0 \sqrt{N} {\mathcal G}^\frac{3}{2} +  f(t), \quad f(t) \to 0 \quad \mbox{as $t \to \infty$}. 
\end{equation}

\item
Step C:~We use a second-order Gronwall's lemma (Lemma \ref{L5.4}) and the result of Step A to derive a zero convergence of ${\mathcal G}$:
\[ \lim_{t \to \infty} {\mathcal G}(t) = 0, \]
which implies \eqref{F-0-0}.
\end{itemize} 

In the following two subsections, we perform the above three steps one by one.

\vspace{0.5cm}

\subsection{Zero convergence of energy functional} \label{sec:5.1} For a solution $\{ z_j \}$ to \eqref{A-1},  we define an energy functional:
\begin{align*}
\mathcal E & := \frac{1}{N}\sum_{j=1}^N\left(m\|\dot{z}_j\|^2-m\frac{\kappa_0+2\kappa_1}{2\left(\kappa_0+\kappa_1\right)}\left|\langle{z_j,\dot{z}_j}\rangle\right|^2+\kappa_0\|z_c-z_j\|^2\right) \\
& = \frac{1}{N}\sum_{j=1}^Nm\left(\|\dot{z}_j\|^2-\frac{\kappa_0+2\kappa_1}{2\left(\kappa_0+\kappa_1\right)}\left|\langle{z_j,\dot{z}_j}\rangle\right|^2\right)+\kappa_0\left(1-\|z_c\|^2\right).
\end{align*}
In the following lemma, we check the following two properties of $\mathcal E$:
\begin{enumerate}[\hspace{.5cm}1.]
\item ~${\mathcal E} \geq 0$, \vspace{.1cm}
\item ~${\mathcal E} = 0 \quad \iff \quad \| z_c \|  = 1 \quad \mbox{and} \quad \| {\dot z}_j \|  = 0, \quad j = 1, \cdots, N$.
\end{enumerate}
\begin{lemma} \label{L5.1}
Suppose the coupling strengths $\kappa_0$ and $\kappa_1$ satisfy 
\[ \kappa_0>0 \quad \mbox{and} \quad \kappa_1\geq 0, \]
and let $\{ z_j \}$ be a solution to \eqref{F-0}. Then the following assertions hold.
\begin{enumerate}
\item
The energy functional ${\mathcal E}$ is nonnegative:
\[
{\mathcal E}(t) \geq 0, \quad \forall~t \geq 0.
\]
\item
The energy functional $\mathcal{E}$ is zero if and only if 
\[ \|z_c\|=1  \quad \mbox{and} \quad \|\dot{z}_j\| = 0, \quad j = 1, \cdots, N. \]
\end{enumerate}
\end{lemma}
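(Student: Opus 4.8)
The plan is to reduce everything to summand-wise elementary estimates, the only genuine input being the conservation law $\|z_j\| = 1$ from Lemma~\ref{L2.1} (together with the initial normalization $\|z_j^{in}\| = 1$). First I would confirm that the two displayed expressions for $\mathcal{E}$ coincide: expanding $\|z_c - z_j\|^2 = \|z_c\|^2 - \langle z_c, z_j \rangle - \langle z_j, z_c \rangle + \|z_j\|^2$ and averaging over $j$, the two cross terms collapse because $\frac{1}{N}\sum_j z_j = z_c$, while $\|z_j\|^2 = 1$, so $\frac{1}{N}\sum_{j} \|z_c - z_j\|^2 = 1 - \|z_c\|^2$. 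Second, I would record the crucial algebraic fact that the coefficient $c := \frac{\kappa_0 + 2\kappa_1}{2(\kappa_0 + \kappa_1)}$ satisfies $0 < c < 1$ whenever $\kappa_0 > 0$ and $\kappa_1 \geq 0$, since $c - 1 = \frac{-\kappa_0}{2(\kappa_0 + \kappa_1)} < 0$ and $c > 0$ trivially.

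For assertion (1), I would combine Cauchy--Schwarz with the normalization to get $|\langle z_j, \dot{z}_j \rangle|^2 \leq \|\dot{z}_j\|^2 \|z_j\|^2 = \|\dot{z}_j\|^2$. Hence each velocity contribution is controlled from below by $m\|\dot{z}_j\|^2 - m c \, |\langle z_j, \dot{z}_j \rangle|^2 \geq m(1 - c)\|\dot{z}_j\|^2 \geq 0$, using $c \leq 1$. Since the remaining term $\kappa_0 \|z_c - z_j\|^2$ is manifestly nonnegative, every summand in the first representation of $\mathcal{E}$ is a sum of two nonnegative groups, and so $\mathcal{E} \geq 0$.

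For assertion (2), the backward implication is immediate from the second representation: $\|\dot{z}_j\| = 0$ annihilates both velocity terms (note $\langle z_j, \dot{z}_j \rangle = 0$ as well), and $\|z_c\| = 1$ makes $\kappa_0(1 - \|z_c\|^2)$ vanish, so $\mathcal{E} = 0$. For the forward implication, I would use that $\mathcal{E}$ is an average of the two nonnegative groups isolated above; if $\mathcal{E} = 0$, each group must vanish for every $j$. From $\kappa_0 \|z_c - z_j\|^2 = 0$ with $\kappa_0 > 0$ one gets $z_j = z_c$ for all $j$, whence $\|z_c\| = \|z_j\| = 1$. From $m\big(\|\dot{z}_j\|^2 - c\,|\langle z_j, \dot{z}_j \rangle|^2\big) = 0$ and the Cauchy--Schwarz bound one gets $\|\dot{z}_j\|^2 \leq c \|\dot{z}_j\|^2$, and here the \emph{strict} inequality $c < 1$ forces $\|\dot{z}_j\| = 0$.

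The one delicate point, and the place I expect the argument to hinge, is precisely this final step: nonnegativity only needs $c \leq 1$, but pinning down the equality case requires the strict bound $c < 1$, which is exactly where the hypothesis $\kappa_0 > 0$ (as opposed to $\kappa_0 \geq 0$) is used. I also implicitly assume $m > 0$, so that vanishing of the $m$-weighted velocity group is informative; everything else is a routine expansion. (Note that the constraint $\mathrm{Re}\,\langle z_j, \dot{z}_j \rangle = 0$ is not needed here, since Cauchy--Schwarz already gives the required bound.)
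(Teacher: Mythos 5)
Your proof is correct and follows essentially the same route as the paper: both arguments rest on the strict bound $0 < \frac{\kappa_0+2\kappa_1}{2(\kappa_0+\kappa_1)} < 1$, the Cauchy--Schwarz estimate $|\langle z_j,\dot z_j\rangle|^2 \le \|z_j\|^2\|\dot z_j\|^2 = \|\dot z_j\|^2$, and the decomposition of $\mathcal{E}$ into nonnegative summands whose simultaneous vanishing characterizes $\mathcal{E}=0$. The only cosmetic difference is that you work with the first representation (via $\kappa_0\|z_c - z_j\|^2$, deducing $z_j = z_c$ and hence $\|z_c\|=1$) while the paper works with the second (via $\kappa_0(1-\|z_c\|^2)$), and you additionally verify that the two representations agree, which the paper takes for granted.
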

\begin{proof}
\noindent (i)~The first assertion follows from 
\[ 0 < \frac{\kappa_0+2\kappa_1}{2\left(\kappa_0+\kappa_1\right)} < 1, \quad  \left|\langle{z_j,\dot{z}_j}\rangle\right|^2 \leq \| z_j \|^2 \cdot \|\dot{z}_j\|^2, \quad \kappa_0 > 0, \quad \|z_c \| \leq 1.    \]
\noindent (ii)~Note that 
\begin{align*}
\begin{aligned}
{\mathcal E} = 0 \quad & \iff \quad \|\dot{z}_j\|^2-\frac{\kappa_0+2\kappa_1}{2\left(\kappa_0+\kappa_1\right)}\left|\langle{z_j,\dot{z}_j}\rangle\right|^2 = 0, \quad \forall~i = 1, \cdots, N, 
\quad 1-\|z_c\|^2 = 0 \\
& \iff \quad {\dot z}_j = 0, \quad \forall~j = 1, \cdots, N, \quad \mbox{and} \quad  \| z_c \| = 1.
\end{aligned}
\end{align*}
\end{proof}

Next, we study a nonincreasing property of ${\mathcal E}$ along system \eqref{A-1}.
\begin{lemma} \label{L5.2}
Let $\{z_j\}$ be the solution of \eqref{F-0}. Then, we have
\begin{equation} \label{F-1}
\frac{d\mathcal{E}}{dt} = -\frac{2\gamma}{N}\sum_{j=1}^N\left(\| \dot{z}_j  \|^2-\frac{\kappa_0+2\kappa_1}{2\left(\kappa_0+\kappa_1\right)}\left|\langle{z_j, \dot{z}_j}\rangle\right|^2\right)\leq0, \quad t > 0.
\end{equation}
\end{lemma}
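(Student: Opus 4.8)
The plan is to differentiate $\mathcal{E}$ in its first form term by term along the flow \eqref{F-0}, substituting the equation of motion for $m\ddot z_j$ and exploiting two structural consequences of Lemma \ref{L2.1}. On the unit sphere we have $\|z_j\|=1$, and differentiating $\|z_j\|^2\equiv1$ gives $\langle z_j,\dot z_j\rangle+\langle\dot z_j,z_j\rangle=0$, so $\langle z_j,\dot z_j\rangle$ is purely imaginary. I will also repeatedly use that $\langle z_j,z_c\rangle-\langle z_c,z_j\rangle$ is purely imaginary, being of the form $w-\bar w$, while $\langle z_j,z_c\rangle+\langle z_c,z_j\rangle$ is real.

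First I would compute $m\frac{d}{dt}\|\dot z_j\|^2=2\,\mathrm{Re}\langle\dot z_j,m\ddot z_j\rangle$ and insert \eqref{F-0}. The friction term yields $-2\gamma\|\dot z_j\|^2$, and the cubic term $-m\|\dot z_j\|^2 z_j$ contributes $-2m\|\dot z_j\|^2\,\mathrm{Re}\langle\dot z_j,z_j\rangle=0$, since $\langle\dot z_j,z_j\rangle$ is imaginary. Next I would compute $m\frac{d}{dt}|\langle z_j,\dot z_j\rangle|^2=2\,\mathrm{Re}\big(\overline{\langle z_j,\dot z_j\rangle}\,m\langle z_j,\ddot z_j\rangle\big)$, where pairing \eqref{F-0} with $z_j$ and using $\|z_j\|=1$ collapses $m\langle z_j,\ddot z_j\rangle$ to $-\gamma\langle z_j,\dot z_j\rangle+(\kappa_0+\kappa_1)\big(\langle z_j,z_c\rangle-\langle z_c,z_j\rangle\big)-m\|\dot z_j\|^2$. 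For the potential term it is cleaner to pass to the equivalent form $\kappa_0(1-\|z_c\|^2)$ and use the centroid identities $\frac1N\sum_j\langle z_c,z_j\rangle=\frac1N\sum_j\langle z_j,z_c\rangle=\|z_c\|^2$ when summing over $j$.

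The crucial point, and the main thing to verify, is a \emph{complete} cancellation of all coupling contributions after summation. The $\kappa_0$ pieces involving $\langle\dot z_j,z_c\rangle$ average to $\langle\dot z_c,z_c\rangle$ and cancel against the derivative of the potential term $\kappa_0(1-\|z_c\|^2)$; the remaining $\kappa_0,\kappa_1$ pieces combine into a multiple of $\sum_j\mathrm{Re}\big(\langle\dot z_j,z_j\rangle\,(\langle z_j,z_c\rangle+\langle z_c,z_j\rangle)\big)$, which vanishes because the second factor is real while $\langle\dot z_j,z_j\rangle$ is purely imaginary. I expect this bookkeeping to be the main obstacle: the exact coefficient $\frac{\kappa_0+2\kappa_1}{2(\kappa_0+\kappa_1)}$ is precisely what forces the $\kappa_0,\kappa_1$ cross-terms coming from the first two summands to add up to $-\kappa_0$, after which the real/imaginary dichotomy annihilates them. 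What survives is only the friction part, giving $\frac{d\mathcal E}{dt}=-\frac{2\gamma}{N}\sum_j\big(\|\dot z_j\|^2-\frac{\kappa_0+2\kappa_1}{2(\kappa_0+\kappa_1)}|\langle z_j,\dot z_j\rangle|^2\big)$.

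The inequality $\frac{d\mathcal E}{dt}\le0$ is then immediate from the two elementary estimates already recorded in the proof of Lemma \ref{L5.1}: by Cauchy--Schwarz with $\|z_j\|=1$ one has $|\langle z_j,\dot z_j\rangle|^2\le\|\dot z_j\|^2$, and since $0<\frac{\kappa_0+2\kappa_1}{2(\kappa_0+\kappa_1)}<1$, each summand is bounded below by $\big(1-\frac{\kappa_0+2\kappa_1}{2(\kappa_0+\kappa_1)}\big)\|\dot z_j\|^2\ge0$, whence the stated sign.
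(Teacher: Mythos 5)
Your proposal is correct and follows essentially the same route as the paper's proof: both differentiate $\mathcal{E}$ along the flow, use the projected equation $m\langle z_j,\ddot z_j\rangle+m\|\dot z_j\|^2=-\gamma\langle z_j,\dot z_j\rangle+(\kappa_0+\kappa_1)\big(\langle z_j,z_c\rangle-\langle z_c,z_j\rangle\big)$, exploit that $\langle z_j,\dot z_j\rangle$ is purely imaginary, and let the exact coefficient $\tfrac{\kappa_0+2\kappa_1}{2(\kappa_0+\kappa_1)}$ annihilate all coupling contributions, leaving only the friction terms. The only difference is bookkeeping---the paper isolates the cross-term $\big(\langle z_c,z_j\rangle-\langle z_j,z_c\rangle\big)\langle z_j,\dot z_j\rangle$ via \eqref{F-6} and substitutes it into the summed identity \eqref{F-4}, whereas you combine the three derivatives directly; your grouping does indeed produce the residual $-\kappa_0\,\mathrm{Re}\big(\langle\dot z_j,z_j\rangle(\langle z_j,z_c\rangle+\langle z_c,z_j\rangle)\big)$, which vanishes by the real/imaginary dichotomy, so your cancellation mechanism is sound.
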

\begin{proof}
First, we use the relation $\|z_j \|^2 = 1$ to see 
\begin{equation} \label{F-2}
\langle{z_j,\dot{z}_j}\rangle+\langle{\dot{z}_j,z_j}\rangle = 0.
\end{equation}
We use \eqref{F-2} to obtain
\begin{align}
\begin{aligned} \label{F-3}
&m\frac{d}{dt}\|\dot{z}_j\|^2 = \langle{\dot{z}_j,m\ddot{z}_j}\rangle+\langle{m\ddot{z}_j,\dot{z}_j}\rangle \\
& \hspace{0.5cm} = -2\gamma\|\dot{z}_j\|^2 +\big[ \kappa_0\left(\langle{\dot{z}_j, z_c}\rangle-\langle{z_c, z_j}\rangle\langle{\dot{z}_j, z_j}\rangle\right) + (c.c) \big] \\
& \hspace{0.7cm} +\big[ \kappa_1\left(\langle{z_j, z_c}\rangle-\langle{z_c, z_j}\rangle\right)\langle{\dot{z}_j, z_j}\rangle+(c.c) \big] -\big[ m\|\dot{z}_j\|^2\langle{\dot{z}_j, z_j}\rangle+(c.c) \big] \\
& \hspace{.5cm} = -2\gamma\|\dot{z}_j\|^2 +\kappa_0\left( \langle{z_c, \dot{z}_j}\rangle +\langle{\dot{z}_j, z_c}\rangle \right) -\kappa_0 \left( \langle{z_c, z_j}\rangle -\langle z_j, z_c \rangle \right) \langle{\dot{z}_j, z_j}\rangle  \\
& \hspace{.7cm} +2\kappa_1\left(\langle{z_j, z_c}\rangle-\langle{z_c, z_j}\rangle\right)\langle{\dot{z}_j, z_j}\rangle \\
& \hspace{0.5cm} = -2\gamma\|\dot{z}_j\|^2+\kappa_0\left(\langle{z_c,\dot{z}_j}\rangle+\langle{\dot{z}_j, z_c}\rangle\right) + \left(\kappa_0+2\kappa_1\right)\left(\langle{z_c, z_j}\rangle-\langle{z_j, z_c}\rangle\right)\langle{z_j, \dot{z}_j}\rangle,
\end{aligned}
\end{align}
where $(c.c)$ means the complex conjugate of the previous term. \newline

We take summation \eqref{F-3} over $j$ and divide by $N$ to obtain
\begin{align*}
\begin{aligned}
& \frac{d}{dt}\left(\frac{1}{N}\sum_{j=1}^Nm\|\dot{z}_j\|^2\right) \\
& \hspace{.2cm} = -\frac{2\gamma}{m}\left(\frac{1}{N}\sum_{j=1}^Nm\|\dot{z}_j\|^2\right) + \kappa_0\frac{d}{dt}\|z_c\|^2 + \frac{\kappa_0+2\kappa_1}{N}\sum_{j=1}^N\Big(\langle{z_c, z_j}\rangle-\langle{z_j, z_c}\rangle \Big)\langle{z_j, \dot{z}_j}\rangle,
\end{aligned}
\end{align*}
or equivalently
\begin{equation}\label{F-4}
\begin{split}
& \frac{d}{dt}\left[\frac{1}{N}\sum_{j=1}^Nm\|\dot{z}_j\|^2 + \kappa_0\left(1-\|z_c\|^2\right)\right] \\
& \hspace{.5cm} = -\frac{2\gamma}{m}\left(\frac{1}{N}\sum_{j=1}^Nm\|\dot{z}_j\|^2\right) + \frac{\kappa_0+2\kappa_1}{N}\sum_{j=1}^N\left(\langle{z_c, z_j}\rangle-\langle{z_j, z_c}\rangle\right)\langle{z_j, \dot{z}_j}\rangle.
\end{split}
\end{equation}
On the other hand, one has 
\begin{align}
\begin{aligned} \label{F-5}
m\frac{d}{dt}\left|\langle{z_j, \dot{z}_j}\rangle\right|^2 & = m\frac{d}{dt}\left(\langle{z_j, \dot{z}_j}\rangle\langle{\dot{z}_j, z_j}\rangle\right) \\
& = m\left[\left(\|\dot{z}_j\|^2+\langle{z_j, \ddot{z}_j}\rangle\right)\langle{\dot{z}_j, z_j}\rangle+\langle{z_j, \dot{z}_j}\rangle\left(\|\dot{z}_j\|^2+\langle{\ddot{z}_j, z_j}\rangle\right)\right].
\end{aligned}
\end{align} 
Then, we use \eqref{F-4}, \eqref{F-5} and the following relation:
\[
m\left(\langle{z_j, \ddot{z}_j}\rangle+\|\dot{z}_j\|^2\right) = -\gamma\langle{z_j, \dot{z}_j}\rangle+\left(\kappa_0+\kappa_1\right)\left(\langle{z_j, z_c}\rangle-\langle{z_c, z_j}\rangle\right)
\]
to obtain
\[
m\frac{d}{dt}\left|\langle{z_j, \dot{z}_j}\rangle\right|^2 = -2\gamma\left|\langle{z_j, \dot{z}_j}\rangle\right|^2 + 2\left(\kappa_0+\kappa_1\right)\left(\langle{z_c, z_j}\rangle-\langle{z_j, z_c}\rangle\right)\langle{z_j, \dot{z}_j}\rangle,
\]
or equivalently
\begin{equation} \label{F-6}
\left(\langle{z_c, z_j}\rangle-\langle{z_j, z_c}\rangle\right)\langle{z_j, \dot{z}_j}\rangle = \frac{m}{2(\kappa_0  + \kappa_1)} \frac{d}{dt} |\langle z_j, {\dot z}_j \rangle|^2 + \frac{\gamma}{\kappa_0 + \kappa_1} |\langle z_j, {\dot z}_j \rangle|^2. 
\end{equation}
Finally, we combine \eqref{F-4} and \eqref{F-6} to get the desired result.
\end{proof}
\begin{remark} Note that the estimate \eqref{F-1} can be rewritten as 
\begin{equation} \label{F-7}
\frac{d{\mathcal E}}{dt} = -\frac{2\gamma}{m} {\mathcal E} + \frac{2\kappa_0 \gamma}{m} (1 - \| z_c \|^2), \quad \forall ~t > 0.
\end{equation}
 \end{remark}
As a corollary of Lemma \ref{L5.1} and Lemma \ref{L5.2}, we obtain the following result.
\begin{corollary} \label{C5.1}
Suppose system parameters satisfy 
\[ m > 0, \quad \gamma > 0, \quad \kappa_0>0 \quad \mbox{and} \quad \kappa_1\geq 0, \]
and $\{z_j\}$ be the solution of \eqref{F-0}. Then, we have the following estimates:
\begin{align*}
\begin{aligned}
& (i)~\exists~{\mathcal E}_\infty := \lim_{t \to \infty} {\mathcal E}(t). \\
& (ii)~\max_{1 \leq j \leq N} \|  \dot{z}_j \| \leq \max \bigg\{ \|w_1^{in} \|, \cdots, \|w_N^{in} \|, \frac{2}{\gamma}(\kappa_0 +\kappa_1) \bigg\} =: M_1.  \\
& (iii)~\lim_{t  \to \infty} \|z_c \| = 1 \quad \Longrightarrow \quad \lim_{t \to \infty} {\mathcal E}(t) = 0.
\end{aligned}
\end{align*}
\end{corollary}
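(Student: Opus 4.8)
The plan is to treat the three assertions separately, since each follows from the two preceding lemmas by a soft argument of a different flavor.

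For (i), existence of the limit is immediate. By Lemma \ref{L5.1}(i) the functional $\mathcal{E}$ is bounded below by $0$, while Lemma \ref{L5.2} gives $\frac{d\mathcal{E}}{dt}\leq 0$, so $\mathcal{E}$ is nonincreasing. A bounded, monotone function on $[0,\infty)$ converges, which produces $\mathcal{E}_\infty := \lim_{t\to\infty}\mathcal{E}(t)$.

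For (ii), I would start from the single-index identity \eqref{F-3}, namely
\[
m\frac{d}{dt}\|\dot{z}_j\|^2 = -2\gamma\|\dot{z}_j\|^2 + \kappa_0\big(\langle z_c,\dot z_j\rangle + \langle \dot z_j, z_c\rangle\big) + (\kappa_0+2\kappa_1)\big(\langle z_c, z_j\rangle - \langle z_j, z_c\rangle\big)\langle z_j, \dot z_j\rangle.
\]
Using $\|z_j\|=1$, $\|z_c\|\leq 1$ and Cauchy--Schwarz, each cross term is controlled: $|\langle z_c,\dot z_j\rangle + \langle \dot z_j, z_c\rangle|\leq 2\|\dot z_j\|$, and since $\langle z_c, z_j\rangle - \langle z_j, z_c\rangle = 2\mathrm{i}\,\mathrm{Im}\langle z_c, z_j\rangle$ has modulus at most $2$, also $|(\langle z_c, z_j\rangle - \langle z_j, z_c\rangle)\langle z_j, \dot z_j\rangle|\leq 2\|\dot z_j\|$. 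This collapses the identity to the scalar differential inequality
\[
m\frac{d}{dt}\|\dot z_j\|^2 \leq 2\|\dot z_j\|\Big(2(\kappa_0+\kappa_1) - \gamma\|\dot z_j\|\Big).
\]
The right-hand side is nonpositive as soon as $\|\dot z_j\|\geq \frac{2(\kappa_0+\kappa_1)}{\gamma}$, so the region $\{\|\dot z_j\|\leq M_1\}$ is forward invariant: at the threshold $\|\dot z_j\|=M_1\geq \frac{2(\kappa_0+\kappa_1)}{\gamma}$ the derivative of $\|\dot z_j\|^2$ cannot be positive. Since $\|\dot z_j(0)\|=\|w_j^{in}\|\leq M_1$ by the definition of $M_1$, a barrier/invariant-region argument yields $\|\dot z_j(t)\|\leq M_1$ for all $t$ and every $j$, whence the stated bound on the maximum.

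For (iii), I would exploit the reformulation \eqref{F-7},
\[
\frac{d\mathcal{E}}{dt} = -\frac{2\gamma}{m}\mathcal{E} + \frac{2\kappa_0\gamma}{m}\big(1-\|z_c\|^2\big).
\]
Under the hypothesis $\|z_c\|\to 1$ the source term tends to $0$, and by (i) we have $\mathcal{E}\to\mathcal{E}_\infty$; hence the right-hand side, which equals $\frac{d\mathcal{E}}{dt}$, converges to $-\frac{2\gamma}{m}\mathcal{E}_\infty$. Because $\mathcal{E}$ is continuously differentiable (from the equation $\ddot z_j$ is continuous), a convergent function whose derivative also converges must have derivative limit zero, forcing $-\frac{2\gamma}{m}\mathcal{E}_\infty=0$, i.e. $\mathcal{E}_\infty=0$. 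Equivalently, one may integrate \eqref{F-7} against the integrating factor $e^{2\gamma t/m}$ and invoke the fact that the convolution of a decaying $L^\infty$ source with the exponential kernel decays.

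I expect the velocity bound in (ii) to be the main obstacle: the delicate points are extracting constants sharp enough that the barrier is exactly $\frac{2(\kappa_0+\kappa_1)}{\gamma}$, and running the invariant-region argument through points where $\|\dot z_j\|$ may vanish (working with the smooth quantity $\|\dot z_j\|^2$ rather than $\|\dot z_j\|$ sidesteps the nonsmoothness). Parts (i) and (iii) are then essentially soft consequences of monotonicity and of the linear ODE structure of $\mathcal{E}$ encoded in \eqref{F-7}.
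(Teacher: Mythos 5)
Your proposal is correct, and parts (i) and (ii) essentially track the paper: (i) is the identical monotone-plus-bounded-below argument, and in (ii) you derive exactly the paper's differential inequality $m\frac{d}{dt}\|\dot z_j\|^2 \leq -2\gamma\|\dot z_j\|^2 + 4(\kappa_0+\kappa_1)\|\dot z_j\|$ from \eqref{F-3}; you then close with a barrier/forward-invariance argument at the threshold $M_1$, whereas the paper divides through to get the linear inequality $\frac{d}{dt}\|\dot z_j\| \leq -\frac{\gamma}{m}\|\dot z_j\| + \frac{2}{m}(\kappa_0+\kappa_1)$ and integrates it explicitly, obtaining the decay estimate $\|\dot z_j(t)\| \leq \big(\|w_j^{in}\| - \frac{2}{\gamma}(\kappa_0+\kappa_1)\big)e^{-\gamma t/m} + \frac{2}{\gamma}(\kappa_0+\kappa_1)$. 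Your variant is marginally more careful (working with $\|\dot z_j\|^2$ avoids the nondifferentiability of $\|\dot z_j\|$ at its zeros, which the paper glosses over), while the paper's explicit formula additionally records the exponential rate of approach to the bound. Part (iii) is where you genuinely diverge: the paper writes the Duhamel representation \eqref{F-8} with integrating factor $e^{2\gamma t/m}$ and runs an $\varepsilon$-splitting of the convolution integral, whereas you observe from \eqref{F-7} that $\frac{d\mathcal{E}}{dt}$ converges to $-\frac{2\gamma}{m}\mathcal{E}_\infty$ (using part (i) and the hypothesis $\|z_c\|\to 1$), and then invoke the elementary fact that a convergent $C^1$ function whose derivative converges must have derivative limit zero, forcing $\mathcal{E}_\infty = 0$. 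This is shorter and softer than the paper's argument; the paper's convolution estimate, on the other hand, is quantitative and would yield a rate if $1-\|z_c\|^2$ decays at a known rate. Both are complete proofs.
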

\begin{proof}
\noindent (i)~Since 
\[
\mathcal{E}\left( t \right) \geq 0, \quad \dot{\mathcal{E}}\left(t\right) \leq 0, \quad \forall ~t > 0,
\]
${\mathcal E}$ converges as $t \to \infty$.  

\vspace{0.5cm}

\noindent (ii)~It follows from \eqref{F-3} that if $ \kappa_0>0$ and $\kappa_1\geq0$, we have
\begin{align*}
m\frac{d}{dt}\|\dot{z}_j\|^2 \leq -2\gamma\|\dot{z}_j\|^2 +4( \kappa_0 +\kappa_1) \| \dot{z}_j \|,
\end{align*}
or equivalently,
\begin{align*}
\frac{d}{dt}\|\dot{z}_j\| \leq -\frac{\gamma}{m} \| \dot{z}_j \| +\frac{2}{m}( \kappa_0 +\kappa_1).
\end{align*}
This implies
\begin{align*}
\| \dot{z}_j \| \leq \bigg( \|w_j^{in} \| -\frac{2}{\gamma}(\kappa_0 +\kappa_1) \bigg) e^{-\frac{\gamma}{m} t} +\frac{2}{\gamma}(\kappa_0 +\kappa_1), \quad \forall ~t > 0.
\end{align*}
Hence, for all $j$, we have
\begin{align*}
\| \dot{z}_j \| \leq \max \bigg\{ \|w_1^{in} \|, \cdots, \|w_N^{in} \|, \frac{2}{\gamma}(\kappa_0 +\kappa_1) \bigg\}.
\end{align*}

\vspace{0.2cm}

\noindent (iii)~It follows from \eqref{F-7} that
\begin{equation}\label{F-8}
{\mathcal E}(t) = {\mathcal E}(0) e^{-\frac{2\gamma t}{m}} + \frac{2\gamma \kappa_0}{m} \int_0^t e^{-\frac{2\gamma}{m}(t-s)} (1 - \|z_c(s) \|^2) ds, \quad \forall ~t \geq 0.
\end{equation}
Suppose that 
\[ \lim_{t \to \infty} \| z_c(t) \| = 1. \]
Then, for any positive small $\varepsilon$, there exists a positive time $T = T(\varepsilon) > 0$ such that
\begin{equation*}
1-\| z_c (t) \|^2 < \varepsilon, \quad \forall~t > T(\varepsilon).
\end{equation*}
Then, \eqref{F-8} becomes
\begin{align*}
{\mathcal E}(t) & = {\mathcal E}(0) e^{-\frac{2\gamma t}{m}} + \frac{2\gamma \kappa_0}{m} \int_0^{T(\varepsilon)} e^{-\frac{2\gamma}{m}(t-s)} (1 - \|z_c(s) \|^2) ds \\
& \hspace{.2cm} +\frac{2\gamma \kappa_0}{m} \int_{T(\varepsilon)}^t e^{-\frac{2\gamma}{m}(t-s)} (1 - \|z_c(s) \|^2) ds \\
& \leq {\mathcal E}(0) e^{-\frac{2\gamma t}{m}} + \frac{2\gamma \kappa_0}{m} \int_0^{T(\varepsilon)} e^{-\frac{2\gamma}{m}(t-s)} ds +\frac{2\gamma \kappa_0\varepsilon}{m} \int_{T(\varepsilon)}^t e^{-\frac{2\gamma}{m}(t-s)} ds \\
& = {\mathcal E}(0) e^{-\frac{2\gamma t}{m}} +\kappa_0e^{-\frac{2\gamma}{m}t} \Big( e^{\frac{2\gamma}{m}T(\varepsilon)} -1 \Big) +\kappa_0\varepsilon \Big( 1 -e^{-\frac{2\gamma}{m}(t-T(\varepsilon))} \Big), \quad \forall ~t > T(\varepsilon).
\end{align*}
This implies that for $t \gg 1$,
\[ {\mathcal E}(t) \leq 2 \kappa_0 \varepsilon. \]
Since $\varepsilon$ is arbitrary, we have desired zero convergence of ${\mathcal E}$. 
\end{proof}
\begin{remark}\label{R5.2}
Uniform boundedness of $\dot z_j$ provides us the uniform boundedness of $\ddot z_j$ since
\begin{align*}
m \|\ddot z_j\| &= \big\| -\gamma \dot{z}_j + \kappa_0 \big(z_c -\left\langle{z_c, z_j }\right\rangle z_j \big) +\kappa_1 \big( \left\langle{ z_j, z_c }\right\rangle -\left\langle{ z_c, z_j }\right\rangle \big) z_j - m \| \dot{z}_j \|^2 z_j \big\| \\
& \leq \gamma\|\dot z_j\|  +2(\kappa_0 +\kappa_1) +m\|\dot z_j\|^2.
\end{align*}
Similarly, one can also get the uniform boundedness of $\frac{d^3z_j}{dt^3}$.
\end{remark}
In next proposition, we study zero convergence of ${\dot z}_j$.
\begin{proposition}\label{P5.1}
Suppose system parameters and initial data satisfy
\[ m>0, \quad \gamma>0, \quad \kappa_0>0,\quad \kappa_1\geq0,  \quad \|z_j^{in} \| = 1 \quad \mbox{for all $j=1,\cdots,N$}, 
\]
and let $\{z_j\}$ be the solution of \eqref{F-0}. Then, we have
\[ \lim_{t\to\infty}\|\dot{z}_j(t)\| = 0, \quad j = 1,\cdots,N. \]
\end{proposition}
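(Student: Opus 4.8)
The plan is to read off the proposition from the energy dissipation identity of Lemma \ref{L5.2} via a Barbalat-type argument. Introduce the dissipation rate
\[ Q(t) := \frac{1}{N}\sum_{j=1}^N\left(\|\dot{z}_j\|^2 - \frac{\kappa_0+2\kappa_1}{2(\kappa_0+\kappa_1)}\left|\langle z_j, \dot{z}_j\rangle\right|^2\right), \]
so that Lemma \ref{L5.2} becomes $\dot{\mathcal{E}} = -2\gamma Q \le 0$. Because $\|z_j\| = 1$ forces $\left|\langle z_j, \dot{z}_j\rangle\right|^2 \le \|\dot{z}_j\|^2$ by Cauchy--Schwarz, and the coefficient satisfies $0 < \frac{\kappa_0+2\kappa_1}{2(\kappa_0+\kappa_1)} < 1$, I would record the coercive lower bound
\[ Q(t) \ge \frac{\kappa_0}{2(\kappa_0+\kappa_1)}\cdot\frac{1}{N}\sum_{j=1}^N\|\dot{z}_j\|^2 \ge 0, \]
which shows that $Q(t) \to 0$ already implies $\|\dot{z}_j(t)\| \to 0$ for every $j$. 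Thus the whole proof reduces to proving $\lim_{t\to\infty}Q(t) = 0$.

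For integrability, I would use the convergence of $\mathcal{E}$ from Corollary \ref{C5.1}(i): integrating $\dot{\mathcal{E}} = -2\gamma Q$ over $[0,\infty)$ gives
\[ \int_0^\infty Q(t)\,dt = \frac{1}{2\gamma}\big(\mathcal{E}(0) - \mathcal{E}_\infty\big) < \infty, \]
so $Q \in L^1([0,\infty))$. Integrability alone does not yield pointwise decay, so the key is to show that $Q$ is uniformly continuous. I would do this by bounding $\dot{Q}$: differentiation produces only terms that are polynomial in $z_j$, $\dot{z}_j$ and $\ddot{z}_j$ (namely $\mathrm{Re}\langle\dot{z}_j,\ddot{z}_j\rangle$ and the derivative of $\langle z_j,\dot{z}_j\rangle\langle\dot{z}_j,z_j\rangle$), and by Corollary \ref{C5.1}(ii) together with Remark \ref{R5.2} the quantities $\|\dot{z}_j\|$ and $\|\ddot{z}_j\|$ are bounded uniformly in $t$ (while $\|z_j\|=1$). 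Hence $\dot{Q}$ is uniformly bounded, so $Q$ is Lipschitz, in particular uniformly continuous.

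With $Q \ge 0$, $Q \in L^1([0,\infty))$, and $Q$ uniformly continuous in hand, Barbalat's lemma yields $\lim_{t\to\infty}Q(t) = 0$, and the coercive lower bound then gives $\lim_{t\to\infty}\|\dot{z}_j(t)\| = 0$ for each $j$. I expect the only genuine obstacle to be the verification that $Q$ is uniformly continuous, i.e. the uniform-in-time boundedness of $\ddot{z}_j$; this is exactly the content supplied by Remark \ref{R5.2}, after which the remaining steps are routine.
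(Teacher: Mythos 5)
Your proposal is correct and follows essentially the same route as the paper: integrate the dissipation identity of Lemma \ref{L5.2} to get integrability of the dissipation rate, use the uniform bounds on $\dot{z}_j$ and $\ddot{z}_j$ (Corollary \ref{C5.1}(ii), Remark \ref{R5.2}) for uniform continuity, apply Barbalat's lemma, and conclude via the coercivity bound $Q \geq \frac{\kappa_0}{2(\kappa_0+\kappa_1)}\cdot\frac{1}{N}\sum_{j}\|\dot{z}_j\|^2$, which is exactly the paper's estimate \eqref{F-11}. The only cosmetic difference is that you run the argument on the summed quantity $Q$ while the paper applies Barbalat to each $j$-th integrand separately; both are equivalent since each summand is nonnegative.
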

\begin{proof}
We integrate \eqref{F-1} to find
\[
\mathcal{E}\left( t \right)+\frac{2\gamma}{N}\sum_{j=1}^N\int_{0}^t\left(\|\dot{z}_j\left(s\right)\|^2-\frac{\kappa_0+2\kappa_1}{2\left(\kappa_0+\kappa_1\right)}\left|\langle{z_j\left(s\right),\dot{z}_j\left(s\right)}\rangle\right|^2\right)ds = \mathcal{E}^{in} < \infty.
\]
This yields
\[
\int_{0}^\infty\left(\|\dot{z}_j\left(t\right)\|^2-\frac{\kappa_0+2\kappa_1}{2\left(\kappa_0+\kappa_1\right)}\left|\langle{z_j\left(t\right),\dot{z}_j\left(t\right)}\rangle\right|^2\right)dt < \infty.
\]
By straightforward calculation, one can show that time-derivative of the integrand is uniformly bounded because $z_j, {\dot z}_j$ and ${\ddot z}_j$ are bounded, i.e., the integrand is uniformly continuous. Hence, we can apply Barbalat's lamma to get
\begin{equation} \label{F-10}
\lim_{t\to\infty}\left(\|\dot{z}_j\left(t\right)\|^2-\frac{\kappa_0+2\kappa_1}{2\left(\kappa_0+\kappa_1\right)}\left|\langle{z_j\left(t\right),\dot{z}_j\left(t\right)}\rangle\right|^2\right) = 0.
\end{equation}
On the other hand, note that 
\begin{align}\label{F-11}
0\leq\frac{\kappa_0}{2\left(\kappa_0+\kappa_1\right)}\|\dot{z}_j\left(t\right)\|^2\leq\|\dot{z}_j\left(t\right)\|^2-\frac{\kappa_0+2\kappa_1}{2\left(\kappa_0+\kappa_1\right)}\left|\langle{z_j\left(t\right),\dot{z}_j\left(t\right)}\rangle\right|^2.
\end{align}
Finally, we combine estimates \eqref{F-10} and \eqref{F-11} to derive the desired zero convergence of ${\dot z}_j$. 
\end{proof}

\vspace{0.2cm}

Next, we study all possible equilibria in terms of order parameter $\rho^{\infty}$.  
\begin{corollary} \label{C5.2}
Under the same assumptions of Theorem \ref{T4.1}, let $\{z_j\}$ be the solution of \eqref{A-1} and let $\rho^{\infty}$ be an asymptotic order parameter defined by \eqref{C-10-1}. Then, the following trichotomy holds:
\[ \rho^\infty = 0, \quad \rho^\infty = 1, \]
or there exists integer $n$ such that
\[ \rho^\infty = \frac{N-2n}{N}, \quad 1 \leq n < \left\lfloor\frac{N}{2}\right\rfloor \].
\end{corollary}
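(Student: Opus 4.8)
The plan is to combine the velocity decay from Proposition \ref{P5.1} with the energy relation \eqref{F-7} to pin down the limit of the order parameter, and then to identify each subsequential limit configuration as an equilibrium whose structure is dictated by Proposition \ref{P3.1}. First I would establish that $\rho^\infty$ in \eqref{C-10-1} actually exists. By Corollary \ref{C5.1} the energy ${\mathcal E}$ converges to some ${\mathcal E}_\infty \geq 0$, while Proposition \ref{P5.1} gives $\|\dot z_j(t)\| \to 0$ for every $j$. Since the kinetic part of ${\mathcal E}$ is nonnegative and bounded by $\frac{m}{N}\sum_j \|\dot z_j\|^2$, it vanishes as $t \to \infty$, and the definition of ${\mathcal E}$ then forces $\kappa_0\big(1 - \|z_c(t)\|^2\big) \to {\mathcal E}_\infty$. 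As $\kappa_0 > 0$, this yields $\|z_c(t)\|^2 \to 1 - {\mathcal E}_\infty/\kappa_0$, so that $\rho^\infty = \lim_{t\to\infty}\|z_c(t)\|$ exists.

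Next I would show that every subsequential limit of the configuration is an equilibrium. From Remark \ref{R5.2} the third derivative $\frac{d^3 z_j}{dt^3}$ is uniformly bounded, so $\ddot z_j$ is uniformly continuous; moreover $\int_0^t \ddot z_j\,ds = \dot z_j(t) - \dot z_j(0)$ converges as $t\to\infty$ because $\dot z_j \to 0$. Barbalat's lemma then upgrades this to $\ddot z_j(t) \to 0$. Fix a sequence $t_k \to \infty$; by compactness of the unit sphere one may pass to a subsequence along which $z_j(t_k) \to z_j^\infty$ and hence $z_c(t_k) \to z_c^\infty$. Passing to the limit in \eqref{F-0}, and using $\dot z_j, \ddot z_j \to 0$ together with $\|\dot z_j\|^2 z_j \to 0$, the surviving terms satisfy the algebraic relation $\eqref{C-5}_2$ with $w_j^\infty = 0$. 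The ``$\Longrightarrow$'' computation in Proposition \ref{P3.1} then gives $z_c^\infty = \langle z_c^\infty, z_j^\infty\rangle z_j^\infty$ for all $j$, so $\{(z_j^\infty,0)\}$ is an equilibrium of \eqref{C-2}.

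Finally I would read off $\rho^\infty = \|z_c^\infty\|$ from this relation. If $z_c^\infty = 0$ then $\rho^\infty = 0$. If $\rho^\infty > 0$, set $\lambda_j := \langle z_c^\infty, z_j^\infty\rangle$; from $z_c^\infty = \lambda_j z_j^\infty$ and $\|z_j^\infty\| = 1$ one gets $|\lambda_j| = \rho^\infty$, and substituting $z_j^\infty = \lambda_j^{-1} z_c^\infty$ back into the definition of $\lambda_j$ gives $\lambda_j^2 = (\rho^\infty)^2$, so $\lambda_j = \pm\rho^\infty$ is real and $z_j^\infty = \pm e$ with $e := z_c^\infty/\rho^\infty$ a unit vector. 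This is precisely a bipolar configuration; letting $n$ be the number of indices with $z_j^\infty = -e$ (relabelled so that $n \leq N/2$) yields $z_c^\infty = \frac{N-2n}{N}\,e$, hence $\rho^\infty = \frac{N-2n}{N}$. The value $n=0$ corresponds to complete aggregation $\rho^\infty = 1$, whereas $1 \le n < \lfloor N/2\rfloor$ produces the intermediate values. Since $\rho^\infty$ is a single number and every subsequential limit forces $\|z_c^\infty\|$ into this discrete set, the trichotomy follows.

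The hard part will be the passage from the asymptotic vanishing of the velocities to the assertion that the subsequential limits are genuine equilibria. Proposition \ref{P5.1} controls only $\dot z_j$, and one must additionally secure $\ddot z_j \to 0$ through the Barbalat argument built on the uniform bounds of Remark \ref{R5.2}; without this the inertial term $m\ddot z_j$ could persist in the limit and destroy the equilibrium identity $\eqref{C-5}_2$, after which the clean reduction to a bipolar configuration would no longer be available.
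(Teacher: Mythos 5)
Your proof is correct, and it rests on the same backbone as the paper's: Remark \ref{R5.2} plus Barbalat's lemma to upgrade $\dot z_j \to 0$ to $\ddot z_j \to 0$, then passage to the limit in \eqref{F-0} to obtain the algebraic relation \eqref{F-12}, and finally the bipolar counting argument. The packaging differs in two ways, both in your favor. First, you actually prove that $\rho^\infty$ exists, by noting that the kinetic part of $\mathcal E$ is squeezed between $0$ and $\frac{m}{N}\sum_j \|\dot z_j\|^2 \to 0$, so that $\kappa_0(1-\|z_c\|^2) \to \mathcal E_\infty$; the paper instead takes existence for granted (consistent with the caveat ``if it does exist'' in \eqref{C-10-1}) and splits into cases on the value of $\rho^\infty$. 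Second, you extract subsequential limits by compactness and invoke the equilibrium characterization of Proposition \ref{P3.1} to get $z_c^\infty = \langle z_c^\infty, z_j^\infty\rangle z_j^\infty$, then deduce $\lambda_j = \pm\rho^\infty$ by the substitution $z_j^\infty = \lambda_j^{-1}z_c^\infty$; the paper avoids subsequences altogether by working with the time-dependent quantities directly, pairing \eqref{F-12} first with $z_j$ to kill the $\kappa_1$-term and then with $z_c$ to get $\rho^2 - \langle z_c, z_j\rangle^2 \to 0$, hence $\langle z_c, z_j\rangle \to \delta_j \rho^\infty$ with $\delta_j \in \{\pm 1\}$, and sums over $j$. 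Your route buys a self-contained existence statement and reuses the equilibrium analysis of Section \ref{sec:3}; the paper's route is shorter once existence of $\rho^\infty$ is granted, since the direct limit computation replaces the compactness/diagonal step. Both share the same minor imprecision in the final bookkeeping of $n$ (for odd $N$ the value $n = \lfloor N/2\rfloor$ also yields $\rho^\infty = 1/N > 0$), so nothing is lost relative to the paper there.
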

\begin{proof}
For the case $\rho^\infty = 0$ or $1$, we are done. Hence, we consider only the case:
\[  \rho^{\infty} \in (0, 1). \] 
In Remark \ref{R5.2}, we noticed that $\frac{d^3 z_j}{dt^3}$ is uniformly bounded. Hence, we can apply Barbalat's Lemma \cite{Ba} and Proposition \ref{P5.1} to have
\[
\lim_{t\to\infty} \ddot z_j (t) = 0.
\]
Then, \eqref{F-0}$_1$ becomes
\begin{align}\label{F-12}
\lim_{t\to\infty}\left(\kappa_0(z_c-\langle{z_c, z_j}\rangle z_j)+\kappa_1(\langle{z_j, z_c}\rangle-\langle{z_c, z_j}\rangle)z_j\right) = 0, \quad j = 1,\cdots, N.
\end{align}
Since $z_j$ is bounded, we can take $\langle{z_j,\cdot\ }\rangle$ to get
\begin{align}\label{F-13}
\lim_{t\to\infty}\left(\langle{z_j, z_c}\rangle-\langle{z_c, z_j}\rangle\right) = 0, \quad j = 1,\cdots, N.
\end{align}
We combine \eqref{F-12}, \eqref{F-13} with the fact that $\|z_j\| = 1$ to obtain
\begin{equation*}
\lim_{t\to\infty}\Big( z_c-\langle{z_c,z_j}\rangle z_j \Big) = 0, \quad j = 1,\cdots, N.
\end{equation*}
Again, since $z_c$ is bounded, we can take $\langle{z_c,\cdot\ }\rangle$ to get
\begin{align*}
\lim_{t\to\infty}\left( \rho^2-\langle{z_c,z_j}\rangle^2\right) = 0, \quad j = 1,\cdots, N,
\end{align*}
or equivalently,
\begin{equation} \label{F-15}
\lim_{t\to\infty}\langle{z_c,z_j}\rangle = \delta_j\rho^\infty, \quad \delta_j \in \{1,-1\}, \quad j = 1,\cdots, N.
\end{equation}
Then, we sum \eqref{F-15} over $j$ and divide by $N$ to obtain
\begin{align*}
\big( \rho^\infty \big)^2 = \lim_{t\to\infty} \|z_c\|^2 = \frac{\rho^\infty}{N}\sum_{j=1}^N\delta_j.
\end{align*}
This implies
\[
\rho^\infty = \frac{1}{N}\sum_{j=1}^N\delta_j.
\]
Since $\rho^\infty > 0$, there must be an integer $n$ such that
\begin{align*}
n = |\{ j: \delta_j = -1, \quad j = 1, \cdots, N \}|, \quad 1 \leq n < \left\lfloor \frac{N}{2} \right\rfloor,
\end{align*}
which guarantees our desired result.
\end{proof}

\begin{remark}
In the results of the above lemma, we call first two cases by \begin{enumerate}[1.]
\item ~$\rho = 0 \iff$ incoherence state,
\item ~$\rho = 1 \iff$ complete aggregation.
\end{enumerate}
On the other hand, the remaining case is called bi-polar state, which is defined as follows:
\begin{align*}
\lim_{t\to\infty} \mbox{\normalfont dist} \big( S, \{z_j\} \big) = 0, \quad S:= \big\{ \{p_j\} \in (\bbh\bbs^d)^N: p_j = \pm a, \quad \exists ~a \in \bbh\bbs^d \big\}.
\end{align*}
\end{remark}

\vspace{.1cm}

\subsection{Proof of Theorem \ref{T4.1}} \label{sec:5.2}
In this subsection, we provide a proof of our first main result by analyzing asymptotic behaviors of angle parameter ${\mathcal G}$ and diameter functional ${\mathcal R}_i(Z)$. First, we recall two-point correlation functions $h_{ij}$ and $g_{ij}$:
\[ h_{ij} = \langle z_i, z_j \rangle, \quad g_{ij} = 1-h_{ij}, \quad \forall ~i, j = 1, \cdots, N. \] 
Next,  we derive an evolution equation for $|g_{ij}|^2$. 
\begin{lemma} \label{L5.3}
Let $\{ z_j \}$ be a solution of \eqref{F-0} with $\| z_j^{in} \| = 1$, $j = 1, \cdots, N$. Then, $| g_{ij} |^2$ satisfies
\begin{align}
\begin{aligned} \label{F-16}
& m \frac{d^2}{dt^2} | g_{ij} |^2 +\gamma \frac{d}{dt} | g_{ij} |^2 +2\left[ 2\kappa_0 +m ( \| \dot{z}_i \|^2 +\| \dot{z}_j \|^2 ) \right] | g_{ij} |^2 \\
& \hspace{0.2cm} = \frac{\kappa_0}{N} \sum_{k=1}^N ( g_{ik} +g_{ki} +g_{kj} +g_{jk} ) | g_{ij} |^2 +\kappa_1 ( \langle{ z_c, z_j \rangle} -\langle{ z_j, z_c \rangle} ) ( \langle{ z_i, z_j \rangle} -\langle{ z_j, z_i \rangle}) \\
& \hspace{.5cm} +\kappa_1 ( \langle{ z_i, z_c \rangle} -\langle{ z_c, z_i \rangle}) ( \langle{ z_i, z_j \rangle} -\langle{ z_j, z_i \rangle}) +2m\dot{g}_{ij}\dot{g}_{ji} \\
& \hspace{.5cm} +m ( \| \dot{z}_j \|^2 -2  \langle{ \dot{z}_i, \dot{z}_j \rangle} +\| \dot{z}_i \|^2 ) g_{ji} +m ( \| \dot{z}_i \|^2 -2  \langle{ \dot{z}_j, \dot{z}_i \rangle} +\| \dot{z}_j \|^2 ) g_{ij}.
\end{aligned}
\end{align}
\end{lemma}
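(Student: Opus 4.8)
The plan is to obtain a closed second-order equation for $g_{ij}$ first, and then apply the product rule to $|g_{ij}|^2 = g_{ij}g_{ji}$, using $\overline{g_{ij}} = g_{ji}$. Since $g_{ij} = 1 - \langle z_i, z_j\rangle$, differentiating twice gives
\[ m\ddot g_{ij} + \gamma \dot g_{ij} = -\langle m\ddot z_i + \gamma\dot z_i, z_j\rangle - \langle z_i, m\ddot z_j + \gamma\dot z_j\rangle - 2m\langle\dot z_i, \dot z_j\rangle. \]
The crucial point is that the friction terms $\gamma\dot z$ are precisely those accompanying $m\ddot z$ on the left of \eqref{F-0}, so inserting \eqref{F-0} eliminates all first-order derivatives and leaves only the coupling and inertial nonlinearities. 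Throughout I must keep track of conjugate-linearity in the first slot: for the $z_i$-equation the factor $\langle z_c, z_i\rangle$ becomes its conjugate $\langle z_i, z_c\rangle$, and the $\kappa_1$-bracket flips sign when pulled out of $\langle\,\cdot\,, z_j\rangle$.

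Next I would simplify the coupling contributions. The two $\kappa_0$ terms combine, after noting $(\langle z_c, z_j\rangle + \langle z_i, z_c\rangle)\langle z_i, z_j\rangle = (\langle z_i, z_c\rangle + \langle z_c, z_j\rangle)h_{ij}$, into the compact form $-\kappa_0(\langle z_i, z_c\rangle + \langle z_c, z_j\rangle)g_{ij}$; rewriting $\langle z_i, z_c\rangle = 1 - \frac{1}{N}\sum_k g_{ik}$ and $\langle z_c, z_j\rangle = 1 - \frac{1}{N}\sum_k g_{kj}$ turns this into $-2\kappa_0 g_{ij} + \frac{\kappa_0}{N}\sum_k(g_{ik}+g_{kj})g_{ij}$. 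The inertial contributions $-m\|\dot z_i\|^2\langle z_i, z_j\rangle$ and $-m\|\dot z_j\|^2\langle z_i, z_j\rangle$, together with the cross term $-2m\langle\dot z_i, \dot z_j\rangle$, are kept after writing $\langle z_i, z_j\rangle = 1 - g_{ij}$, while the $\kappa_1$ parts are retained as $\pm\kappa_1(\langle z_j, z_c\rangle - \langle z_c, z_j\rangle)h_{ij}$ and its $i\leftrightarrow j$ analogue.

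Finally I would form
\[ m\frac{d^2}{dt^2}|g_{ij}|^2 + \gamma\frac{d}{dt}|g_{ij}|^2 = (m\ddot g_{ij} + \gamma\dot g_{ij})g_{ji} + (m\ddot g_{ji} + \gamma\dot g_{ji})g_{ij} + 2m\dot g_{ij}\dot g_{ji}, \]
substitute the expression just derived and its $i\leftrightarrow j$ swap, and collect. The $-2\kappa_0 g_{ij}g_{ji}$ contributed by each of the two bracketed terms yields $-4\kappa_0|g_{ij}|^2$, and the diagonal inertial pieces yield $-2m(\|\dot z_i\|^2 + \|\dot z_j\|^2)|g_{ij}|^2$; moving both to the left reproduces the coefficient $2[2\kappa_0 + m(\|\dot z_i\|^2 + \|\dot z_j\|^2)]$ of $|g_{ij}|^2$. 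The off-diagonal inertial pieces regroup into the stated $g_{ji}$- and $g_{ij}$-weighted terms, and the $\kappa_1$ terms assemble into the two displayed products using $h_{ji}g_{ij} - h_{ij}g_{ji} = g_{ij} - g_{ji}$ and $\langle z_i, z_j\rangle - \langle z_j, z_i\rangle = g_{ji} - g_{ij}$. I expect the main obstacle to be purely bookkeeping: correctly handling the conjugation in the first argument and recombining the nonlinear products $h_{ij}g_{ji}$, $g_{ij}g_{ji}$ so that every term lands with the right index and sign. No new idea beyond \eqref{F-0} and the elementary identities $\overline{g_{ij}} = g_{ji}$ and $\langle z_i, z_c\rangle = 1 - \frac{1}{N}\sum_k g_{ik}$ is required.
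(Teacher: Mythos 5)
Your proposal is correct and follows essentially the same route as the paper: differentiate $g_{ij} = 1 - \langle z_i, z_j\rangle$ twice, substitute the equation of motion \eqref{F-0} into $\langle m\ddot z_i + \gamma \dot z_i, z_j\rangle$ and $\langle z_i, m\ddot z_j + \gamma\dot z_j\rangle$, use $\langle z_i, z_c\rangle = 1 - \frac1N\sum_k g_{ik}$, then multiply by $g_{ji}$ and add the $i\leftrightarrow j$/conjugate counterpart. Your only deviation is cosmetic --- you carry the friction term grouped with $m\ddot z_j$ from the start, whereas the paper computes $m\ddot g_{ij}$ and keeps $-\gamma\dot g_{ij}$ explicit (eq.\ \eqref{F-20}) --- and all of your collection identities ($h_{ji}g_{ij} - h_{ij}g_{ji} = g_{ij}-g_{ji}$, the sign flip of the $\kappa_1$-bracket under conjugation, the regrouping of the inertial terms) check out against \eqref{F-16}.
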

\begin{proof} Recall that $z_j$ satisfies 
\begin{equation} \label{F-17}
 m \ddot{z}_j = -\gamma \dot{z}_j + \kappa_0 \big(z_c -\left\langle{z_c, z_j }\right\rangle z_j \big) +\kappa_1 \big( \left\langle{ z_j, z_c }\right\rangle -\left\langle{ z_c, z_j }\right\rangle \big) z_j - m \| \dot{z}_j \|^2 z_j.
\end{equation}
We use $\overline{g_{ij}} = g_{ji}$ to find
\[
m\frac{d^2}{dt^2} |g_{ij}|^2 = m\frac{d^2}{dt^2} (g_{ij}g_{ji}) = m\ddot{g}_{ij}g_{ji} +2m\dot{g}_{ij}\dot{g}_{ji} +m\ddot{g}_{ji}g_{ij}.
\]
On the other hand, it follows from $g_{ij} = 1-\langle z_i, z_j \rangle$ that
\begin{equation} \label{F-18}
m\ddot{g}_{ij} = -\langle{ m\ddot{z}_i, z_j \rangle} -2m \langle{ \dot{z}_i, \dot{z}_j \rangle} -\langle z_i, m\ddot{z}_j \rangle.
\end{equation}
Then, we use \eqref{F-17} to get 
\begin{align}
\begin{aligned} \label{F-19}
\langle{ z_i, m\ddot{z}_j \rangle} &= -\gamma \langle{ z_i, \dot{z}_j \rangle} +\kappa_0 \langle{ z_i, z_c \rangle} -\kappa_0 \langle{ z_c, z_j \rangle} \langle{ z_i, z_j \rangle} \\
& \hspace{1cm} +\kappa_1 ( \langle{ z_j, z_c \rangle} -\langle{ z_c, z_j \rangle} ) \langle{ z_i, z_j \rangle} -m \| \dot{z}_j \|^2 \langle{ z_i, z_j \rangle}.
\end{aligned}
\end{align}
We combine \eqref{F-18}  and \eqref{F-19} to obtain
\begin{align*}
m \ddot{g}_{ij} & = \gamma ( \langle{ z_i, \dot{z}_j \rangle} +\langle{ \dot{z}_i, z_j \rangle} ) -\kappa_0 ( \langle{ z_i, z_c \rangle} +\langle{ z_c, z_j \rangle} ) +\kappa_0 ( \langle{ z_c, z_j \rangle} +\langle{ z_i, z_c \rangle} ) \langle{ z_i, z_j \rangle} \\
& \hspace{.3cm} -\kappa_1 ( \langle{ z_j, z_c \rangle} -\langle{ z_c, z_j \rangle} ) \langle{ z_i, z_j \rangle} -\kappa_1 ( \langle{ z_c, z_i \rangle} -\langle{ z_i, z_c \rangle} ) \langle{ z_i, z_j \rangle} \\
& \hspace{.3cm} +m \| \dot{z}_j \|^2 \langle{ z_i, z_j \rangle} +m \| \dot{z}_i \|^2 \langle{ z_i, z_j \rangle} -2m  \langle{ \dot{z}_i, \dot{z}_j \rangle} \\
& = -\gamma \dot{g}_{ij} -\kappa_0 ( \langle{ z_i, z_c \rangle} +\langle{ z_c, z_j \rangle} )g_{ij} \\
& \hspace{0.3cm} -\kappa_1 ( \langle{ z_j, z_c \rangle} -\langle{ z_c, z_j \rangle} ) \langle{ z_i, z_j \rangle} -\kappa_1 ( \langle{ z_c, z_i \rangle} -\langle{ z_i, z_c \rangle} ) \langle{ z_i, z_j \rangle} \\
& \hspace{.3cm} -m ( \| \dot{z}_i \|^2 +\| \dot{z}_j \|^2 ) g_{ij} +m ( \| \dot{z}_j \|^2 -2  \langle{ \dot{z}_i, \dot{z}_j \rangle} +\| \dot{z}_i \|^2 ).
\end{align*}
Note that
\begin{align*}
( \langle{ z_i, z_c \rangle} +\langle{ z_c, z_j \rangle} )g_{ij} = \frac1N \sum_{k=1}^N( 2 -g_{ik} -g_{kj} )g_{ij}.
\end{align*}
This yields
\begin{align}
\begin{aligned} \label{F-20}
m \ddot{g}_{ij} & = -\gamma \dot{g}_{ij} -2\kappa_0 g_{ij} +\frac{\kappa_0}{N} \sum_{k=1}^N ( g_{ik} +g_{kj} ) g_{ij} \\
& \hspace{0.3cm} -\kappa_1 ( \langle{ z_j, z_c \rangle} -\langle{ z_c, z_j \rangle} ) \langle{ z_i, z_j \rangle} -\kappa_1 ( \langle{ z_c, z_i \rangle} -\langle{ z_i, z_c \rangle} ) \langle{ z_i, z_j \rangle} \\
& \hspace{.3cm} -m ( \| \dot{z}_i \|^2 +\| \dot{z}_j \|^2 ) g_{ij} +m ( \| \dot{z}_j \|^2 -2  \langle{ \dot{z}_i, \dot{z}_j \rangle} +\| \dot{z}_i \|^2 ).
\end{aligned}
\end{align}
We multiply \eqref{F-20} by $g_{ji}$ to find
\begin{align}\label{F-21}
\begin{aligned}
& m \ddot{g}_{ij}g_{ji} +\gamma \dot{g}_{ij}g_{ji} +\left[ 2\kappa_0 +m ( \| \dot{z}_i \|^2 +\| \dot{z}_j \|^2 ) \right] | g_{ij} |^2 \\
& \hspace{0.5cm} = \frac{\kappa_0}{N} \sum_{k=1}^N ( g_{ik} +g_{kj} ) | g_{ij} |^2 -\kappa_1 ( \langle{ z_j, z_c \rangle} -\langle{ z_c, z_j \rangle} ) \langle{ z_i, z_j \rangle} g_{ji} \\
& \hspace{0.7cm} -\kappa_1 ( \langle{ z_c, z_i \rangle} -\langle{ z_i, z_c \rangle} ) \langle{ z_i, z_j \rangle} g_{ji} +m ( \| \dot{z}_j \|^2 -2  \langle{ \dot{z}_i, \dot{z}_j \rangle} +\| \dot{z}_i \|^2 ) g_{ji}.
\end{aligned}
\end{align}
We sum up \eqref{F-21} over all $i, j$ and its complex conjugate to obtain the desired estimate.
\end{proof}

Next, we quote some useful Lemmas on the second-order Gronwall type differential inequality from \cite{C-H-Y} and \cite{H-K} without proofs.

\begin{lemma}\label{L6.2}
\cite{C-H-Y} Let $y = y(t)$ be a nonnegative $\mathcal C^2-$function satisfying the following differential inequality:
\begin{align*}
a\ddot{y} +b\dot{y} +cy +d \leq 0, \quad t > 0,
\end{align*}
where $a, b$ and $c$ are positive constants. Then, we have the following assertions:
\begin{enumerate}
\item
Suppose that $b^2 -4ac > 0$. Then, one has
\begin{align*}
y(t) & \leq -\frac{d}{c} +\bigg( y(0) +\frac{d}{c} \bigg) e^{-\nu_1 t} \\
& \quad +\frac{a}{\sqrt{b^2 -4ac}} \bigg( \dot{y}(0) +\nu_1 y(0) +\frac{2d}{b-\sqrt{b^2 -4ac}} \bigg) \big( e^{-\nu_2 t} -e^{-\nu_1 t} \big)
\end{align*}
where $\nu_1$ and $\nu_2$ are given as follows:
\begin{align*}
\nu_1 := \frac{b+\sqrt{b^2-4ac}}{2a} \quad \mbox{and} \quad \nu_2 := \frac{b-\sqrt{b^2-4ac}}{2a}.
\end{align*}
Moreover, if the following conditions hold:
\begin{align}\label{E-7}
y(0) +\frac{d}{c} < 0 \quad \mbox{and} \quad y'(0) +\nu_1y(0) +\frac{2d}{b-\sqrt{b^2 -4ac}} < 0,
\end{align}
then, $y(t)$ is uniformly bounded:
\begin{align*}
y(t) < -\frac{d}{c}.
\end{align*}
\item
Suppose that
\[ b^2 -4ac < 0. \]
Then, one has
\begin{align*}
y(t) \leq -\frac{4ad}{b^2} +e^{-\frac{b}{2a} t} \bigg[ y(0) +\frac{4ad}{b^2} +\bigg( \frac{b}{2a}y(0) +\dot{y}(0) +\frac{2d}{b} \bigg) t \bigg].
\end{align*}
Moreover, if the following conditions hold:
\begin{align}\label{E-1-3}
y(0) < -\frac{4ad}{b^2} \quad \mbox{and} \quad \frac{b}{2a}y(0) +\dot{y}(0) +\frac{2d}{b} < 0,
\end{align}
then, $y(t)$ is uniformly bounded:
\begin{align*}
y(t) < -\frac{4ad}{b^2}.
\end{align*}
\end{enumerate}
\end{lemma}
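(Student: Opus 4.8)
The plan is to treat the differential inequality as a perturbation of its associated constant-coefficient equality and to run a two-step comparison argument. First I would introduce the comparison function $Y$ as the unique solution of the equality problem
\[ a\ddot{Y} + b\dot{Y} + cY + d = 0, \qquad Y(0) = y(0), \quad \dot{Y}(0) = \dot{y}(0), \]
and set $e := y - Y$. Subtracting the equation for $Y$ from the hypothesis produces the homogeneous inequality $a\ddot{e} + b\dot{e} + ce \le 0$ with vanishing initial data $e(0) = \dot{e}(0) = 0$. The problem then reduces to proving $e(t) \le 0$, after which I read off $Y$ explicitly. Note that the sign of $d$ plays no role here, which is convenient since in the applications $d < 0$.

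The tool for $e \le 0$ is to factor the operator. Since $\nu_1 + \nu_2 = b/a$ and $\nu_1\nu_2 = c/a$, one has $a\ddot{e} + b\dot{e} + ce = a(\frac{d}{dt} + \nu_1)(\frac{d}{dt} + \nu_2)e$. Setting $p := \dot{e} + \nu_2 e$, the inequality becomes $\dot{p} + \nu_1 p \le 0$ with $p(0) = 0$, and the integrating factor $e^{\nu_1 t}$ forces $p(t) \le 0$; feeding this into $\dot{e} + \nu_2 e = p \le 0$ with $e(0) = 0$ and using $e^{\nu_2 t}$ gives $e(t) \le 0$, i.e. $y \le Y$. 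Solving the equality by undetermined coefficients then yields $Y(t) = -\frac{d}{c} + (y(0) + \frac{d}{c})e^{-\nu_1 t} + C_2(e^{-\nu_2 t} - e^{-\nu_1 t})$, and a short computation using $\frac{a}{\sqrt{b^2-4ac}} = \frac{1}{\nu_1-\nu_2}$ together with $\frac{2d}{b - \sqrt{b^2-4ac}} = \nu_1 d/c$ identifies $C_2$ with the coefficient appearing in the statement. The refined conclusion follows because the conditions \eqref{E-7} read exactly $y(0) + \frac{d}{c} < 0$ and $C_2 < 0$, while $\nu_1 > \nu_2 > 0$ makes $e^{-\nu_2 t} - e^{-\nu_1 t} > 0$, so every term past $-\frac{d}{c}$ is negative.

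For part (2) the characteristic roots are complex, so the comparison solution with matching data oscillates and cannot reproduce the stated polynomial-times-exponential bound; this mismatch is the main obstacle. I would resolve it by exploiting the standing hypothesis $y \ge 0$: since $b^2 - 4ac < 0$ forces $c > \frac{b^2}{4a}$, nonnegativity gives $cy \ge \frac{b^2}{4a} y$, whence
\[ a\ddot{y} + b\dot{y} + \frac{b^2}{4a} y + d \le a\ddot{y} + b\dot{y} + cy + d \le 0. \]
This lowers the problem to the critically damped case, whose characteristic polynomial has the repeated root $-\frac{b}{2a}$. Repeating the comparison argument with the squared factorization $a(\frac{d}{dt} + \frac{b}{2a})^2 e \le 0$ (integrating twice against $e^{\frac{b}{2a}t}$) again gives $e \le 0$, and solving the critically damped equality with matching data produces precisely the claimed bound with $c_1 = y(0) + \frac{4ad}{b^2}$ and $c_2 = \dot{y}(0) + \frac{b}{2a}y(0) + \frac{2d}{b}$. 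The uniform estimate under \eqref{E-1-3} is then immediate, since those conditions assert $c_1 < 0$ and $c_2 < 0$, making $c_1 + c_2 t < 0$ for every $t \ge 0$. Here the hypothesis $y \ge 0$ is essential, whereas part (1) does not need it.
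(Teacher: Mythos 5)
Your proof is correct, but note that the paper itself offers no proof to compare against: Lemma \ref{L6.2} is quoted verbatim from \cite{C-H-Y} ``without proofs,'' so your argument stands as a self-contained reconstruction. Both halves check out. In part (1), the comparison-function device works exactly as you describe: the factorization $a\ddot e + b\dot e + ce = a\bigl(\tfrac{d}{dt}+\nu_1\bigr)\bigl(\tfrac{d}{dt}+\nu_2\bigr)e$ with $\nu_1+\nu_2 = b/a$, $\nu_1\nu_2 = c/a$ gives $y \le Y$ via two first-order integrating factors, and your identification of the coefficient is right, since
\begin{align*}
\frac{2d}{b-\sqrt{b^2-4ac}} = \frac{2d\,(b+\sqrt{b^2-4ac})}{4ac} = \frac{\nu_1 d}{c},
\end{align*}
so the constant $C_2 = \frac{\dot y(0)+\nu_1(y(0)+d/c)}{\nu_1-\nu_2}$ matches the one displayed in the lemma. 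In part (2), you correctly spotted the essential subtlety: with complex characteristic roots a naive comparison solution oscillates, and the stated polynomial-times-exponential bound cannot come from it. Your fix---using $y \ge 0$ and $c > b^2/(4a)$ to replace $cy$ by $\tfrac{b^2}{4a}y$ and thereby reduce to the critically damped operator $a\bigl(\tfrac{d}{dt}+\tfrac{b}{2a}\bigr)^2$---is exactly the right move, and the resulting constants $c_1 = y(0)+\tfrac{4ad}{b^2}$ and $c_2 = \tfrac{b}{2a}y(0)+\dot y(0)+\tfrac{2d}{b}$ reproduce the lemma's bound, with the uniform estimates under \eqref{E-7} and \eqref{E-1-3} following from the sign conditions as you say. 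Your closing observation that nonnegativity of $y$ is only needed in part (2) is also accurate and worth recording.
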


\begin{lemma}\label{L5.4}
\cite{H-K} Let $y = y(t)$ be a nonnegative $\mathcal C^2$-function satisfying the second-order differential inequality:
\begin{align*}
a\ddot{y} +b\dot{y} +cy \leq f, \quad t>0,
\end{align*}
where $a, b, c$ and $d$ are positive constants and $f = f(t)$ is a nonnegative $\mathcal C^1$-function which converges to zero as $t\to\infty$. Then, $y$ vanishes asymptotically:
\begin{align*}
\lim_{t\to\infty} y(t) = 0.
\end{align*}
\end{lemma}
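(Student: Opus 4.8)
The plan is to reduce the problem to the constant-coefficient, constant-forcing second-order Gronwall inequality of Lemma \ref{L6.2} by truncating the vanishing forcing $f$ on a time tail, and then letting the truncation level tend to zero. Since $a,b,c>0$, the homogeneous operator $a\tfrac{d^2}{dt^2}+b\tfrac{d}{dt}+c$ is exponentially stable, so all transient contributions in the representation furnished by Lemma \ref{L6.2} decay, and only a steady term proportional to the forcing level survives.

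I would fix $\varepsilon>0$. Because $f$ is continuous and $f(t)\to 0$, there is a time $T=T(\varepsilon)$ with $0\le f(t)\le\varepsilon$ for all $t\ge T$, so on $[T,\infty)$ the hypothesis yields
\[ a\ddot y + b\dot y + cy + (-\varepsilon) \le a\ddot y + b\dot y + cy - f \le 0. \]
Setting $\tilde y(s):=y(s+T)$ for $s\ge 0$, the function $\tilde y$ is nonnegative, of class $\mathcal C^2$, and satisfies $a\ddot{\tilde y}+b\dot{\tilde y}+c\tilde y+d\le 0$ with $d:=-\varepsilon$. I would then apply Lemma \ref{L6.2} to $\tilde y$, using only its unconditional estimate and not the sign conditions \eqref{E-7} or \eqref{E-1-3} (which need not hold here). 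If $b^2-4ac>0$, the bound in Lemma \ref{L6.2}(1) equals $-\tfrac{d}{c}$ plus fixed multiples of $e^{-\nu_1 s}$ and $e^{-\nu_2 s}$ with $\nu_1,\nu_2>0$; if $b^2-4ac<0$, the bound in Lemma \ref{L6.2}(2) equals $-\tfrac{4ad}{b^2}$ plus a term of the form $(\mathrm{const}+\mathrm{const}\cdot s)e^{-\frac{b}{2a}s}$. In either branch the transient part tends to $0$ as $s\to\infty$, so
\[ \limsup_{t\to\infty} y(t) = \limsup_{s\to\infty}\tilde y(s) \le -\frac{d}{c}=\frac{\varepsilon}{c} \quad\Big(\text{resp. } \le \frac{4a\varepsilon}{b^2}\Big). \]

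Since $\varepsilon>0$ was arbitrary and the right-hand bound vanishes as $\varepsilon\to 0$, this gives $\limsup_{t\to\infty}y(t)\le 0$; combined with $y\ge 0$ it forces $\lim_{t\to\infty}y(t)=0$. The delicate points are structural rather than computational. First, because $f$ is not constant one cannot compare $y$ with a single ODE solution on all of $[0,\infty)$ — indeed, in the underdamped regime the Green's function of the operator changes sign, so a naive pointwise comparison principle is unavailable — which is precisely why the argument must be run on the tail $[T,\infty)$ and pushed through Lemma \ref{L6.2}. Second, the borderline case $b^2=4ac$ is excluded from the dichotomy of Lemma \ref{L6.2} and must be handled separately, for instance by factoring the operator as $a(\tfrac{d}{dt}-\lambda)^2$ with $\lambda=-\tfrac{b}{2a}<0$ and applying the scalar Gronwall inequality twice, which again produces $\limsup_{t\to\infty}y(t)\le \varepsilon/c$ and hence the same conclusion.
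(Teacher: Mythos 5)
Your proposal is correct, but note that the paper itself offers no proof to compare against: Lemma \ref{L5.4} is quoted from \cite{H-K} without proof (just as Lemma \ref{L6.2} is quoted from \cite{C-H-Y}). Your argument is a legitimate self-contained derivation of Lemma \ref{L5.4} from the constant-forcing Lemma \ref{L6.2}: the $\varepsilon$-truncation of $f$ on a tail $[T(\varepsilon),\infty)$, the time shift $\tilde y(s)=y(s+T)$, and the application of only the \emph{unconditional} estimates of Lemma \ref{L6.2} with $d=-\varepsilon$ (not the sign conditions \eqref{E-7}, \eqref{E-1-3}, which indeed need not hold) are all sound; since $\nu_1,\nu_2>0$ in the overdamped branch and $(C_1+C_2 s)e^{-\frac{b}{2a}s}\to 0$ in the underdamped branch, the transients vanish and $\limsup_{t\to\infty}y(t)\le \varepsilon/c$ (resp.\ $4a\varepsilon/b^2$), whence the conclusion follows by letting $\varepsilon\to 0$ and using $y\ge 0$. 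Two details deserve credit: first, you correctly observed that Lemma \ref{L6.2} places no sign restriction on $d$ (the paper's own applications also take $d<0$), so the substitution $d=-\varepsilon$ is consistent with how that lemma is used; second, you noticed and repaired the genuine gap in a naive reduction, namely that the critical case $b^2=4ac$ is excluded from the dichotomy of Lemma \ref{L6.2} --- your factorization $a\bigl(\tfrac{d}{dt}+\tfrac{b}{2a}\bigr)^2$ followed by two first-order Gronwall estimates gives the bound $4a\varepsilon/b^2=\varepsilon/c$, closing that case. This route buys a proof that stays entirely inside the toolkit already quoted in the paper, at the modest cost of a three-case analysis; a direct proof (as in \cite{H-K}) would instead work with an explicit representation or energy functional for the damped operator and avoid invoking Lemma \ref{L6.2} altogether.
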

Now, we are ready to provide a proof of our first main result on the complete aggregation of \eqref{F-0}. \newline

\noindent {\bf Proof of Theorem \ref{T4.1}}: Suppose that the sufficient frameworks $(\mathcal F_A1)$-$(\mathcal F_A2)$ or $(\mathcal F_B1)$-$(\mathcal F_B2)$ hold. Moreover, assume that initial data and natural frequency satisfy
\[
\|z_j^{in}\| = 1, \quad \Omega_j = 0, \quad j=1,\cdots,N.
\]
Let $\{ z_j \}$ be the global solution of \eqref{A-1}. Then, we claim:
\[ \lim_{t \to \infty} {\mathcal G}(t) = 0. \]
It follows from \eqref{F-16} that ${\mathcal G}$ satisfies
\begin{align*}
\begin{aligned}
m \ddot{{\mathcal G}} +\gamma \dot{{\mathcal G}} +4\kappa_0 {\mathcal G} & \leq \frac{2\kappa_0}{N^3} \sum_{i,j,k=1}^N ( |g_{ik}| +|g_{jk}| ) | g_{ij} |^2 +\frac{2\kappa_1}{N} \sum_{i=1}^N | \langle{ z_i, z_c \rangle} -\langle{ z_c, z_i \rangle} |^2 \\
& +\frac{2m}{N^2} \sum_{i,j=1}^N |\dot{g}_{ij}|^2 +\frac{2m}{N^2} \sum_{i,j=1}^N ( \| \dot{z}_j \|^2 +2  |\langle{ \dot{z}_i, \dot{z}_j \rangle}| +\| \dot{z}_i \|^2 ) \\
&=: \mathcal I_{11} +\mathcal I_{12} +\mathcal I_{13} +\mathcal I_{14},
\end{aligned}
\end{align*}
where
\begin{align*}
& \mathcal I_{11} := \frac{2\kappa_0}{N^3} \sum_{i,j,k=1}^N ( |g_{ik}| +|g_{jk}| ) | g_{ij} |^2, \quad \cI_{12} := \frac{2\kappa_1}{N} \sum_{i=1}^N | \langle{ z_i, z_c \rangle} -\langle{ z_c, z_i \rangle} |^2, \\
& \cI_{13} := \frac{2m}{N^2} \sum_{i,j=1}^N |\dot{g}_{ij}|^2, \quad \cI_{14} := \frac{2m}{N^2} \sum_{i,j=1}^N ( \| \dot{z}_j \|^2 +2  |\langle{ \dot{z}_i, \dot{z}_j \rangle}| +\| \dot{z}_i \|^2 ).
\end{align*}
Below, we provide estimates for ${\mathcal I}_{1i}$ one by one. \newline

\noindent$\bullet$~Case A (Estimate of $\cI_{11}$): We use the Cauchy-Swartz inequality to obtain
\begin{align*}
\begin{aligned}
&  \frac{2\kappa_0}{N^3} \sum_{i,j,k=1}^N |g_{ik}| \cdot | g_{ij} |^2 \\
& \hspace{1cm} \leq 2\kappa_0 \sqrt{N} \Bigg( \frac{1}{N^\frac32} \sum_{k=1}^N \sqrt{ \sum_{i=1}^N |g_{ik}|^2 } \Bigg) \cdot \Bigg( \frac{1}{N^2} \sum_{j=1}^N \sqrt{ \sum_{i=1}^N |g_{ij}|^4 } \Bigg) \\
& \hspace{1cm} \leq 2\kappa_0 \sqrt{N} \Bigg( \frac{1}{N^\frac12}  \sqrt{ \frac{1}{N}\sum_{k=1}^N \sum_{i=1}^N |g_{ik}|^2 } \Bigg) \cdot \Bigg( \frac{1}{N^2} \sum_{j=1}^N \sum_{i=1}^N |g_{ij}|^2 \Bigg) \\
& \hspace{1cm} \leq 2 \kappa_0 \sqrt{N} {\mathcal G}^\frac32.
\end{aligned}
\end{align*}
This implies
\begin{equation} \label{F-22}
\cI_{11} \leq 4 \kappa_0 \sqrt{N} {\mathcal G}^\frac32.
\end{equation}

\noindent$\bullet$~Case B (Estimate of $\cI_{12}$): By \eqref{F-22}, we have
\begin{equation} \label{F-23}
\cI_{12} \leq 2\kappa_1 {\mathcal R}_2(Z).
\end{equation}

\noindent$\bullet$~Case C (Estimate of $\cI_{13}$): We use
\[
|\dot g_{ij}|^2 = |\langle \dot z_i, z_j \rangle +\langle z_i, \dot z_j \rangle|^2 \leq 2(\| \dot z_i \|^2 +\| \dot z_j \|^2) \leq 4{\mathcal R}_1(\dot Z)
\]
to find
\begin{equation} \label{F-24}
\mathcal I_{13} \leq 8m {\mathcal R}_1(\dot Z).
\end{equation}

\noindent$\bullet$ Case D (Estimate of $\cI_{14}$): Similarly, we use
\[
\| \dot{z}_j \|^2 +2  |\langle{ \dot{z}_i, \dot{z}_j \rangle}| +\| \dot{z}_i \|^2 \leq 2(\| \dot z_i \|^2 +\| \dot z_j \|^2) \leq 4{\mathcal R}_1(\dot Z),
\]
to find
\begin{equation} \label{F-25}
\mathcal I_{14} \leq 8m {\mathcal R}_1(\dot Z).
\end{equation}
We combine all the estimates \eqref{F-22}, \eqref{F-23}, \eqref{F-24}, \eqref{F-25}  of $\cI_{1k}$'s to obtain
\begin{equation}\label{F-26}
m \ddot{{\mathcal G}} +\gamma \dot{{\mathcal G}} +4\kappa_0 {\mathcal G}  \leq 4\kappa_0 \sqrt{N} {\mathcal G}^\frac{3}{2} +2\kappa_1 {\mathcal R}_2(Z) +16m {\mathcal R}_1(\dot Z).
\end{equation}
Now, we derive a uniform bound for ${\mathcal G}$ using \eqref{F-26}:
\[
\sup_{0 \leq t < \infty} {\mathcal G}(t) < \frac{(1 -\delta)^2}{N}.
\]
We define a temporal set $\cT$ for $\delta \in (0,1)$:
\begin{align*}
\mathcal{T} := \{ T \in (0, \infty):  {\mathcal G}(t) < (1-\delta)^2/N, \quad \forall~t \in (0, T) \}.
\end{align*}
By initial conditions, the set $\mathcal{T}$ is nonempty. Hence we can define
\begin{align*}
T_* := \sup \mathcal{T}.
\end{align*}
Now we claim:
\[ T_* = \infty. \]
Suppose not, i.e.,
\[ T_* < \infty. \]
Then, we have
\begin{align}\label{E-1-1}
\lim_{t \to T_*-} {\mathcal G}(t) = \frac{(1 -\delta)^2}{N}.
\end{align}
On the other hand, it follows from \eqref{F-26} that, for $t \in (0, T_*)$, we have
\begin{align*}
\begin{aligned}
m \ddot{{\mathcal G}} +\gamma \dot{{\mathcal G}} +4\kappa_0 \delta {\mathcal G}  \leq 2\kappa_1 {\mathcal R}_2(Z) +16m {\mathcal R}_1(\dot Z).
\end{aligned}
\end{align*}
We use Corollary \ref{C5.1} to obtain
\begin{align*}
2\kappa_1 {\mathcal R}_2(Z) +16m {\mathcal R}_1(\dot Z) \leq 8\kappa_1 +16m M_1^2.
\end{align*}
Hence, one has
\begin{equation*}
m \ddot{{\mathcal G}} +\gamma \dot{{\mathcal G}} +4\kappa_0 \delta {\mathcal G}  \leq 8\kappa_1 +16m M_1^2, \quad t \in (0, T_*).
\end{equation*}
Note that $(\mathcal{F}_A1)$ and $(\mathcal{F}_B1)$ are the first and second case of Lemma \ref{L6.2}, respectively. Moreover, $(\mathcal{F}_A2)$ and $(\mathcal{F}_B2)$ satisfy the condition \eqref{E-7} and \eqref{E-1-3} of Lemma \ref{L6.2}, respectively. So, we apply Lemma \ref{L6.2} to obtain
\begin{align*}
& (\mathcal F_A) \implies {\mathcal G}(t) < \frac{8\kappa_1 +16mM_1^2}{4\kappa_0\delta} < \frac{(1-\delta)^2}{N}, \quad t \in (0, T_*), \\
& (\mathcal F_B) \implies \mathcal G(t) < \frac{4m}{\gamma^2}(8\kappa_1 +16mM_1^2) < \frac{(1-\delta)^2}{N}, \quad t \in (0, T_*),
\end{align*}
which contradicts to \eqref{E-1-1}. Therefore, we have $T_* = \infty$ and so that
\begin{align*}
m \ddot{{\mathcal G}} +\gamma \dot{{\mathcal G}} +4\kappa_0 \delta {\mathcal G}  \leq 2\kappa_1 {\mathcal R}_2(Z) +16m {\mathcal R}_1(\dot Z), \quad \forall ~t > 0.
\end{align*}
We use Theorem \ref{T4.1} and \eqref{F-13} to see 
\[ \lim_{t\to\infty} \big[ 2\kappa_1 {\mathcal R}_2(Z) +16m {\mathcal R}_1(\dot Z) \big] = 0. \]
Then, we can apply Lemma \ref{L5.4} to conclude that complete synchronization occurs. \qed
%
%
%
\section{Emergence of practical aggregation} \label{sec:6}
\setcounter{equation}{0} 
In this section, we study the emergent dynamics of \eqref{A-1} with distinct set of natural frequency matrices $\Omega_j$'s. Unlike to the homogeneous ensemble in previous section, we cannot expect the emergence of complete aggregation in which all the states collapse to the same state. Instead, we study a weaker concept of aggregation estimate, namely practical aggregation introduced in Definition \ref{D1.1}. The key ingredient as in homogeneous ensemble is to derive a suitable second-order differential inequality for ${\mathcal G}$. In fact,  similar to \eqref{F-0-1}, we derive 
\[
m\ddot{{\mathcal G}} +\gamma \dot{{\mathcal G}} +4\kappa_0 \delta {\mathcal G} \leq 4\Omega^\infty +8\kappa_1 +\frac{16m}{\gamma^2}\big[ \Omega^\infty +2(\kappa_0 +\kappa_1) \big]^2, \quad \forall ~t > 0.
\]
Then, via the second-order Gronwall's lemma (Lemma \ref{L6.2}) and a suitable ansatz for $m = \frac{m_0}{\kappa_0^{1+\eta}}$, one can show
\[ {\mathcal G}(t) \lesssim \max \Big \{ \frac{1}{\kappa_0},~\frac{1}{\kappa_0^{\eta}} \Big \}, \quad \mbox{for $t \gg 1$}. \]
This clearly implies the desired practical aggregation estimate. 
\subsection{Derivation of Gronwall's inequality for ${\mathcal G}$} \label{sec:6.1}
Recall that the LHS model on the unit hermitian sphere $\|z_j \| = 1$:
\begin{equation}
\begin{cases} \label{E-0}
\displaystyle m \ddot{z}_j  = \frac{m}{\gamma} \Omega_j {\dot z}_j + \frac{m}{\gamma} \Omega_j v_j -\gamma \dot{z}_j + \Omega_j z_j  + \kappa_0(z_c-\langle{z_c, z_j}\rangle z_j) \\
\displaystyle \hspace{1cm} +\kappa_1(\langle{z_j, z_c}\rangle-\langle{z_c, z_j}\rangle)z_j - m \| v_j \|^2 z_j,   \\
\displaystyle (z_j, {\dot z}_j ) \Big|_{t  = 0+} = (z_j^{in}, w_j^{in}), \quad \langle z_j^{in}, w_j^{in} \rangle + \langle w_j^{in}, z_j^{in} \rangle - \frac{2}{\gamma} \Omega_j \langle z_j^{in}, z_j^{in} \rangle  = 0,
\end{cases}
\end{equation}
where $v_j = {\dot z}_j - \frac{1}{\gamma} \Omega_j z_j$.  \newline

Parallel to Lemma \ref{L5.3}, in the following lemma, we derive a dynamical system of $g_{ij}$.
\begin{lemma} \label{L6.1}
Let $\{ z_j \}$ be a global solution of \eqref{E-0}. Then, $|g_{ij} |^2$ satisfies
\begin{align}
\begin{aligned} \label{E-1}
& m \frac{d^2}{dt^2} | g_{ij} |^2 +\gamma \frac{d}{dt} | g_{ij} |^2 +2\left[ 2\kappa_0 +m ( \| v_i \|^2 +\| v_j \|^2 ) \right] | g_{ij} |^2 \\
&= -\frac{m}{\gamma} \big( \langle{ z_i, \Omega_j v_j \rangle} +\langle{ \Omega_i v_i, z_j \rangle} +\langle{ z_i, \Omega_j \dot{z}_j \rangle} +\langle{ \Omega_i \dot{z}_i, z_j \rangle} \big) g_{ji} -( \langle{ z_i, \Omega_j z_j \rangle} +\langle{ \Omega_i z_i, z_j \rangle} ) g_{ji} \\
&-\frac{m}{\gamma} \big( \langle{ z_j, \Omega_i v_i \rangle} +\langle{ \Omega_j v_j, z_i \rangle} +\langle{ z_j, \Omega_i \dot{z}_i \rangle} +\langle{ \Omega_j \dot{z}_j, z_i \rangle} \big) g_{ij} -( \langle{ z_j, \Omega_i z_i \rangle} +\langle{ \Omega_j z_j, z_i \rangle} ) g_{ij} \\
&+\frac{\kappa_0}{N} \sum_{k=1}^N ( g_{ik} +g_{ki} +g_{kj} +g_{jk} ) | g_{ij} |^2 +\kappa_1 ( \langle{ z_c, z_j \rangle} -\langle{ z_j, z_c \rangle} ) ( \langle{ z_i, z_j \rangle} -\langle{ z_j, z_i \rangle}) \\
&+\kappa_1 ( \langle{ z_i, z_c \rangle} -\langle{ z_c, z_i \rangle}) ( \langle{ z_i, z_j \rangle} -\langle{ z_j, z_i \rangle}) +2m\dot{g}_{ij}\dot{g}_{ji} \\
& -2m ( \langle{ \dot{z}_i, \dot{z}_j \rangle} g_{ji} +\langle{ \dot{z}_j, \dot{z}_i \rangle} g_{ij} ) +m ( \| v_i \|^2 +\| v_j \|^2 ) ( g_{ji} +g_{ij} ).
\end{aligned}
\end{align}
\end{lemma}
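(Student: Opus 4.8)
The plan is to follow the same computational scheme as in the proof of Lemma \ref{L5.3}, the only genuine difference being the three frequency terms $\frac{m}{\gamma}\Omega_j\dot{z}_j$, $\frac{m}{\gamma}\Omega_j v_j$ and $\Omega_j z_j$ now present in the governing equation \eqref{E-0}. First I would exploit the Hermitian symmetry $\overline{g_{ij}} = g_{ji}$ to write
\[ m\frac{d^2}{dt^2}|g_{ij}|^2 = m\ddot{g}_{ij}g_{ji} + 2m\dot{g}_{ij}\dot{g}_{ji} + m\ddot{g}_{ji}g_{ij}, \]
so that the whole identity reduces to computing $m\ddot{g}_{ij}$ and multiplying by $g_{ji}$. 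Since $g_{ij} = 1 - \langle z_i, z_j\rangle$, differentiating twice reproduces \eqref{F-18},
\[ m\ddot{g}_{ij} = -\langle m\ddot{z}_i, z_j\rangle - 2m\langle \dot{z}_i, \dot{z}_j\rangle - \langle z_i, m\ddot{z}_j\rangle, \]
which already accounts for the $2m\dot{g}_{ij}\dot{g}_{ji}$ and $-2m\langle \dot{z}_i,\dot{z}_j\rangle$-type contributions.

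Next I would substitute \eqref{E-0} for both $m\ddot{z}_i$ and $m\ddot{z}_j$ and expand all inner products. The coupling and inertial parts are handled verbatim as in the homogeneous case: using $\|z_j\|=1$ and the rewriting $\langle z_i, z_c\rangle + \langle z_c, z_j\rangle = \frac{1}{N}\sum_k (2 - g_{ik} - g_{kj})$, the $\kappa_0$-part collapses into the diagonal term $-2\kappa_0 g_{ij}$ together with the cubic sum $\frac{\kappa_0}{N}\sum_k(g_{ik}+g_{kj})g_{ij}$, and the $\kappa_1$-part reproduces the two $\kappa_1$-lines of \eqref{E-1}. The new ingredient is that the three frequency terms contribute the inner products $\frac{m}{\gamma}\langle z_i, \Omega_j \dot{z}_j\rangle$, $\frac{m}{\gamma}\langle z_i, \Omega_j v_j\rangle$ and $\langle z_i, \Omega_j z_j\rangle$ (with the matching $i\leftrightarrow j$ pair coming from $m\ddot{z}_i$); once multiplied by $g_{ji}$ these assemble, with the signs inherited from \eqref{F-18}, into exactly the first two frequency lines on the right-hand side of \eqref{E-1}.

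Finally I would multiply by $g_{ji}$, add the complex conjugate (the exchange $i\leftrightarrow j$ together with $g_{ij}\leftrightarrow g_{ji}$), and sum over all $i,j$; the $\langle\dot{z}_i,\dot{z}_j\rangle$ and $\|v_j\|^2$ contributions symmetrize into the last line of \eqref{E-1}, while collecting the diagonal $-2\kappa_0 g_{ij}$ and the $-m\|v_j\|^2$-type terms produces the coefficient $2[2\kappa_0 + m(\|v_i\|^2+\|v_j\|^2)]$ on the left. The hard part will be purely organizational rather than conceptual: the matrices $\Omega_i$ and $\Omega_j$ must be kept formally distinct, since they can no longer be cancelled as in the homogeneous setting, and one must consistently use $v_j = \dot{z}_j - \frac{1}{\gamma}\Omega_j z_j$ so that the inertial correction $-m\|v_j\|^2 z_j$ in \eqref{E-0} is what yields the $\|v_j\|^2$ (rather than $\|\dot z_j\|^2$) factors on the left-hand side. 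Tracking which of the eight frequency inner products pairs with $g_{ij}$ versus $g_{ji}$ after conjugation is where errors are most likely to appear, so I would carry each of them separately before collecting terms.
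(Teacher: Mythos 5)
Your proposal matches the paper's own proof essentially step for step: compute $\langle z_i, m\ddot{z}_j\rangle$ from \eqref{E-0}, form $m\ddot{g}_{ij} = -\langle m\ddot{z}_i, z_j\rangle - 2m\langle \dot{z}_i, \dot{z}_j\rangle - \langle z_i, m\ddot{z}_j\rangle$, collapse the $\kappa_0$-part via $\langle z_i, z_c\rangle + \langle z_c, z_j\rangle = \frac{1}{N}\sum_{k}(2 - g_{ik} - g_{kj})$, keep $\Omega_i,\Omega_j$ distinct, multiply by $g_{ji}$, and add the complex conjugate, exactly as in \eqref{E-3}. One small correction: since \eqref{E-1} is a pointwise identity for each fixed pair $(i,j)$, the final step is only to add the identity to its complex conjugate — the additional ``sum over all $i,j$'' you mention is not part of this lemma (that summation is what later produces the differential inequality for $\mathcal{G}$ in Section \ref{sec:6.2}).
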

\begin{proof}
We use \eqref{E-0} to obtain
\begin{align*}
\langle{ z_i, m \ddot{z}_j \rangle} & = \frac{m}{\gamma} \langle{ z_i, \Omega_j v_j \rangle} +\frac{m}{\gamma} \langle{ z_i, \Omega_j \dot{z}_j \rangle} -\gamma \langle{z_i, \dot{z}_j \rangle} +\langle{ z_i, \Omega_j z_j \rangle} \\
& \hspace{.2cm} +\kappa_0 \langle{ z_i, z_c \rangle} -\kappa_0 \langle{ z_c, z_j \rangle} \langle{ z_i, z_j \rangle} +\kappa_1 ( \langle{ z_j, z_c \rangle} -\langle{ z_c, z_j \rangle} ) \langle{ z_i, z_j \rangle} -m \| v_j \|^2 \langle{ z_i, z_j \rangle}.
\end{align*}
Then, we have
\begin{align}
\begin{aligned} \label{E-2}
m \ddot{g}_{ij} &= -\langle{ z_i, m\ddot{z}_j \rangle} -2m \langle{ \dot{z}_i, \dot{z}_j \rangle} -\langle{ m\ddot{z}_i, z_j \rangle} \\
& = -\frac{m}{\gamma} \big( \langle{ z_i, \Omega_j v_j \rangle} +\langle{ \Omega_i v_i, z_j \rangle} +\langle{ z_i, \Omega_j \dot{z}_j \rangle} +\langle{ \Omega_i \dot{z}_i, z_j \rangle} \big) \\
& \hspace{0.2cm} -\langle{ z_i, \Omega_j z_j \rangle} -\langle{ \Omega_i z_i, z_j \rangle} -\gamma \dot{g}_{ij} -\kappa_0 \langle{ z_i, z_c \rangle} g_{ij} -\kappa_0 \langle{ z_c, z_j \rangle} g_{ij} \\
& \hspace{0.2cm} -\kappa_1 ( \langle{ z_j, z_c \rangle} -\langle{ z_c, z_j \rangle} ) \langle{ z_i, z_j \rangle} -\kappa_1 ( \langle{ z_c, z_i \rangle} -\langle{ z_i, z_c \rangle} ) \langle{ z_i, z_j \rangle} \\
& \hspace{.2cm} -2m \langle{ \dot{z}_i, \dot{z}_j \rangle} +m ( \| v_i \|^2 +\| v_j \|^2 ) -m ( \| v_i \|^2 +\| v_j \|^2 ) g_{ij}.
\end{aligned}
\end{align}
We substitute
\begin{align*}
\langle{ z_i, z_c \rangle} = \frac{1}{N} \sum_{k=1}^N (1 -g_{ik}) = 1 -\frac{1}{N} \sum_{k=1}^N h_{ik},
\end{align*}
into \eqref{E-2} and multiply both sides of \eqref{E-2} by $h_{ji}$ to get
\begin{align}
\begin{aligned} \label{E-3}
& m \ddot{g}_{ij} g_{ji} +\gamma \dot{g}_{ij} g_{ji} +\big[ 2\kappa_0 +m ( \| v_i \|^2 +\| v_j \|^2 ) \big] | g_{ij} |^2 \\
& \hspace{0.2cm} = -\frac{m}{\gamma} \big( \langle{ z_i, \Omega_j v_j \rangle} +\langle{ \Omega_i v_i, z_j \rangle} +\langle{ z_i, \Omega_j \dot{z}_j \rangle} +\langle{ \Omega_i \dot{z}_i, z_j \rangle} \big) g_{ji} +\frac{\kappa_0}{N} \sum_{k=1}^N ( g_{ik} +g_{kj} ) | g_{ij} |^2 \\
& \hspace{.5cm} -( \langle{ z_i, \Omega_j z_j \rangle} +\langle{ \Omega_i z_i, z_j \rangle} ) g_{ji} -\kappa_1 ( \langle{ z_j, z_c \rangle} -\langle{ z_c, z_j \rangle} ) \langle{ z_i, z_j \rangle} g_{ji} \\
& \hspace{.5cm} -\kappa_1 ( \langle{ z_c, z_i \rangle} -\langle{ z_i, z_c \rangle} ) \langle{ z_i, z_j \rangle} g_{ji} -2m \langle{ \dot{z}_i, \dot{z}_j \rangle} g_{ji} +m ( \| v_i \|^2 +\| v_j \|^2 ) g_{ji}.
\end{aligned}
\end{align}
We sum \eqref{E-3} and its complex conjugate to obtain the desired result.
\end{proof}

\vspace{0.2cm}

As in Corollary \ref{C5.1}, we observe the uniform bound of $\| v_j \|$. Since
\begin{align*}
\langle{ v_j, m \dot{v}_j \rangle} & = \frac{m}{\gamma}\langle{ v_j, \Omega_j v_j \rangle} -\gamma \| v_j \|^2 +\kappa_0 \langle{ v_j, z_c \rangle} -\kappa_0 \langle{ z_c, z_j \rangle} \langle{ v_j, z_j \rangle} \\
& \hspace{0.2cm} +\kappa_1 ( \langle{ z_j, z_c \rangle} -\langle{ z_c, z_j \rangle} ) \langle{ v_j, z_j \rangle} -m \| v_j \|^2 \langle{ v_j, z_j \rangle},
\end{align*}
we have
\begin{align}\label{E-4}
m\frac{d}{dt} \| v_j \|^2 \leq -2\gamma\| v_j \|^2 +4( \kappa_0 +\kappa_1 ) \| v_j \|,
\end{align}
where we use $\langle{ z_j, v_j \rangle} +\langle{ v_j, z_j \rangle} = 0$. \eqref{E-4} implies
\begin{align*}
\| v_j(t) \| \leq \bigg( \| v_j^{in} \| -\frac{2(\kappa_0 +\kappa_1)}{\gamma} \bigg) e^{-\frac{\gamma}{m} t} +\frac{2}{\gamma}(\kappa_0 +\kappa_1).
\end{align*}
Hence, we can obtain the uniform bound of $\| v_j \|$:
\begin{align}\label{E-5}
\| v_j \| \leq \max \bigg\{ \| v_1^{in} \|, \cdots, \| v_N^{in} \| \frac{2}{\gamma}(\kappa_0 +\kappa_1) \bigg\}, \quad j = 1,\cdots,N.
\end{align}
Also, we have the uniform bound of $\| \dot{z}_j \|$:
\begin{align}\label{E-6}
\| \dot{z}_j \| \leq \| v_j \| +\frac{\| \Omega_j \|_F}{\gamma} \leq \max \bigg\{\| v_1^{in} \|, \cdots, \| v_N^{in} \|, \frac{2}{\gamma}(\kappa_0 +\kappa_1) \bigg\} +\frac{\Omega}{\gamma}^\infty, \quad j = 1,\cdots,N,
\end{align}
where $\Omega^\infty := \max_{j} \| \Omega_j \|_F$.

\vspace{0.5cm}

\subsection{Proof of Theorem \ref{T4.2}} \label{sec:6.2} In this subsection, we provide a proof of our second main result on the emergence of practical aggregation. First, we begin with the derivation of a uniform bound for ${\mathcal G}$.  \newline

\noindent $\bullet$~Step A (Derivation of uniform bound for ${\mathcal G})$:  Suppose the framework $(\mathcal{F}_C1)$-$(\mathcal{F}_C2)$ hold, and let $Z = (z_1, \cdots, z_N)$ be a solution of \eqref{E-0}. Then, one has
\[
\sup_{0 \leq t < \infty} {\mathcal G}(t) < \frac{(1 -\delta)^2}{N}.
\]
It follows from \eqref{E-1} that
\begin{align}\label{E-8}
\begin{aligned}
m\ddot{{\mathcal G}} +\gamma \dot{{\mathcal G}} +4\kappa_0 {\mathcal G} &\leq 4\kappa_0 \sqrt{N} {\mathcal G}^\frac{3}{2} +\frac{4m\Omega^\infty}{\gamma} \Big[ {\mathcal R}_3(V) +\sqrt{{\mathcal R}_1(\dot{Z})} \Big] \\
&+12m {\mathcal R}_1(\dot{z}) +4m {\mathcal R}_3(V)^2 +4\Omega^\infty +8\kappa_1.
\end{aligned}
\end{align}
We define a temporal set $\cT$ for $\delta \in (0,1)$:
\begin{align*}
\mathcal{T} := \{ T \in (0, \infty):  {\mathcal G}(t) < (1-\delta)^2/N, \quad \forall~t \in (0, T) \}.
\end{align*}
By initial conditions, the set $\mathcal{T}$ is nonempty. Hence we can define
\begin{align*}
T_* := \sup \mathcal{T}.
\end{align*}
Now we claim:
\[ T_* = \infty. \]
Suppose not, i.e.,
\[ T_* < \infty. \]
Then, we have
\begin{align}\label{E-9}
\lim_{t \to T_*-} {\mathcal G}(t) = \frac{(1 -\delta)^2}{N}.
\end{align}
On the other hand, it follows from \eqref{E-8} that, for $t \in (0, T_*)$, we have
\begin{align}\label{E-10}
\begin{aligned}
m\ddot{{\mathcal G}} +\gamma \dot{{\mathcal G}} +4\kappa_0 \delta {\mathcal G} &\leq \frac{4m\Omega^\infty}{\gamma} \Big[ {\mathcal R}_3(V) +\sqrt{{\mathcal R}_1(\dot{z})} \Big]  \\&+12m {\mathcal R}_1(\dot{z}) +4m {\mathcal R}_3(V)^2 +4\Omega^\infty +8\kappa_1.
\end{aligned}
\end{align}
We use \eqref{E-5}, \eqref{E-6} and $(\mathcal{F}_C2)_1$ to obtain
\begin{align*}
\begin{aligned}
& \frac{4m\Omega^\infty}{\gamma} {\mathcal R}_3(V) +4m {\mathcal R}_3(V)^2 \\
& \hspace{.2cm} \leq \frac{8m\Omega^\infty(\kappa_0 +\kappa_1)}{\gamma^2} +\frac{16m (\kappa_0 +\kappa_1)^2}{\gamma^2} = \frac{8m}{\gamma^2} (\kappa_0 +\kappa_1)\big[ \Omega^\infty +2(\kappa_0 +\kappa_1) \big],
\end{aligned}
\end{align*}
and 
\begin{align*}
\begin{aligned}
&\Omega^{\infty}{\gamma}\sqrt{{\mathcal R}_1(\dot{z})} +12m {\mathcal R}_1(\dot{z}) \\
& \hspace{0.2cm} \leq \frac{4m\Omega^\infty \big[ \Omega^\infty +2(\kappa_0 +\kappa_1) \big]}{\gamma^2} +\frac{12m \big[ \Omega^\infty +2(\kappa_0 +\kappa_1) \big]^2}{\gamma^2} \\
&\hspace{0.2cm} = \frac{4m}{\gamma^2} \big[ \Omega^\infty +2(\kappa_0 +\kappa_1) \big] \big[ 4\Omega^\infty +6(\kappa_0 +\kappa_1) \big].
\end{aligned}
\end{align*}
Hence, one has
\begin{equation} \label{E-11}
\frac{4m\Omega^\infty}{\gamma} \Big[ {\mathcal R}_3(V) +\sqrt{{\mathcal R}_1(\dot{z})} \Big] +12m {\mathcal R}_1(\dot{z}) +4m {\mathcal R}_3(V)^2 \leq \frac{16m}{\gamma^2} \big[ \Omega^\infty +2(\kappa_0 +\kappa_1) \big]^2. 
\end{equation}
Now, we combine \eqref{E-10} and \eqref{E-11} to get 
\[
m\ddot{{\mathcal G}} +\gamma \dot{{\mathcal G}} +4\kappa_0 \delta {\mathcal G} \leq 4\Omega^\infty +8\kappa_1 +\frac{16m}{\gamma^2}\big[ \Omega^\infty +2(\kappa_0 +\kappa_1) \big]^2, \quad t \in (0, T_*).
\]
Note that $(\mathcal{F}_C1)$ is the first case of Lemma \ref{L6.2} and $(\mathcal{F}_C2)_2$ and $(\mathcal{F}_C2)_3$ satisfy the condition \eqref{E-7} of Lemma \ref{L6.2}. So, we apply Lemma \ref{L6.2} to obtain
\begin{equation}\label{E-12}
{\mathcal G}(t) < \frac{1}{4\kappa_0\delta} \Bigg( 4\Omega^\infty +8\kappa_1 +\frac{16m}{\gamma^2}\big[ \Omega^\infty +2(\kappa_0 +\kappa_1) \big]^2 \Bigg) < \frac{(1-\delta)^2}{N}, \quad t \in (0, T_*),
\end{equation}
which contradicts to \eqref{E-9}. Therefore, we have $T_* = \infty$.

\vspace{.2cm}

Now, we are ready to provide a proof of Theorem \ref{T4.2}.

\vspace{.2cm}

\noindent $\bullet$~Step B (Derivation of practical aggregation estimate):~It follows from \eqref{E-12} that
\begin{equation} \label{E-12-1}
{\mathcal G}(t) < \frac{\Omega^\infty +2\kappa_1}{\kappa_0\delta} +\frac{4m \big[ \Omega^\infty +2(\kappa_0 +\kappa_1) \big]^2}{\gamma^2\kappa_0\delta}, \quad \forall ~t>0.
\end{equation}
For the case $\kappa_0 \geq \max \big\{ \Omega^\infty, 2\kappa_1 \big\}$, we have
\begin{align*}
{\mathcal G}(t) < \frac{\Omega^\infty +2\kappa_1}{\kappa_0\delta} +\frac{4m \kappa_0 \Big[ \frac{\Omega^\infty}{\kappa_0} +2 +\frac{2\kappa_1}{\kappa_0} \Big]^2}{\gamma^2\delta} \leq \frac{\Omega^\infty +2\kappa_1}{\kappa_0\delta} +\frac{64}{\gamma^2\delta} m \kappa_0, \quad \forall ~t>0.
\end{align*}
Hence as 
\begin{equation} \label{E-14}
 m\kappa_0 \rightarrow 0 \quad \mbox{and} \quad \kappa_0 \rightarrow \infty, 
 \end{equation}
one has a practical synchronization. To satisfy the constraints \eqref{E-14},  we assume that there exist $m_0 > 0$ and $\eta > 0$ such that
\begin{equation}\label{E-15}
m = \frac{m_0}{\kappa_0^{1+\eta}}.
\end{equation}
Then, it follows from \eqref{E-12-1} and \eqref{E-15} that, for $\kappa_0 \geq \max \big\{ \Omega^\infty, 2\kappa_1 \big\}$,
\begin{align*}
{\mathcal G}(t) < \frac{\Omega^\infty +2\kappa_1}{\kappa_0\delta} +\frac{64}{\gamma^2\delta} \cdot \frac{m_0}{\kappa_0^\eta}, \quad \forall ~t>0,
\end{align*}
which implies the desired result. $\qed$

\begin{remark}
There could be a question about the possibility for second inequality of $(\mathcal{F}_C2)_2$. We verify it holds for sufficiently large $\kappa_0$. We substitute \eqref{E-15} into the second inequality of $(\mathcal{F}_C2)_2$ to obtain
\begin{align}\label{E-16}
\frac{\Omega^\infty +2\kappa_1}{\kappa_0\delta} +\frac{4m_0 \big[ \Omega^\infty +2(\kappa_0 +\kappa_1) \big]^2}{\delta\gamma^2\kappa_0^{2+\eta}} < \frac{(1-\delta)^2}{N}.
\end{align}
One can observe that left hand side of \eqref{E-16} converges to zeros as $\kappa_0$ goes to infinity.
\end{remark}

\section{Conclusion} \label{sec:7}
\setcounter{equation}{0}
In this paper, we have studied emergent behaviors of the second-order LHS model which can be realized as a second-order extension of the first-order LHS model introduced in authors' earlier work \cite{H-P0, H-P1, H-P2}. For a homogeneous ensemble with the same natural frequency matrix $\Omega_j = \Omega$, we provided emergence of complete aggregation in the sense that all states aggregate to the same state asymptotically. For this, under a suitable set of system parameters and initial data with a finite energy, we show that the two-point correlation functions  between states tend to zero asymptotically, which denote the formation of complete aggregation. By linear stability analysis, we also showed that the incoherent state and bi-polar state are linearly unstable. In contrast, for a heterogeneous ensemble, 
we provided a sufficient framework leading to practical aggregation which means that state diameter can be made small by increasing the principle coupling strength. Of course, there are several issues to be discussed in a future work. For example, we only considered positive coupling strengths (attractive couplings) in this work. However, the coupling strengths can be negative, i.e., repulsive couplings or they can be time-dependent or state-dependent which make asymptotic dynamics more richer. We leave these interesting issues in a future work. 


\end{document}